\newtheorem{theorem}{Theorem}[]
\newtheorem{lemma}[theorem]{Lemma}
\newcommand{\mycase}[1]{\noindent{\bf CASE #1\ }}
\newcommand{\ignore}[1]{}%
\newcommand{\ProblemFormat}[1]{{\sc #1}}
\newcommand{\ProblemName}[1]{\ProblemFormat{#1}\xspace}
\newcommand{\probPlMDS}{\ProblemName{Planar Dominating Set}}
\newcommand{\probEDS}{\ProblemName{Edge Dominating Set}}
\newcommand{\probCVC}{\ProblemName{Connected Vertex Cover}}
\newcommand{\probTP}{\ProblemName{Triangle Packing}}
\newcommand{\probFVS}{\ProblemName{Feedback Vertex Set}}
\newcommand{\probPlFVS}{\ProblemName{Planar Feedback Vertex Set}}
\newcommand{\NP}{{\ensuremath{\rm{NP}}}}
\newcommand{\coNP}{{\ensuremath{\rm{coNP}}}}
\DeclareMathOperator{\poly}{poly}
\newcommand{\heading}[1]{\medskip\noindent{\bf #1.\ }}%
\newcounter{rulecnt}
\newcommand{\rrule}[2]{\medskip\noindent{\bf \refstepcounter{rulecnt}\label{#1}Rule~\ref{#1}}  #2\ }%
\newcommand{\Ff}{{\ensuremath{\mathcal{F}}}}
\newcommand{\Ss}{{\ensuremath{\mathcal{S}}}}
\newcommand{\scalefactor}{0.8}
\begin{document}
\title{A $13k$-kernel for Planar Feedback Vertex Set\\ via Region Decomposition\thanks{A preliminary version (with a slightly weaker result) was presented at IPEC 2014, Wrocław, Poland. Work partially supported by the ANR Grant EGOS (2012-2015) 12 JS02 002 01 (MB) and by the National Science Centre of Poland, grant number UMO-2013/09/B/ST6/03136 (\L{}K).} 
}

\date{}

\author{Marthe Bonamy\thanks{LIRMM, France} \and \L ukasz Kowalik\thanks{University of Warsaw, Poland}}

\maketitle

\begin{abstract}
We show a kernel of at most $13k$ vertices for the \probFVS problem restricted to planar graphs, i.e., a polynomial-time algorithm that transforms an input instance $(G,k)$ to an equivalent instance with at most $13k$ vertices.
To this end we introduce a few new reduction rules.
However, our main contribution is an application of the region decomposition technique in the analysis of the kernel size.
We show that our analysis is tight, up to a constant additive term.
\end{abstract}

\section{Introduction}
A {\em feedback vertex set} in a graph $G=(V,E)$ is a set of vertices $S\subseteq V$ such that $G-S$ is a forest.
In the \probFVS problem, given a graph $G$ and integer $k$ one has to decide whether $G$ has a feedback vertex set of size $k$.
This is one of the fundamental NP-complete problems, in particular it is among the 21 problems considered by Karp~\cite{Kar72}.
It has applications e.g.\ in operating systems (see~\cite{sgg}), VLSI design, synchronous systems and artificial intelligence (see~\cite{festa2009feedback}).

In this paper we study kernelization algorithms, i.e., polynomial-time algorithms which, for an input instance $(G,k)$ either conclude that $G$ has no feedback vertex set of size $k$ or return an equivalent instance $(G',k')$, called {\em kernel}. 
In this paper, by the size of the kernel we mean the number of vertices of $G'$.
Burrage et al.~\cite{Burrage06} showed that \probFVS has a kernel of size $O(k^{11})$, which was next improved to $O(k^3)$ by Bodlaender~\cite{Bodlaender07} and to $4k^2$ by Thomass\'e~\cite{fvs:quadratic-kernel}.
Actually, as argued by Dell and van Melkebeek~\cite{DellM14} the kernel of Thomass\'e can be easily tuned to have the number of {\em edges} bounded by $O(k^2)$. This cannot be improved to $O(k^{2-\epsilon})$ for any $\epsilon>0$, unless $\coNP \subseteq \NP/\poly$~\cite{DellM14}.

In this paper we study \probPlFVS problem, i.e., \probFVS restricted to planar graphs.
Planar versions of NP-complete graph problems often enjoy kernels with $O(k)$ vertices.
Since an $n$-vertex planar graph has $O(n)$ edges, this implies they have $O(k)$ edges, and hence are called {\em linear kernels}.
The first nontrivial result of that kind was presented in the seminal work of Alber, Fellows and Niedermeier~\cite{afn:planar-domset} who showed a kernel of size $335k$ for \probPlMDS.
One of the key concepts of their paper was the region decomposition technique in the analysis of the kernel size.
Roughly, in this method the reduced plane instance is decomposed into $O(k)$ {\em regions} (i.e. subsets of the plane) such that every region contains $O(1)$ vertices of the graph. 
It was next applied by Guo and Niedermeier to a few more graph problems~\cite{GuoN07}.
In fact it turns out that for a number of problems on planar graphs, including \probPlMDS and \probPlFVS, one can get a kernel of size $O(k)$ by general method of protrusion decomposition~\cite{fomin:bidim-kernels}.
However, in this general algorithm the constants hidden in the $O$ notation are very large, and researchers keep working on problem-specific linear kernels with the constants as small as possible~\cite{cfkx:duality-and-vertex,LuoWFGC13,wygc13,KowalikPS13,Kowalik12}.

In the case of \probPlFVS, Bodlaender and Penninkx~\cite{bp08} gave an algorithm which outputs a kernel of size at most $112k$.
This was next improved by Abu-Khzam and Khuzam~\cite{ak12} to $97k$.
Very recently, and independently of our work, Xiao~\cite{Xiao14} has presented an improved kernel of $29k$ vertices.
However, neither of these papers uses the region decomposition.
Indeed, it seems non-obvious how the regions of the region decomposition can be defined for \probPlFVS. 
Instead, the authors of the previous works cleverly apply simple bounds on the number of edges in general and bipartite planar graphs.
Moreover, for certain problems these methods turned out to give better results and simpler proofs than those based on region decomposition, see e.g., the work of Wang, Yang, Guo and Chen~\cite{wygc13} on \probCVC, \probEDS, and \probTP in planar graphs improving previous results of Guo and Niedermeier~\cite{GuoN07}.

Somewhat surprisingly, in this work we show that region decomposition can be successfully applied to \probPlFVS, and moreover it gives much tighter bounds than the previous methods. 
Furthermore, we add a few new reduction rules to improve the bound even further, to $13k$.
More precisely, we show the following result.

\begin{theorem}\label{thm:FVS}
There is an algorithm that, given an instance $(G,k)$ of \probPlFVS, either reports that $G$ has no feedback vertex set of size $k$ or produces an equivalent instance with at most $13k-24$ vertices. The algorithm runs in expected $O(n)$ time, where $n$ is the number of vertices of $G$.
\end{theorem}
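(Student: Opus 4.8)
The plan is to design a collection of reduction rules, apply them exhaustively to obtain a reduced equivalent instance $(G',k')$ with $k'\le k$, and then prove a structural dichotomy: either $G'$ has at most $13k'-24\le 13k-24$ vertices, or $G'$ (and hence $G$) admits no feedback vertex set of size $k'$. The algorithm outputs $(G',k')$ in the first case and reports a no-instance in the second. Each rule must be shown to be \emph{safe}, i.e.\ to map an instance to an equivalent one (preserving the existence of a feedback vertex set of the relevant size) while not increasing the parameter; the classical local rules---deleting vertices of degree at most $1$, suppressing degree-$2$ vertices, forcing a vertex carrying a self-loop into the solution and decrementing $k$, and reducing double edges---handle the easy cases, and the paper's new rules deal with more global configurations in order to push the constant down to $13$. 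Safeness is verified rule by rule, mostly by an exchange argument showing that an optimal solution may be assumed to interact with the reduced gadget in a canonical way.

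The heart of the argument is the size bound, which I would prove by contraposition: assume $G'$ is reduced and possesses a feedback vertex set $S$ with $s:=|S|\le k'$, and bound $|V(G')|$ from above. Write $F:=G'-S$, a forest, and split its vertices according to their degree inside $F$ into leaves $L$ (degree $\le 1$), path vertices $P$ (degree $2$), and branch vertices $B$ (degree $\ge 3$). Since $G'$ is reduced it has minimum degree at least $3$, so every leaf of $F$ sends at least two edges to $S$ and every isolated vertex at least three; by elementary forest combinatorics the number of branch vertices is bounded by the number of leaves (up to an additive term counting the tree components), so it suffices to bound $|L|$ and $|P|$. The vertices sending at least two edges to $S$ are exactly the ones to which the region decomposition applies.

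The key step---and the one I expect to be the main obstacle---is to adapt the region decomposition of Alber, Fellows and Niedermeier to $S$. The difficulty, flagged in the introduction, is that in domination every counted vertex is adjacent to the solution, whereas here interior forest vertices need not touch $S$ at all, so regions cannot be attached to $S$ in the naive way. I would define regions with respect to $S$ as closed discs of the plane bounded by two short paths between two vertices of $S$ whose interior forest vertices attach to both endpoints, and take a \emph{maximal} such decomposition. Two facts then have to be established. First, a maximal decomposition leaves only $O(k)$ regions: one forms an auxiliary planar multigraph whose vertices are the elements of $S$ and whose edges and faces are witnessed by the regions, and applies Euler's formula, so that the number of regions---and of vertices lying outside all regions---is linear in $s$. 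Second, each region contains only a bounded number of forest vertices: here the new and classical reduction rules are used to forbid, inside a single region, long reducible chains or repeated attachments to the same pair of $S$-vertices, giving an $O(1)$ bound per region. Multiplying the two bounds controls $|L|$ and $|P|$, and combining with $|B|\le|L|$ and $|S|=s$ yields $|V(G')|\le 13s-24\le 13k'-24$ after a careful accounting; attaining the precise constant $13$ (rather than merely $O(k)$) is exactly what forces the tight case analysis and the extra rules, and the claimed tightness up to an additive constant shows this accounting cannot be substantially improved.

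Finally, for the running time I would argue that the total number of rule applications is $O(n)$, that each application touches only a constant-size neighbourhood and can be realized in amortized constant time with suitable data structures, and that the few global tests---detecting double edges and the configurations of the new rules---can be performed in expected $O(n)$ time using hashing, which accounts for the word ``expected'' in the statement. The final size test is a single comparison, so the whole algorithm runs in expected $O(n)$ time.
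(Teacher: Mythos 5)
Your high-level strategy --- reduce exhaustively, then bound $|V(G')|$ under the assumption that a feedback vertex set $S$ of size at most $k'$ exists, using a planar auxiliary multigraph on $S$ and Euler's formula --- is the paper's strategy. But the two steps that carry all of the quantitative content are asserted rather than proved, and one of them is asserted in a form that the paper's own construction does not achieve. You claim that a maximal region decomposition leaves each region with $O(1)$ forest vertices. Your regions are anchored at pairs of vertices of $S$ to which interior forest vertices attach, yet, as you yourself note, most forest vertices need not be adjacent to $S$ at all; the forest inside a region can be a large tree whose branch vertices are forced by maximality and planarity to have at least three attachments to the region \emph{boundary}, so the count per region is governed by the length of that boundary, not by a universal constant. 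This is precisely why the paper uses a relaxed decomposition: the regions are the faces of the solution graph $H_S$ (built only from the degree-two forest leaves), and Lemmas~\ref{cl:df2} and~\ref{lem:fdf2} bound the number of forest vertices in a face $f$ by roughly $\ell\,(d(f)-2)$, i.e.\ linearly in $d(f)$; the $O(k)$ total then comes from $\sum_f d(f)=2|E_S|=O(k)$, not from an $O(1)$-per-region bound. Your ``$O(1)$ per region times $O(k)$ regions'' scheme would require a genuinely finer decomposition that you neither define precisely nor show exists.

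Second, nothing in your argument produces the constant $13$. In the paper the bound is $(2\ell+3)k-(4\ell+4)$, where $\ell$ is the maximum number of internal vertices of an induced path whose internal vertices have only two neighbours outside the path; the degree-two forest vertices (your class $P$, which your sketch never actually bounds) are organised into chains of length at most $\ell$ charged to edges and faces of $H_S$. Obtaining $\ell\le 5$, hence $13k-24$, requires the entire battery of rules in Section~\ref{sec:5path} --- a case analysis over all feedback vertex sets of the local subgraph $Q$ and several gadget replacements --- none of which is implied by the classical local rules you list. One also needs the connectivity of $H_S$ and a maximality argument (Lemmas~\ref{cl:solutionconnected} and the surrounding discussion) before Euler's formula yields the additive term $-24$. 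As written, your ``careful accounting'' is a placeholder exactly where the theorem's content lies.
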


We use the region decomposition approach in a slightly relaxed way: the regions are the faces of a $k$-vertex plane graph and the number of vertices of the reduced graph in each region is linear in the length of the corresponding face.
We show that this gives a {\em tight bound}, i.e., we present a family of graphs which can be returned by our algorithm and have $13k-O(1)$ vertices.

\heading{Organization of the paper}
In Section~\ref{sec:16k} we present a kernelization algorithm which is obtained from the algorithms in~\cite{bp08,ak12} by generalizing a few reduction rules, and adding some completely new rules. 
In Section~\ref{sec:analysis} we present an analysis of the size of the kernel obtained by our algorithm.
In the analysis we assume that in the reduced graph, for every induced path with $\ell$ internal vertices, the internal vertices have at least three neighbors outside the path.
Based on this, we get the bound of $(2\ell+3)k- (4\ell+4)$ for the number of vertices in the kernel.
In Section~\ref{sec:16k} we present reduction rules which guarantee that in the kernel $\ell\le 6$, resulting in the kernel size bound of $15k-28$.
To get the claimed bound of $13k-24$ vertices in Section~\ref{sec:5path} we present a complex set of reduction rules, which allow us to conclude that $\ell\le 5$.
In Section~\ref{sec:time} we discuss the running time of the algorithm.
Finally, in Section~\ref{sec:conclude} we discuss possibilities of further research.

\heading{Notation}
In this paper we deal with multigraphs, though for simplicity we refer to them as graphs.
(Even if the input graph is simple, our algorithm may introduce multiple edges.)
By the {\em degree} of a vertex $x$ in a multigraph $G$, denoted by $\deg_G(x)$, we mean the number of edges incident to $x$ in $G$.
By $N_G(x)$, or shortly $N(x)$, we denote the set of neighbors of $x$, while $N[x]=N(x)\cup\{x\}$ is the closed neighborhood of $x$.
Note that in a multigraph $|N_G(x)|\le \deg_G(x)$, but the equality does not need to hold.
The neighborhood of a set of vertices $S$ is defined as $N(S)=(\bigcup_{v\in S}N(v))\setminus S$, while the closed neighborhood of $S$ is $N[S]=(\bigcup_{v\in S}N(v)) \cup S$.
For a face $f$ in a plane graph, a {\em facial walk} of $f$ is the shortest closed walk induced by all edges incident with $f$.
The {\em length} of $f$, denoted by $d(f)$ is the length of its facial walk.

\section{Our kernelization algorithm}
\label{sec:16k}

In this section we describe our algorithm which outputs a kernel for \probPlFVS. 
The algorithm exhaustively applies reduction rules. 
Each reduction rule is a subroutine which finds in polynomial time a certain structure in the graph and replaces it by another structure, so that the resulting instance is equivalent to the original one. 
More precisely, we say that a reduction rule for parameterized graph problem $P$ is {\em correct} when for every instance $(G,k)$ of $P$ it returns an instance $(G',k')$ such that:
\begin{enumerate}[a)]
\item $(G',k')$ is an instance of $P$,
\item $(G,k)$ is a yes-instance of $P$ iff $(G',k')$ is a yes-instance of $P$, and
\item $k'\le k$.
\end{enumerate}

Below we state the rules we use. 
The rules are applied in the given order, i.e., in each rule we assume that the earlier rules do not apply.
We begin with some rules used in the previous works~\cite{ak12,bp08}.

\captionsetup[subfloat]{labelformat=empty}
\captionsetup{font=small,format=plain,labelfont=bf}
\begin{figure}[t]
\centering
\subfloat[][Rule~\ref{r:deg1}]{
\centering
\begin{tikzpicture}[scale=\scalefactor]
\tikzstyle{whitenode}=[draw,circle,fill=white,minimum size=5pt,inner sep=0pt]
\tikzstyle{blacknode}=[draw,circle,fill=black,minimum size=5pt,inner sep=0pt]
\tikzstyle{texte}=[circle,minimum size=5pt,inner sep=0pt]
%\tikzstyle{innerWhite} = [semithick, white,line width=0.5pt]
\draw (0,0) node[whitenode] (u) {}
-- ++(0:1cm) node[blacknode] (y1) {}; 
\draw (0.5,-0.5) node[texte] (k) {$k$};
%\draw[very thick] (y3) edge node {} (w2); 
%\draw[innerWhite] (y3) -- (w2); 

\draw (1.5,-0.2) node[texte] (k) {\Large $\leadsto$};

\draw (2,0) node[whitenode] (u) {};
%-- ++(0:1cm) node[blacknode] (y1) {}; 
\draw (2,-0.5) node[texte] (k) {$k$};
\end{tikzpicture}
\label{fig:r:deg1}
}
\qquad
\subfloat[][Rule~\ref{r:deg2}]{
\centering
\begin{tikzpicture}[scale=\scalefactor]
\tikzstyle{whitenode}=[draw,circle,fill=white,minimum size=5pt,inner sep=0pt]
\tikzstyle{blacknode}=[draw,circle,fill=black,minimum size=5pt,inner sep=0pt]
\tikzstyle{texte}=[circle,minimum size=5pt,inner sep=0pt]
\tikzstyle{innerWhite} = [semithick, white,line width=0.5pt]
\draw (0,0) node[whitenode] [label=90:$w$] (u) {}
-- ++(0:1cm) node[blacknode] [label=90:$u$] (y1) {}
-- ++(0:1cm) node[whitenode] [label=90:$v$] (y2) {}; 
\draw (1,-0.5) node[texte] (k) {$k$};
%\draw[very thick] (y3) edge node {} (w2); 
%\draw[innerWhite] (y3) -- (w2); 

\draw (2.5,-0.2) node[texte] (k) {\Large $\leadsto$};

\draw (3,0) node[whitenode] [label=90:$w$] (u) {};
\draw (4,0) node[whitenode] [label=90:$v$] (v) {};
\draw (u) edge node {} (v); 
%\draw[innerWhite] (u) -- (v); 
%\draw ++(0:1cm) node[whitenode] (y1) {}; 
\draw (3.5,-0.5) node[texte] (k) {$k$};
\end{tikzpicture}
\label{fig:r:deg2}
}
\qquad
\subfloat[][Rule~\ref{r:deg3-double}]{
\centering
\begin{tikzpicture}[scale=\scalefactor]
\tikzstyle{whitenode}=[draw,circle,fill=white,minimum size=5pt,inner sep=0pt]
\tikzstyle{blacknode}=[draw,circle,fill=black,minimum size=5pt,inner sep=0pt]
\tikzstyle{texte}=[circle,minimum size=5pt,inner sep=0pt]
\tikzstyle{innerWhite} = [semithick, white,line width=0.5pt]
\draw (0,0) node[whitenode] [label=90:$w$] (u) {}
-- ++(0:1cm) node[blacknode] [label=90:$u$] (y1) {};
\draw (2,0) node[whitenode] [label=90:$v$] (y2) {}; 
\draw (y1) edge [bend left] node {} (y2);
\draw (y1) edge [bend right] node {} (y2);
\draw (1,-0.5) node[texte] (k) {$k$};
%\draw[very thick] (y3) edge node {} (w2); 
%\draw[innerWhite] (y3) -- (w2); 

\draw (2.5,-0.2) node[texte] (k) {\Large $\leadsto$};

\draw (3.5,0) node[whitenode] [label=90:$w$] (u) {};
%\draw ++(0:1cm) node[whitenode] (y1) {}; 
\draw (3.5,-0.5) node[texte] (k) {$k-1$};
\end{tikzpicture}
\label{fig:r:deg3-double}
}
\\
\subfloat[][Rule~\ref{r:triple}]{
\centering
\begin{tikzpicture}[scale=\scalefactor]
\tikzstyle{whitenode}=[draw,circle,fill=white,minimum size=5pt,inner sep=0pt]
\tikzstyle{blacknode}=[draw,circle,fill=black,minimum size=5pt,inner sep=0pt]
\tikzstyle{texte}=[circle,minimum size=5pt,inner sep=0pt]
\tikzstyle{innerWhite} = [semithick, white,line width=0.5pt]
\draw (0,0) node[whitenode] (u) {};
\draw (1,0) node[whitenode] (y2) {}; 
\draw (u) edge [bend left=60] node {} (y2);
\draw (u) edge [bend left=40] node {} (y2);
\draw (u) edge [bend right=40] node {} (y2);

\draw (0.5,0.15) edge [thick,densely dotted] node {} (0.5,-0.15);

\draw (0.5,-0.5) node[texte] (k) {$k$};
%\draw[very thick] (y3) edge node {} (w2); 
%\draw[innerWhite] (y3) -- (w2); 

\draw (1.5,-0.2) node[texte] (k) {\Large $\leadsto$};

\draw (2,0) node[whitenode](u) {};
\draw (3,0) node[whitenode] (y2) {}; 
\draw (u) edge [bend left] node {} (y2);
\draw (u) edge [bend right] node {} (y2);
%\draw ++(0:1cm) node[whitenode] (y1) {}; 
\draw (2.5,-0.5) node[texte] (k) {$k$};
\end{tikzpicture}
\label{fig:r:triple}
}
\qquad
\subfloat[][Rule~\ref{r:3deg3}]{
\centering
\begin{tikzpicture}[scale=\scalefactor]
\tikzstyle{whitenode}=[draw,circle,fill=white,minimum size=5pt,inner sep=0pt]
\tikzstyle{blacknode}=[draw,circle,fill=black,minimum size=5pt,inner sep=0pt]
\tikzstyle{texte}=[circle,minimum size=5pt,inner sep=0pt]
\tikzstyle{innerWhite} = [semithick, white,line width=0.5pt]
\draw (0,0) node[blacknode] (a) {};
\draw (0,-0.5) node[blacknode] (b) {}; 
\draw (0,-1) node[blacknode] (c) {}; 

\draw (1,0) node[whitenode] (u) {}; 
\draw (1,-1) node[whitenode] (v) {}; 

\draw[densely dotted] (a)
-- ++(180:0.5cm);
\draw[densely dotted] (b)
-- ++(180:0.5cm);
\draw[densely dotted] (c)
-- ++(180:0.5cm);

\draw (a) edge [bend left=10] node {} (u);
\draw (a) edge [bend right=10,densely dotted] node {} (u);
\draw (a) edge [bend left=10] node {} (v);
\draw (a) edge [bend right=10,densely dotted] node {} (v);
\draw (b) edge [bend left=10] node {} (u);
\draw (b) edge [bend right=10,densely dotted] node {} (u);
\draw (b) edge [bend left=10] node {} (v);
\draw (b) edge [bend right=10,densely dotted] node {} (v);
\draw (c) edge [bend left=10] node {} (u);
\draw (c) edge [bend right=10,densely dotted] node {} (u);
\draw (c) edge [bend left=10] node {} (v);
\draw (c) edge [bend right=10,densely dotted] node {} (v);
\draw (0.5,-1.5) node[texte] (k) {$k$};

\draw (a) node[above] {$a$};
\draw (b) node[left] {$b$};
\draw (c) node[below=1mm] {$c$};
\draw (u) node[right] {$u$};
\draw (v) node[right] {$v$};

\draw (1.9,-0.6) node[texte] (k) {\Large $\leadsto$};

\draw (2.8,-1.5) node[texte] (k) {$k-2$};
\end{tikzpicture}
\label{fig:r:3deg3}
}
\qquad
\subfloat[][Rule~\ref{r:gamma}]{
\centering
\begin{tikzpicture}[scale=\scalefactor]
\tikzstyle{whitenode}=[draw,circle,fill=white,minimum size=5pt,inner sep=0pt]
\tikzstyle{blacknode}=[draw,circle,fill=black,minimum size=5pt,inner sep=0pt]
\tikzstyle{texte}=[circle,minimum size=5pt,inner sep=0pt]
\tikzstyle{innerWhite} = [semithick, white,line width=0.5pt]
\draw (0,0) node[blacknode] (u) {};
\draw (0,-1) node[whitenode] (v) {}; 
\draw (u)
--++(-30:1cm) node[whitenode] (w) {}; 
\draw (u)
--++(-150:1cm) node[whitenode] (x) {}; 

\draw (u) node [above=1mm] {$u$};
\draw (v) node [below=1mm] {$v$};
\draw (w) node [below=1mm] {$w$};
\draw (x) node [below=1mm] {$x$};

\draw (u) edge [bend left=10] node {} (v);
\draw (u) edge [bend right=10,densely dotted] node {} (v);
\draw (w) edge [bend left=10] node {} (v);
\draw (w) edge [bend right=10,densely dotted] node {} (v);
\draw (x) edge [bend left=10] node {} (v);
\draw (x) edge [bend right=10,densely dotted] node {} (v);
\draw (0,-1.8) node[texte] (k) {$k$};
%\draw[very thick] (y3) edge node {} (w2); 
%\draw[innerWhite] (y3) -- (w2); 

\draw (1.4,-0.6) node[texte] (k) {\Large $\leadsto$};

\draw (2.8,-1) node[whitenode] (v) {}; 
\draw (2.8,0)
++(-30:1cm) node[whitenode] (w) {}; 
\draw (2.8,0)
++(-150:1cm) node[whitenode] (x) {}; 

\draw (w) edge [bend right=10] node {} (x);
\draw (w) edge [bend left=10,densely dotted] node {} (x);
\draw (w) edge [bend left=10] node {} (v);
\draw (w) edge [bend right=10] node {} (v);
\draw (x) edge [bend left=10] node {} (v);
\draw (x) edge [bend right=10] node {} (v);
\draw (2.8,-1.8) node[texte] (k) {$k$};
\draw (v) node [below=1mm] {$v$};
\draw (w) node [below=1mm] {$w$};
\draw (x) node [below=1mm] {$x$};
\end{tikzpicture}
\label{fig:r:gamma}
}
\caption{Reduction rules~\ref{r:loop}--\ref{r:gamma}. Dashed edges are optional. We draw in black the vertices whose incident edges are all already drawn (as solid or dashed edges), in white the vertices which might be incident to other edges. Regardless of their color, vertices in the figures may not coincide.
}
\label{fig:rules}
\vspace{-5mm}
\end{figure}
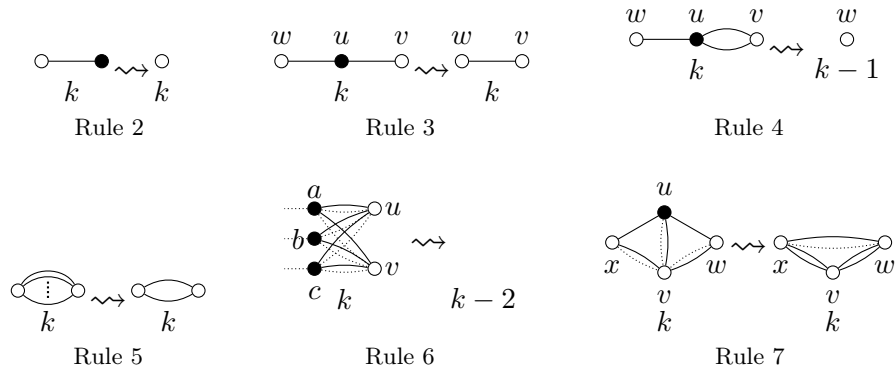
%\captionsetup[subfloat]{labelformat=parens}

\rrule{r:loop}{If there is a loop at a vertex $v$, remove $v$ and decrease $k$ by one.}

\rrule{r:deg1}{Delete vertices of degree at most one.}

\rrule{r:deg2}{If a vertex $u$ is of degree two, with incident edges $uv$ and $uw$, then delete $u$ and add the edge $vw$. (Note that if $v=w$ then a loop is added.)}

\rrule{r:deg3-double}{If a vertex $u$ has exactly two neighbors $v$ and $w$, edge $uv$ is double, and edge $uw$ is simple, then delete $v$ and $u$ and decrease $k$ by one.}

\rrule{r:triple}{If there are at least three edges between a pair of vertices, remove all but two of the edges.}

% \rrule{deg4-double}{If a vertex $u$ has exactly two neighbors $v$, and all the three edges $uv$, $uw$ and $vw$ are double, then delete the vertices $u$, $v$ and $w$ and decrease $k$ by two.}

\rrule{r:3deg3}{Assume that there are five vertices $a,b,c,v,w$ such that 
1) both $v$ and $w$ are neighbors of each of $a,b,c$ and 
2) each vertex $x\in\{a,b,c\}$ is incident with at most one edge $xy$ such that $y\not\in\{v,w\}$. 
Then remove all the five vertices and decrease $k$ by two.}

\medskip

The correctness of the above reduction rules was proven in~\cite{ak12}. 
(In~\cite{ak12}, Rule~\ref{r:3deg3} is formulated in a slightly less general way which forbids multiplicity of some edges, but the correctness proof stays the same.)
Now we introduce a few new rules.

\rrule{r:gamma}{If a vertex $u$ has exactly three neighbors $v$, $w$ and $x$, $v$ is also adjacent to $w$ and $x$, and both edges $uw$ and $ux$ are simple, then contract $uv$ and add an edge $wx$ (increasing its multiplicity if it already exists). If edge $uv$ was not simple, add a loop at $v$.}

\begin{lemma}\label{cl:r:gamma}
Rule~\ref{r:gamma} is correct.
\end{lemma}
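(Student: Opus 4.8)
The plan is to verify the three correctness conditions, of which (c) and (a) are immediate and (b) carries all the content. Since the rule does not change $k$, we have $k'=k$ and (c) holds. For (a), observe that $G'$ is obtained from $G$ by contracting the edge $uv$ — which cannot destroy planarity, planarity being preserved under minors — and then inserting the single edge $wx$. To see that this insertion keeps the graph plane, I would fix a plane embedding of $G$ and use that the only edges at $u$ are $uv,uw,ux$ (with $uv$ possibly doubled): then $uw$ and $ux$ are consecutive in the rotation at $u$ and bound a common face whose facial walk contains both $w$ and $x$; contracting $uv$ into $v$ keeps $w,x$ on a common face, inside which $wx$ can be drawn. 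When $uv$ is not simple, the contraction produces the announced loop at $v$, routed in the region bounded by the former multiple edge, which matches the rule.

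For (b) I must show that $G$ has a feedback vertex set of size at most $k$ if and only if $G'$ does (recall $k'=k$). Throughout let $\hat v$ denote the vertex of $G'$ arising from contracting $uv$ (so $\hat v$ plays the role of the merged pair $\{u,v\}$), and note two multiplicity effects that drive the argument: because $v$ is adjacent to both $w$ and $x$, the edges $\hat v w$ and $\hat v x$ become \emph{double} in $G'$, and when $uv$ is double $\hat v$ carries a loop. Consequently any feedback vertex set of $G'$ that avoids $\hat v$ must contain both $w$ and $x$ (to kill the $2$-cycles $\hat v w$ and $\hat v x$), and if $uv$ was double then every feedback vertex set of $G'$ contains $\hat v$ (to kill the loop). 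These observations split the backward direction into exactly two clean cases.

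The core of the argument is a canonical-form lemma: \emph{$G$ has a minimum feedback vertex set $S$ with $v\in S$ and $u\notin S$.} I would prove it by starting from an arbitrary minimum solution $S_0$ and rerouting. If $u\in S_0$ and $v\notin S_0$, I replace $u$ by $v$; this stays a solution because in the forest $G-S_0$ the only $w$--$x$ connection runs through $v$ (two such connections would form a cycle), so after deleting $v$ and reinserting the vertex $u$ — whose only possible forest-neighbours are $w,x$, since $N(u)=\{v,w,x\}$ and $uw,ux$ are simple — no cycle appears. If $u,v\in S_0$, minimality forces a surviving $w$--$x$ path in $G-(S_0\setminus\{u\})$, and I instead swap $u$ for $w$: then $u$ becomes pendant (its neighbours $v,w$ are deleted), so the result is again a minimum solution, now with $v$ in and $u$ out. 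The double-edge sub-cases are consistent since whichever of $u,v$ we keep still hits the $2$-cycle $uv$.

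Granting the lemma, the two implications follow by the same bridge/pendant bookkeeping. For the forward direction take $S$ as in the lemma and set $S'=S$ (reading $v$ as $\hat v$, so $\hat v\in S'$); then $G'-S'$ equals $(G-S)-u$ with the edge $wx$ added, and since in the forest $G-S$ the vertices $w,x$ were joined only through $u$, deleting $u$ disconnects them and $wx$ closes no cycle. For the backward direction take a solution $S'$ of $G'$: if $\hat v\in S'$ put $S=(S'\setminus\{\hat v\})\cup\{v\}$ and observe that $(G-S)-u=(G'-S')-wx$ is a forest to which $u$ is reattached without creating a cycle (again because $wx$ is a bridge of $G'-S'$); if $\hat v\notin S'$ then by the multiplicity remark $uv$ is simple and $w,x\in S'$, so $S=S'$ works, as un-contracting $\hat v$ merely reattaches $u$ as a pendant of $v$. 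I expect the main obstacle to be the canonical-form lemma, and specifically the case $u,v\in S$, where the naive deletion of $u$ can fail and one must find the right swap; a secondary source of care is tracking edge multiplicities, since it is precisely the doubling of $\hat v w,\hat v x$ and the loop from a double $uv$ that make the backward case distinction exhaustive.
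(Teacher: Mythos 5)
Your overall strategy (normalize a minimum feedback vertex set of $G$ into a canonical form, then transfer it across the contraction) is genuinely different from the paper's direct cycle-chasing case analysis on $|\{u,v\}\cap S|$, but it has a real gap: the canonical-form lemma is false as stated. You claim $G$ always has a \emph{minimum} feedback vertex set $S$ with $v\in S$ and $u\notin S$, and you only reroute from the cases $u\in S_0,\ v\notin S_0$ and $u,v\in S_0$; the case $u\notin S_0$ and $v\notin S_0$ is never addressed, and there the claim can fail outright. Indeed, if $u,v\notin S_0$ then the triangles $uvw$ and $uvx$ force $w,x\in S_0$, and if $w$ and $x$ are each independently forced into every optimal solution by structure elsewhere (say each is the apex of several otherwise disjoint triangles), then every minimum feedback vertex set avoids $v$, so the set $S$ that your forward direction starts from does not exist. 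The fix is easy and is already implicit in your own backward-direction split: either a minimum solution meets $\{u,v\}$, in which case your two reroutings (both of which I checked and which are correct) produce the canonical form, or every minimum solution avoids $u$ and $v$, in which case $w,x\in S_0$, the edge $uv$ must be simple (no surviving $2$-cycle), and $S_0$ itself is directly a feedback vertex set of $G'$ because $G'-S_0=(G-S_0)-u$. This is exactly the paper's third case; with it added, your argument goes through, and the rest of your bookkeeping (the bridge argument for $wx$, the pendant reattachment of $u$, and the multiplicity observations forcing $w,x\in S'$ when $\hat v\notin S'$) is sound.

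A second, smaller slip is in your planarity argument for condition (a): when $uv$ is a double edge, its two copies may separate $uw$ from $ux$ in the rotation at $u$, so these edges need not be consecutive nor bound a common face, and after contraction the resulting loop at $v$ can separate $w$ from $x$ in that embedding, blocking the insertion of $wx$ there. Planarity of $G'$ still holds --- delete $u$ and route $wx$ along the vacated track $w$--$u$--$x$, then note that doubling $vw,vx$ and adding a loop never affects planarity --- but the argument must be phrased that way. (The paper does not discuss condition (a) at all, so this is an extra step on your part that merely needs repair.)
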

\begin{proof}
Let $G'$ be the graph obtained from a graph $G$ by a single application of Rule~\ref{r:gamma}.  
Let $S$ be a feedback vertex set of size $k$ in $G'$. We claim $S$ is a feedback vertex set in $G$ too. 
Assume for a contradiction that there is a cycle $C$ in $G-S$.
Then $u\in V(C)$, for otherwise $C\subseteq G'$.
If $v\in S$ then $\{wu,ux\}\subseteq C$ and $C-\{wu,ux\}+\{wx\}$ is a cycle in $G'$, a contradiction.
If $v\not\in S$, then $w,x\in S$ and hence $v$ is the only neighbor of $u$ in $G-S$, so $C$ is the 2-cycle $uvu$. But then $G'-S$ contains a loop at $v$, a contradiction.

Let $S$ be a feedback vertex set of size $k$ in $G$. 
If $|\{u,v\}\cap S|=2$, then $S\setminus\{u\}\cup\{w\}$ is a feedback vertex set of size $k$ in $G'$.
Assume $|\{u,v\}\cap S|=1$. Then we can assume $v\in S$ for otherwise we replace $S$ by $S\setminus\{u\}\cup\{v\}$, which is also a feedback vertex set in $G$. If there is a cycle $C$ in $G'-S$ , then $wx\in E(C)$, for otherwise $C\subseteq G-S$.
But then $C-\{wx\}+\{wu,ux\}$ is a cycle in $G$, a contradiction.
Finally, if $|\{u,v\}\cap S|=0$ then both $w$ and $x$ are in $S$, so $S$ is also a feedback vertex set in $G'$.
\end{proof}

The graph modification in Rule~\ref{r:gamma} is an example of a {\em gadget replacement}, i.e., a subgraph of $G$ is replaced by another subgraph in such a way that the answer to the {\sc Feedback Vertex Set} problem does not change. 
We will use many rules of this kind, and their correctness proofs all use similar arguments.
In order to make our proofs more compact, we define gadget replacement formally below, and prove a technical lemma (Lemma~\ref{lem:fvs-transfer} below) which will be used in many rule correctness proofs.

{\em Gadget replacement} in graph $A$ is a triple $(X,Y,E_I)$, where $X\subseteq V(A)$, $Y$ is a set of vertices disjoint with $V(A)$, and $E_I$ is a set of edges with both endpoints in $N_A(X)\cup Y$.
The {\em result} of gadget replacement is a new graph $B$, obtained from $A$ by deleting $X$ and $E(G[N[X]])$ and inserting $Y$ and $E_I$.

For an example, in Rule~\ref{r:gamma}, $G'$ is a result of gadget replacement $(\{u\},\emptyset,\{xv,wv,xw\})$.
Note that if $(X,Y,E_I)$ is a gadget replacement in $A$ that results in $B$ then $(Y,X,\{uv\in E(A)\ :\ u,v\in N_A[X]\})$ is a gadget replacement in $B$ and its result is $A$.

\begin{lemma}
\label{lem:fvs-transfer}
 Let $(X,Y,E_I)$ be a gadget replacement in graph $A$, and let $B$ be its result.
 Let $Q_A=A[N_A[X]]$ and $Q_B=B[Y\cup N_A(X)]$.
 Let $S_A$ be a feedback vertex set in $A$.
 Let $S_B$ be a subset of vertices of $B$ such that $S_A\setminus V(Q_A) = S_B \setminus V(Q_B)$ and $Q_B-S_B$ is a forest.
 Finally, assume that for every pair $u,v\in N_A(X)$ if there is a $(u,v)$-path in $Q_B-S_B$ then there is a $(u,v)$-path in $Q_A-S_A$.
 Then $S_B$ is a feedback vertex set of $B$.
\end{lemma}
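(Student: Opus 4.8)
The plan is to show that $B-S_B$ contains no cycle. First I would fix notation for the common ``outside'' part shared by $A$ and $B$. Write $W = V(A)\setminus V(Q_A) = V(B)\setminus V(Q_B)$ (these two sets coincide, being $V(A)\setminus N_A[X]$), and call an edge \emph{inside} if it belongs to $Q_B$ and \emph{outside} otherwise; the outside edges of $B$ are exactly the edges of $A$ with an endpoint in $W$, so $A$ and $B$ carry the same outside edges, and by assumption $S_A$ and $S_B$ agree on $W$. The path hypothesis is then used in two ways. Applied to the degenerate pair $u=v$ (a single vertex being a trivial $(u,u)$-path), it shows that every portal surviving in $B$ also survives in $A$: writing $P_A = N_A(X)\setminus S_A$ and $P_B = N_A(X)\setminus S_B$, we obtain $P_B\subseteq P_A$. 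Applied to distinct $u,v$, it shows that the partition of $P_B$ into connected components of $Q_B-S_B$ refines the partition of $P_A$ into components of $Q_A-S_A$.

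Next I would pass to contracted graphs. Put $F_A = A-S_A$ and $F_B = B-S_B$; by hypothesis $F_A$ and $Q_B-S_B$ are forests. Since $Q_B-S_B$ is a forest subgraph of $F_B$, contracting each of its connected components contracts only non-loop (tree) edges and therefore preserves the cyclomatic number $\mu(\cdot)=|E|-|V|+(\text{number of components})$; hence $F_B$ is a forest iff $\hat F_B := F_B/(Q_B-S_B)$ is, and likewise $\hat F_A := F_A/(Q_A-S_A)$ is a forest. The reason for contracting is that $\hat F_A$ and $\hat F_B$ admit a transparent description: each is obtained from the common outside part (the vertices $W\setminus S_A=W\setminus S_B$ together with all outside edges) by gluing every surviving portal to the single vertex representing its component of $Q_A-S_A$, resp.\ $Q_B-S_B$ (portal-free components become isolated vertices and may be ignored).

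It then remains to compare the two contracted graphs. Using $P_B\subseteq P_A$ and the refinement property, $\hat F_B$ is obtained from $\hat F_A$ by two kinds of moves: deleting the outside edges incident to the portals of $P_A\setminus P_B$, and splitting each component-vertex of $\hat F_A$ into the finer component-vertices prescribed by $Q_B-S_B$ (redistributing the incident outside edges, adding no new edge). Both edge deletion and vertex splitting can only decrease $\mu$, and $\mu\ge 0$ always; since $\mu(\hat F_A)=0$ we conclude $\mu(\hat F_B)=0$, so $\hat F_B$, and hence $F_B=B-S_B$, is a forest.

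The main obstacle is the temptation to argue directly by lifting a cycle $C$ of $F_B$ to $F_A$: one would replace each maximal subpath of $C$ inside $Q_B$ (a $(u,v)$-path between portals) by the $(u,v)$-path in $Q_A-S_A$ granted by the hypothesis. This meets two difficulties. The minor one is that a portal through which $C$ merely passes along outside edges could a priori lie in $S_A$; this is exactly what the degenerate $u=v$ case of the hypothesis rules out. The serious one is that the lifted object is only a closed \emph{walk}, and inside the forest $F_A$ every closed walk collapses to a trivial back-and-forth tour, so it exhibits no cycle and yields no contradiction. Routing the argument through the contraction, where only the portal-gluing pattern survives, is what makes it go through.
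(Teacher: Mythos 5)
Your argument is correct, but it takes a genuinely different route from the paper's --- and, somewhat ironically, the route you dismiss as the ``main obstacle'' is precisely the one the paper takes. The paper's proof is the direct lift: it takes a cycle $C$ in $B-S_B$, observes that $C$ must meet both $V(Q_B)$ and the outside part $W$, splits $C$ into maximal subpaths inside $Q_B-S_B$ and outside, replaces each inside subpath by a path in $Q_A-S_A$ supplied by the hypothesis (deducing along the way, as you do via the degenerate pair, that the relevant portals avoid $S_A$), and obtains a closed walk $C'$ in $A-S_A$. Your ``serious difficulty'' --- that a closed walk in a forest is not by itself a contradiction --- is a fair criticism of the paper's final sentence as literally written, but the direct route does not actually fail: the replacement paths live entirely in $Q_A-S_A$, which is disjoint from $W$, so every edge of $C$ incident to a vertex of $W$ (and there is at least one such vertex) is traversed by $C'$ exactly once, and a closed walk traversing some edge exactly once contains a cycle. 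So the paper's proof needs only that one extra observation, not a change of strategy. Your contraction argument sidesteps the issue entirely and is arguably more robust: contracting the tree components of $Q_B-S_B$ and of $Q_A-S_A$ preserves the cyclomatic number $\mu$, the two contracted graphs differ only by edge deletions, vertex splittings and isolated vertices (this is exactly where $P_B\subseteq P_A$ and the refinement of the portal partition enter), and none of these operations can increase $\mu$, which is always nonnegative. The price is a heavier setup; the benefit is that no case analysis on how the cycle threads through the gadget is needed. One point common to both proofs and worth making explicit: the hypothesis must be read as including the pair $u=v$ (a maximal inside subpath of $C$ can be a single portal vertex, and your inclusion $P_B\subseteq P_A$ needs it too); without that reading the lemma is in fact false, so your use of the trivial $(u,u)$-path is not a stretch but a necessity.
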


\begin{proof}
 Assume for a contradiction that there is a cycle $C$ in $B-S_B$.
 Since $Q_B - S_B$ is a forest, $C$ has at least one vertex outside $Q_B$.
 Assume $C$ has all vertices outside $Q_B$. 
 But then $C\subseteq A$ and since $S_A\setminus V(Q_A) = S_B \setminus V(Q_B)$ we also have $C\subseteq A-S_A$, so $S_A$ is not a feedback vertex set of $A$, a contradiction.
 
 Hence we know that $C$ has vertices both inside and outside $Q_B$.
 It follows that $C$ can be divided into subpaths of two kinds: subpaths in $Q_B-S_B$ and in $(B-E(Q_B))-S_B$.
 Every such subpath $P_B$ in $Q_B-S_B$ is of the form $v_1,\ldots,v_{t}$, where $v_1,v_t \in N_A(X)$ and $v_2,\ldots,v_{t-1} \in V(Q_B)-S_B$.
 Hence, by the assumption of the lemma, there is a $(v_1,v_t)$-path $P_A$ in $Q_A-S_A$.
 In particular, $v_1,v_t\not\in S_A$ (we will use this observation later).
 Let $C'$ be the closed walk obtained from $C$ by replacing every maximal subpath in $Q_B-S_B$ by a path in $Q_A-S_A$. 
 Now consider a maximal subpath $P_B'$ of $C$ in $(B-E(Q_B))-S_B$. 
 It is of the form $v_1,\ldots,v_{t}$, where $v_1,v_t \in N_A(X)$ and $v_2,\ldots,v_{t-1} \in (V(B)\setminus V(Q_B))\setminus S_B$.
 Since $v_1$ and $v_t$ are also endpoints of paths in $Q_B-S_B$, it holds that $v_1,v_t\not\in S_A$ as argued above.
 Since $S_A\setminus V(Q_A) = S_B \setminus V(Q_B)$, it holds that $v_2,\ldots,v_{t-1} \in V(A)\setminus S_A$.
 Hence $P_B'\subseteq A-S_A$. 
 It follows that $C'$ is a closed walk in $A-S_A$, hence there is a cycle in $A-S_A$, a contradiction.
\end{proof}

When applying Lemma~\ref{lem:fvs-transfer}, we will examine reachability relations in $Q_A-S_A$ and in $Q_B-S_B$, in order to check whether the last assumption holds. It will be convenient to introduce the following notation.
For a graph $H$ and set of vertices $S$, let $R_{H,S}$ be the reachability relation in $H$ truncated to $S$, i.e., $(a,b)\in R_{H,S}$ iff $a,b\in S$ and there is an $(a,b)$-path in $H$. 
The set $S$ does not need to be a subset of $V(H)$; for every vertex $a\in S\setminus V(H)$, $\{a\}$ forms an equivalence class of $R_{H,S}$.

\begin{figure}[t]
\centering
\begin{tikzpicture}[scale=\scalefactor]
\tikzstyle{whitenode}=[draw,circle,fill=white,minimum size=5pt,inner sep=0pt]
\tikzstyle{blacknode}=[draw,circle,fill=black,minimum size=5pt,inner sep=0pt]
\tikzstyle{texte}=[circle,minimum size=5pt,inner sep=0pt]
\tikzstyle{innerWhite} = [semithick, white,line width=0.5pt]
\draw (0,0) node[blacknode] [label=-90:$v_1$] (v1) {}
-- ++(180:0.5cm) node[blacknode] [label=-90:$v_2$] (v2) {}
 ++(180:0.5cm) node[whitenode] [label=-90:$v_3$] (v3) {};

\draw (v2) edge node {} (v3);

\draw (1,0) node[blacknode] [label=-90:$u_1$] (u1) {}
 ++(0:0.5cm) node[whitenode] [label=-90:$u_2$] (u2) {};
\draw (u1) edge node {} (u2);

\draw (0.5,1) node[whitenode] [label=right:$w_1$] (w1) {}; 
\draw (0.5,-1) node[whitenode] [label=right:$w_2$] (w2) {}; 

\draw (w1) edge node {} (u1);
\draw (w2) edge node {} (u1);
\draw (w1) edge [bend left=10,densely dotted] node {} (v1);
\draw (w1) edge [bend right=10] node {} (v1);
\draw (w2) edge [bend left=10,densely dotted] node {} (v1);
\draw (w2) edge [bend right=10] node {} (v1);

\draw (v2) edge node {} (w1);

\draw (0.5,-1.9) node (k) {$k$};

%\draw (0.5,-1.5) node[texte] (k) {\small{If $v_2w_2$ is an edge or one}};
%\draw (0.5,-1.9) node[texte] (k) {\small{of $\{v_2w_1, u_1w_1,u_1w_2\}$}};
%\draw (0.5,-2.3) node[texte] (k) {\small{a double edge}};

\draw (2,-0.2) node (k) {\Large $\leadsto$};

\draw (4,-1.9) node (k) {$k$};

\draw (3.5,0) node[blacknode] [label=-90:$y$] (v1) {}
 ++(180:1cm) node[whitenode] [label=-90:$v_3$] (v3) {};

\draw (v1) edge node {} (v3);

\draw (4.5,0) node[blacknode] [label=-90:$u_1$] (u1) {}
 ++(0:0.5cm) node[whitenode] [label=-90:$u_2$] (u2) {};
\draw (u1) edge node {} (u2);

\draw (4,1) node[whitenode] [label=right:$w_1$] (w1) {}; 
\draw (4,-1) node[whitenode] [label=right:$w_2$] (w2) {}; 

\draw (w1) edge node {} (u1);
\draw (w2) edge node {} (u1);
\draw (w1) edge [bend left=10] node {} (v1);
\draw (w1) edge [bend right=10] node {} (v1);
\draw (w2) edge [bend left=10,densely dotted] node {} (v1);
\draw (w2) edge [bend right=10] node {} (v1);

\draw (w1) edge [bend left=10,densely dotted] node {} (v3);
\draw (w1) edge [bend right=10] node {} (v3);
\end{tikzpicture}
\quad
\begin{tikzpicture}[scale=\scalefactor]
\tikzstyle{whitenode}=[draw,circle,fill=white,minimum size=5pt,inner sep=0pt]
\tikzstyle{blacknode}=[draw,circle,fill=black,minimum size=5pt,inner sep=0pt]
\tikzstyle{texte}=[circle,minimum size=5pt,inner sep=0pt]
\tikzstyle{innerWhite} = [semithick, white,line width=0.5pt]
\draw (0,0) node[blacknode] [label=-90:$u_1$] (u1) {}
-- ++(0:0.5cm) node[blacknode] [label=-90:$u_2$] (u2) {}
-- ++(0:0.5cm) node[blacknode] [label=-90:$u_3$] (u3) {}
 ++(0:0.5cm) node[whitenode] [label=-90:$u_4$] (u4) {};

\draw (u3) edge node {} (u4);

\draw (0.75,1) node[whitenode] [label=right:$w_1$] (w1) {}; 
\draw (0.75,-1) node[whitenode] [label=right:$w_2$] (w2) {}; 

\draw (w1) edge [bend left=10,densely dotted] node {} (u1);
\draw (w1) edge [bend right=10] node {} (u1);
\draw (w2) edge [bend left=10,densely dotted] node {} (u1);
\draw (w2) edge [bend right=10] node {} (u1);
\draw (w1) edge [bend left=10,densely dotted] node {} (u2);
\draw (w1) edge [bend right=10] node {} (u2);
\draw (w2) edge node {} (u3);

\draw (0.75,-1.5) node (k) {\small{$\min\{\deg(u_1),\deg(u_2)\}=3$}};
\draw (0.75,-1.9) node (k) {$k$};

\draw (2.2,-0.2) node (k) {\Large $\leadsto$};

\draw (4,-1.9) node (k) {$k$};

\draw (4,0) node[blacknode] [label=left:$y$] (u2) {}
-- ++(0:0.5cm) node[blacknode] [label=-90:$u_3$] (u3) {}
 ++(0:0.5cm) node[whitenode] [label=-90:$u_4$] (u4) {};

\draw (u3) edge node {} (u4);

\draw (4,1) node[whitenode] [label=right:$w_1$] (w1) {}; 
\draw (4,-1) node[whitenode] [label=right:$w_2$] (w2) {}; 

\draw (w1) edge [bend left=10] node {} (u2);
\draw (w1) edge [bend right=10] node {} (u2);
\draw (w2) edge [bend left=10] node {} (u2);
\draw (w2) edge [bend right=10] node {} (u2);
\draw (w2) edge node {} (u3);
\draw (w1) edge node {} (u3);
\end{tikzpicture}
\label{fig:r:digon:2+2}
\caption{Reduction rules~\ref{r:digon:1+2:1} and~\ref{r:digon:3:1}.}
\label{fig:rules-digons-1}
\end{figure}
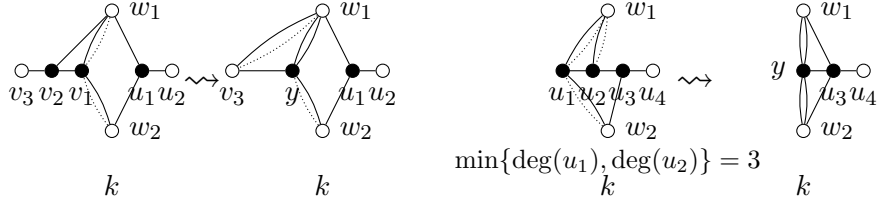

\rrule{r:digon:1+2:1}{Assume there are six vertices $v_1$, $v_2$, $v_3$, $u_1$, $u_2$, $w_1$, $w_2$, such that 
$N(u_1)= \{w_1,w_2,u_2\}$, $N(v_1)=\{w_1,w_2,v_2\}$, $N(v_2)= \{w_1,v_1,v_3\}$, and 
$\deg(v_2)=\deg(u_1)=3$.
Then contract the edge $v_1v_2$ to a new vertex $y$ and add an edge $w_1v_3$, as presented in Figure~\ref{fig:rules-digons-1} (left).}

\begin{lemma}
\label{lem:r:digon:1+2:1}
Rule~\ref{r:digon:1+2:1} is correct.
\end{lemma}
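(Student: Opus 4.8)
The plan is to realize Rule~\ref{r:digon:1+2:1} as a gadget replacement and to prove the resulting equivalence through two applications of Lemma~\ref{lem:fvs-transfer}, one in each direction. Concretely, I would take $X=\{v_1,v_2\}$ and $Y=\{y\}$, so that $N_A(X)=\{w_1,w_2,v_3\}$ is exactly the boundary on which the reachability test is run, and set $Q_A=G[\{v_1,v_2,w_1,w_2,v_3\}]$, $Q_B=G'[\{y,w_1,w_2,v_3\}]$. The edge set $E_I$ consists of the edges incident to $y$ created by contracting $v_1v_2$ (the double edge $yw_1$ coming from $v_2w_1$ together with $v_1w_1$, the edge(s) $yw_2$ inherited from $v_1w_2$, and the simple edge $yv_3$ inherited from $v_2v_3$), the freshly added edge $w_1v_3$, and any edges already present among $w_1,w_2,v_3$. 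Since the rule keeps $k'=k$ and removes exactly one vertex, it remains only to prove that $(G,k)$ and $(G',k)$ are equivalent.

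For the forward direction I would start from a feedback vertex set $S_A$ of $G$ and build a candidate $S_B$ agreeing with $S_A$ outside $V(Q_A)$, by a case analysis on $S_A\cap\{v_1,v_2\}$. Two structural facts drive the cases: the triangle $v_1v_2w_1$ forces $S_A\cap\{v_1,v_2,w_1\}\ne\emptyset$, and each digon among $v_1w_1,v_1w_2$ that is present forces one of its endpoints into $S_A$. In each case I would pick $S_B\cap\{y,w_1,w_2,v_3\}$ so that (i) $Q_B-S_B$ is a forest (the hypothesis of Lemma~\ref{lem:fvs-transfer}, which in particular requires killing the always-present digon $yw_1$ by taking $y$ or $w_1$), and (ii) $R_{Q_B,\{w_1,w_2,v_3\}}\subseteq R_{Q_A,\{w_1,w_2,v_3\}}$, and then invoke the lemma (the forest property of $Q_A-S_A$ is free, being inherited from $G-S_A$). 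For example, when $v_1,v_2\notin S_A$ we have $w_1\in S_A$ and may keep the same boundary set with $y\notin S_B$: the surviving path $w_2\text{-}y\text{-}v_3$ in $Q_B-S_B$ is matched by $w_2\text{-}v_1\text{-}v_2\text{-}v_3$ in $Q_A-S_A$; when $v_1\in S_A,v_2\notin S_A$ I map $v_1\mapsto y$, and the edge $w_1v_3$ in $Q_B-S_B$ is matched by $w_1\text{-}v_2\text{-}v_3$ in $Q_A-S_A$; when $v_1,v_2\in S_A$ I place $\{y,w_1\}$ in $S_B$, making $R_{Q_B,\cdot}$ trivial and keeping $|S_B|\le|S_A|$. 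The backward direction is symmetric, using the reverse gadget $(\{y\},\{v_1,v_2\},\cdot)$ and the dual test $R_{Q_A,\cdot}\subseteq R_{Q_B,\cdot}$; there the digon $yw_1$ forces $y\in S_B$ or $w_1\in S_B$, which dictates whether $v_1$ (or $v_2$) or $w_1$ is placed in $S_A$.

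The step I expect to be the main obstacle is reconciling the reachability containment with the budget in the case $v_2\in S_A$, $v_1\notin S_A$, precisely because the added edge $w_1v_3$ realizes in $G'$ a boundary connection $w_1\sim v_3$ that in $G$ passes only through $v_2$; when $S_A$ leaves both $w_1$ and $v_3$ outside, the naive choice $v_2\mapsto y$ produces a $w_1\sim v_3$ (or $w_2\sim v_3$) path in $Q_B-S_B$ with no counterpart in $Q_A-S_A$, so I must instead route an extra boundary vertex into $S_B$ and then argue that the budget still closes. This is exactly where the remaining hypotheses are needed: I would use the $4$-cycle $w_1v_1w_2u_1$ guaranteed by $N(u_1)=\{w_1,w_2,u_2\}$ with $\deg(u_1)=3$ (which forbids keeping all of $w_1,v_1,w_2$ out of $S_A$ unless $u_1\in S_A$), the triangle $v_1v_2w_1$, and the inapplicability of the earlier degree-reduction rules to $u_2$ and $v_3$, to normalize $S_A$ so that the hard configuration either does not arise or can be rewritten (e.g.\ swapping $u_1$ or $v_2$ for $w_1$) into one of the clean cases above. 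A secondary, pervasive chore is bookkeeping the optional multiplicities of $v_1w_1$ and $v_1w_2$, since a surviving digon would break the forest requirement; each case must therefore check that no present double edge has both endpoints outside the constructed set. Once these reachability, multiplicity, and budget checks are discharged in every case, Lemma~\ref{lem:fvs-transfer} yields both implications, establishing the correctness of Rule~\ref{r:digon:1+2:1}.
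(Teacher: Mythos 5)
Your overall strategy is the paper's: realize the rule as a gadget replacement and transfer solutions in both directions via Lemma~\ref{lem:fvs-transfer}, with a case analysis on how the solution meets the gadget. You have also correctly located the one genuinely hard case. But your resolution of that case does not work, and the reason traces back to your choice of region. With $X=\{v_1,v_2\}$ the vertex $u_1$ lies outside $Q_A$, so in the configuration $v_2,u_1\in S_A$, $v_1,w_1,w_2\notin S_A$ (which the quadrangle $v_1w_1u_1w_2$ permits) your budget inside $Q_A$ is a single vertex, and, as you observe, no single vertex of $\{y,w_1,w_2\}$ makes $Q_B-S_B$ a forest with reachability contained in that of $Q_A-S_A$. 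Your proposed repair, ``swapping $u_1$ or $v_2$ for $w_1$,'' fails as a single exchange: $(S_A\setminus\{u_1\})\cup\{w_1\}$ can leave a cycle through the edges $u_1w_2$ and $u_1u_2$ alive, and $(S_A\setminus\{v_2\})\cup\{w_1\}$ can leave a cycle $v_2v_1w_2\cdots v_3v_2$ alive (take, say, $w_2$ adjacent to both $u_2$ and $v_3$ for concrete failures). Neither cycle is seen by $S_A$ except at the very vertex you are removing, so the hypotheses give you no way to argue it is hit afterwards.

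What does work is the simultaneous exchange $\{v_2,u_1\}\to\{w_1,w_2\}$: since $N(v_1)=\{w_1,w_2,v_2\}$, $N(v_2)=\{w_1,v_1,v_3\}$ and $N(u_1)=\{w_1,w_2,u_2\}$, any cycle avoiding $w_1$ and $w_2$ has at most one live edge at each of $u_1$ and $v_1$, hence avoids both, hence also avoids $v_2$; so after the exchange every surviving cycle already avoided all five vertices and is killed by the rest of $S_A$. This is exactly what the paper buys by putting $u_1$ into the replaced region and splitting on $|S\cap\{w_1,w_2,v_1,v_2,u_1\}|$: when the intersection has size at least two it is replaced wholesale by $\{w_1,w_2\}$, and when it has size one the triangle $v_1v_2w_1$ together with the quadrangle $v_1w_1u_1w_2$ force the single vertex to be $v_1$ or $w_1$ --- so your hard case never arises with budget one. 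Either enlarge your region to include $u_1$, or insert the two-for-two exchange as an explicit normalization step before invoking Lemma~\ref{lem:fvs-transfer}; the remaining cases of your analysis and the backward direction are fine modulo the multiplicity bookkeeping you already flag.
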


\begin{proof}
Let $G'$ be the graph obtained from a graph $G$ by a single application of Rule~\ref{r:digon:1+2:1}.  
Note that $G'$ is a result of a gadget replacement $(X,Y,E_I)$ with $X=\{v_1,v_2,u_1\}$ and $Y=\{y,u_1\}$

Let $S$ be a feedback vertex set of size $k$ in $G$. 
We claim that there is a feedback vertex set $S'$ in $G'$ of size at most $k$.
If $|S\cap\{w_1,w_2,v_1,v_2,u_1\}|\ge 2$, then by Lemma~\ref{lem:fvs-transfer} we see that $S'=(S\setminus\{w_1,w_2,v_1,v_2,u_1\})\cup\{w_1,w_2\}$ works. 
Hence we can assume $|S\cap\{w_1,w_2,v_1,v_2,u_1\}|\le 1$.
Then $S'\cap\{w_1,w_2,v_1,v_2,u_1\} = \{w_1\}$ or $S'\cap\{w_1,w_2,v_1,v_2,u_1\} = \{v_1\}$ to hit the triangle $v_1v_2w_1$ and the quadrangle $v_1w_1u_1w_2$.
In the prior case we pick $S=S'$.
Then $R_{G[N_G[X]]-S,\{w_1,w_2,u_2,v_3\}}$ and $R_{G'[N_{G'}[Y]]-S',\{w_1,w_2,u_2,v_3\}}$ are the same relations since the $v_3v_2v_1$ path in $G$ corresponds to the $v_3y$ edge in $G'$, so Lemma~\ref{lem:fvs-transfer} applies.
We are left with the case $S'\cap\{w_1,w_2,y_1,u_1\} = \{v_1\}$.
Then we pick $S=(S'\setminus\{v_1\})\cup\{y\}$.
Again, $R_{G[N_G[X]]-S,\{w_1,w_2,u_2,v_3\}}=R_{G'[N_{G'}[Y]]-S',\{w_1,w_2,u_2,v_3\}}$ since the $v_3v_2w_1$ path in $G$ corresponds to the $v_3w_1$ edge in $G'$, so Lemma~\ref{lem:fvs-transfer} applies.

Let $S'$ be a feedback vertex set of size $k$ in $G'$. 
We claim that there is a feedback vertex set $S$ in $G$ of size at most $k$.
If $|S'\cap\{w_1,w_2,y_1,u_1\}|\ge 2$, then by Lemma~\ref{lem:fvs-transfer} we see that $S=(S'\setminus\{y_1,y_2\})\cup\{w_1,w_2\}$ works. 
Hence we can assume $|S'\cap\{w_1,w_2,y_1,u_1\}|\le 1$.
Then $S'\cap\{w_1,w_2,y_1,u_1\} = \{w_1\}$ or $S'\cap\{w_1,w_2,y_1,u_1\} = \{y\}$ to hit the digon $w_1y$.
In the prior case we pick $S=S'$.
Then $R_{G[N_G[X]]-S,\{w_1,w_2,u_2,v_3\}}=R_{G'[N_{G'}[Y]]-S',\{w_1,w_2,u_2,v_3\}}$ since the $v_3v_2v_1$ path in $G$ corresponds to the $v_3y$ edge in $G'$, so Lemma~\ref{lem:fvs-transfer} applies.
We are left with the case $S'\cap\{w_1,w_2,y_1,u_1\} = \{y\}$.
Then we pick $S=S'\setminus\{y\}\cup\{v_1\}$.
Again, $R_{G[N_G[X]]-S,\{w_1,w_2,u_2,v_3\}}=R_{G'[N_{G'}[Y]]-S',\{w_1,w_2,u_2,v_3\}}$ since the $v_3v_2w_1$ path in $G$ corresponds to the $v_3w_1$ edge in $G'$, so Lemma~\ref{lem:fvs-transfer} applies.
\end{proof}

\rrule{r:digon:3:1}{Assume $u_1u_2u_3u_4$ is an induced path such that for two vertices $w_1$, $w_2$ outside the path, $N(u_1)=\{u_2,w_1,w_2\}$, $N(u_2)=\{u_1,u_3,w_1\}$ and $N(u_3)=\{u_2,u_4,w_2\}$, $\deg(u_3)=3$, and $\min\{\deg(u_1), \deg(u_2)\}=3$.
Then replace $G[\{u_1,u_2,u_3,w_1,w_2\}]$ with the gadget presented in Figure~\ref{fig:rules-digons-1} (right), i.e., remove $u_1$ and $u_2$ and add a vertex $y$, edges $yu_3$ and $u_3w_1$, and double edges $yw_1$ and $yw_2$.}

\begin{lemma}
\label{lem:r:digon:3:1}
Rule~\ref{r:digon:3:1} is correct.
\end{lemma}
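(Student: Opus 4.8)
The plan is to treat Rule~\ref{r:digon:3:1} as a gadget replacement and feed it to Lemma~\ref{lem:fvs-transfer} twice, exactly along the lines of the proof of Lemma~\ref{lem:r:digon:1+2:1}. Let $G'$ be the result of one application of the rule. Then $G'$ is the result of the gadget replacement $(X,Y,E_I)$ with $X=\{u_1,u_2\}$ and $Y=\{y\}$; the interface is $N_G(X)=\{w_1,w_2,u_3\}$, and I set $Q_A=G[\{u_1,u_2,u_3,w_1,w_2\}]$ and $Q_B=G'[\{y,u_3,w_1,w_2\}]$. Before the case analysis I would record two structural facts. On the $G'$ side, $Q_B$ carries the digons $yw_1$ and $yw_2$, so every feedback vertex set of $G'$ contains $y$ or both of $w_1,w_2$; the only single-vertex way to make $Q_B$ acyclic is to delete $y$, and $Q_B-y$ is the tree $w_1u_3w_2$, which keeps the whole interface in one component. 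On the $G$ side, $Q_A$ always contains the triangle $u_1u_2w_1$ and the quadrangle $u_1u_2u_3w_2$, plus a digon on whichever of $u_1w_1,u_1w_2,u_2w_1$ is doubled. Here the hypothesis $\min\{\deg(u_1),\deg(u_2)\}=3$ is essential: it forces either $u_1w_1,u_1w_2$ both simple or $u_2w_1$ simple, so that deleting $u_2$ (resp.\ $u_1$) kills every cycle of $Q_A$ and again leaves a tree spanning $\{w_1,w_2,u_3\}$. Thus both gadgets admit a single-vertex local solution that leaves the interface fully connected, and matching these is what makes the rule correct.

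For the forward direction I take a feedback vertex set $S$ of $G$ with $|S|=k$ and split on $c=|S\cap\{w_1,w_2,u_1,u_2\}|$. If $c\ge 2$ I put $S'=(S\setminus\{w_1,w_2,u_1,u_2\})\cup\{w_1,w_2\}$; then $Q_B-S'$ is a forest, at most $u_3$ survives in the interface, and the reachability hypothesis of Lemma~\ref{lem:fvs-transfer} holds trivially. If $c\le 1$, the triangle $u_1u_2w_1$ forces the single vertex of $S\cap\{w_1,w_2,u_1,u_2\}$ to be $u_1$, $u_2$, or $w_1$ (not $w_2$, which avoids that triangle). When it is $u_1$ or $u_2$, the digon analysis pins down the simple-edge pattern, I replace it by $y$, and $S'=(S\setminus\{u_i\})\cup\{y\}$ keeps both the interface deletions and the interface reachability of $S$. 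The one delicate subcase is $S\cap V(Q_A)=\{w_1,u_3\}$: keeping the interface deletions fixed would leave the digon $yw_2$ alive with no spare budget, so I first reroute inside $G$, replacing $u_3$ by $w_2$; one checks that $u_1u_2u_3u_4$ then becomes a pendant path in $G-((S\setminus\{u_3\})\cup\{w_2\})$, so the rerouted set is again a feedback vertex set of $G$ of size $k$ and now falls under the case $c\ge 2$.

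For the backward direction I use the reverse gadget replacement and Lemma~\ref{lem:fvs-transfer}, splitting a feedback vertex set $S'$ of $G'$ on whether $y\in S'$. If $y\notin S'$ then $\{w_1,w_2\}\subseteq S'$, and since $Q_A-\{w_1,w_2\}$ is the path $u_1u_2u_3$ I may take the same local deletions $\{w_1,w_2\}\cup(\{u_3\}\cap S')$ in $G$, matching both size and interface reachability. If $y\in S'$, I let $J'=S'\cap\{w_1,w_2,u_3\}$ and build $S$ from $J'$ by adding one internal vertex --- $u_2$ when $u_1w_1,u_1w_2$ are simple, $u_1$ when $u_2w_1$ is simple --- to destroy all cycles of $Q_A$; this uses $|J'|+1=|S'\cap V(Q_B)|$ local vertices, and I verify over the few values of $J'$ that the reachability of $Q_A-S$ is contained in that of $Q_B-S'$, occasionally deleting an interface vertex instead of the internal one (again according to which of $u_1w_1,u_1w_2,u_2w_1$ is doubled) when a residual digon would otherwise appear.

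The main obstacle is precisely this bookkeeping over edge multiplicities. Because the rule lives in a multigraph, each of $u_1w_1,u_1w_2,u_2w_1$ may be a digon, and both the choice of which local vertex to delete and the need to reroute $S$ inside $G$ before transferring depend on that pattern; the degree hypothesis $\min\{\deg(u_1),\deg(u_2)\}=3$ is exactly what guarantees, in every pattern, a single-vertex local solution on the $G$ side with the full interface connectivity needed to mirror $\{y\}$ on the $G'$ side. Checking the reachability inclusion of Lemma~\ref{lem:fvs-transfer} across the finitely many combinations of multiplicities and of $J'$ (or of $S\cap V(Q_A)$) is the only genuinely tedious part, but each individual check is immediate from the explicit trees described above.
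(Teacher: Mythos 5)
Your overall strategy --- viewing Rule~\ref{r:digon:3:1} as the gadget replacement $(\{u_1,u_2\},\{y\},E_I)$ and transferring solutions in both directions via Lemma~\ref{lem:fvs-transfer} --- is the same as the paper's, and your forward direction is sound (the detour of first rerouting $S\cap V(Q_A)=\{w_1,u_3\}$ to $\{w_1,w_2\}$ inside $G$ is correct, though one can also transfer $(S\setminus\{u_3\})\cup\{w_2\}$ directly, since the reachability condition is then vacuous). The genuine gap is in the backward direction, in the case $y\in S'$ with $J'\ne\emptyset$, concretely $S'\cap V(Q_B)=\{y,u_3\}$. Your recipe then sets $S\cap V(Q_A)=\{u_3,u_2\}$ whenever $u_1w_1$ and $u_1w_2$ are simple --- which is forced when $\deg(u_2)\ge 4$, and is what your stated rule picks whenever $\deg(u_1)=3$. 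But then $Q_A-S$ contains the path $w_1u_1w_2$, so $w_1$ and $w_2$ lie in one class of $R_{Q_A-S,\{w_1,w_2,u_3\}}$, whereas in $Q_B-S'$ (with both $y$ and $u_3$ deleted) they are separated. The reachability hypothesis of Lemma~\ref{lem:fvs-transfer} fails in exactly the direction you need, and this is not a technicality: if $G$ has a $w_1$--$w_2$ path outside the gadget, then $G-S$ contains the cycle closing that path through $w_1u_1w_2$, so your $S$ is not a feedback vertex set even though $\{y,u_3\}$ extends to a valid solution of $G'$. Your escape clause is keyed to ``a residual digon would otherwise appear'', but here no digon appears ($u_1w_1$ and $u_1w_2$ are simple), so it never fires; the obstruction is a reconnecting path, not a multiedge.

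The repair is what the paper does (and what you already do on the forward side): whenever $|S'\cap\{w_1,w_2,y,u_3\}|\ge 2$, discard the local part of $S'$ wholesale and take $\{w_1,w_2\}$ locally, which turns $u_1u_2u_3u_4$ into a pendant path in $G$ and makes the reachability condition vacuous. Reserve the ``add one internal vertex chosen by the degree condition'' construction for the single remaining case $S'\cap\{w_1,w_2,y,u_3\}=\{y\}$, where $Q_B-S'$ is the spanning tree $w_1u_3w_2$ of the interface and the total-relation match you describe does go through.
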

\begin{proof}
Let $G'$ be the graph obtained from a graph $G$ by a single application of Rule~\ref{r:digon:3:1}.  
Note that $G'$ is a result of a gadget replacement $(X,Y,E_I)$ with $X=\{u_1,u_2\}$ and $Y=\{y\}$

Let $S$ be a solution of $(G,k)$. 
If $|S\cap\{w_1,w_2,u_1,u_2,u_3\}|\ge 2$ or $S\cap\{w_1,w_2,u_1,u_2,u_3\}=\{w_1\}$ we proceed as in the proof of Lemma~\ref{lem:r:digon:1+2:1}.
Otherwise, to hit the triangle $w_1u_1u_2$, $S\cap\{w_1,w_2,u_1,u_2,u_3\}$ equals either $\{u_1\}$ or $\{u_2\}$.
In both cases, $R_{G[N_G[X]]-S,\{w_1,w_2,u_3\}}$ has exactly one equivalence class $\{w_1,w_2,u_3\}$.
We observe that for $S'=S \setminus \{u_1,u_2\} \cup \{y\}$ the relation $R_{G'[N_{G'}[Y]]-S',\{w_1,w_2,u_3\}}$ has also one equivalence class, so by  Lemma~\ref{lem:fvs-transfer}, $S'$ is a solution of $(G',k)$.

Let $S'$ be a feedback vertex set of size $k$ in $G'$. 
If $|S'\cap\{w_1,w_2,y,u_3\}|\ge 2$ we proceed as in the proof of Lemma~\ref{lem:r:digon:1+2:1}.
Otherwise, $S'\cap\{w_1,w_2,u_1,u_2,u_3\}=\{y\}$.
If $\deg_G(u_1)=3$ then we put $S=S'\setminus\{y\}\cup\{u_2\}$, and otherwise $S=S'\setminus\{y\}\cup\{u_1\}$.
Note that $G[\{w_1,w_2,u_1,u_2,u_3\}]-S$ is a forest, since $\min\{\deg(u_1), \deg(u_2)\}=3$.
Moreover, $R_{G[N_G[X]]-S,\{w_1,w_2,u_3\}}$ and $R_{G'[N_{G'}[Y]]-S',\{w_1,w_2,u_3\}}$ are the same (total) relation, so Lemma~\ref{lem:fvs-transfer} applies and $(S,k)$ is a solution of $(G,k)$.
\end{proof}

\rrule{r:meta}{
Let $A \subseteq V(G)$ and let $w_1$ and $w_2$ be two vertices in $G$, $w_1,w_2\not\in A$.
If
$(i)$ no cycle in $G\setminus\{w_1,w_2\}$ intersects $A$, and
$(ii)$ there is a subgraph $Q \subseteq G[A\cup\{w_1, w_2\}]$ such for every vertex $x\in V(Q)\setminus\{w_1\}$, we have $\deg_Q(x) \le |E(Q)|-|A|-1$,
then remove $w_1$ and decrease $k$ by 1.
}

\begin{lemma}\label{lem:r:meta}
Rule~\ref{r:meta} is correct.
\end{lemma}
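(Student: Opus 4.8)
The plan is to prove the two required implications for the instances $(G,k)$ and $(G',k')=(G-w_1,\,k-1)$; conditions (a) and (c) of correctness are immediate, so only the equivalence (b) needs work. The backward implication is trivial: if $S'$ is a feedback vertex set of $G-w_1$ of size $k-1$, then $G-(S'\cup\{w_1\})=(G-w_1)-S'$ is a forest, so $S'\cup\{w_1\}$ is a feedback vertex set of $G$ of size $k$. The whole content lies in the forward implication, which I would restate as follows: from any feedback vertex set $S$ of $G$ with $|S|\le k$ I must produce a feedback vertex set of $G$ of size at most $k$ that \emph{contains} $w_1$, since deleting $w_1$ from such a set yields the desired set for $G-w_1$. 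If $w_1\in S$ already, then $S\setminus\{w_1\}$ suffices and no further work is needed, so I would assume $w_1\notin S$.

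The key step is to extract from condition (ii) that $w_1$ cannot be avoided cheaply. First I would observe that (ii) forces every single-vertex deletion other than $w_1$ to leave a cycle in $Q$: for $x\in V(Q)\setminus\{w_1\}$ the graph $Q-x$ has at most $|A|+1$ vertices (since $V(Q)\subseteq A\cup\{w_1,w_2\}$) but at least $|E(Q)|-\deg_Q(x)\ge |A|+1$ edges, which exceeds the bound $|V|-1$ valid for any acyclic (multi)graph, so $Q-x$ contains a cycle. Consequently, writing $T=S\cap V(Q)\subseteq V(Q)\setminus\{w_1\}$ and noting that $Q-T$ is acyclic because it is a subgraph of the forest $G-S$ (its edges join vertices outside $S$), one cannot have $|T|\le 1$: the case $|T|=0$ would make $Q$ itself acyclic, and $|T|=1$ would contradict the single-deletion observation. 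Hence $|S\cap V(Q)|\ge 2$, and since $V(Q)\subseteq A\cup\{w_1,w_2\}$ and $w_1\notin S$, this yields $|S\cap(A\cup\{w_1,w_2\})|\ge 2$.

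With this in hand I would perform the swap $S^{*}=(S\setminus(A\cup\{w_1,w_2\}))\cup\{w_1,w_2\}$. Its size is $|S|-|S\cap(A\cup\{w_1,w_2\})|+2\le |S|\le k$ by the previous paragraph, and it contains $w_1$. To see that $S^{*}$ is a feedback vertex set I would take any cycle $C$ of $G-S^{*}$: it avoids $w_1$ and $w_2$, so by condition (i) it cannot meet $A$, hence $C\subseteq G\setminus(A\cup\{w_1,w_2\})$; but then $C$ avoids $A\cup\{w_1,w_2\}$ as well as $S\setminus(A\cup\{w_1,w_2\})$, so it avoids $S$ entirely, contradicting that $S$ is a feedback vertex set of $G$. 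Thus $S^{*}$ works, and $S^{*}\setminus\{w_1\}$ is a feedback vertex set of $G-w_1$ of size at most $k-1$, completing the forward implication.

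The edge-counting bound and the cycle-tracking in the swap are routine; the only place demanding care is the extraction of $|S\cap V(Q)|\ge 2$ from (ii), and in particular applying the acyclic-multigraph bound $|E|\le|V|-1$ correctly in the presence of multiple edges (digons and loops both count as cycles). I expect this to be the main obstacle, together with keeping the easy case $w_1\in S$ separate so that the size inequality $|S^{*}|\le k$ is invoked only when the bound $|S\cap(A\cup\{w_1,w_2\})|\ge 2$ is actually available.
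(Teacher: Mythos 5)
Your proof is correct and follows essentially the same route as the paper: the backward direction is immediate, and the forward direction hinges on deriving $|S\cap V(Q)|\ge 2$ from condition $(ii)$ by comparing edge and vertex counts of a forest, followed by the swap to a set containing $\{w_1,w_2\}$, justified by condition $(i)$. Your per-vertex-deletion phrasing of the counting step is a clean equivalent of the paper's combined inequality (and in fact handles the $|S\cap V(Q)|=0$ subcase a bit more explicitly than the paper does).
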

\begin{proof}
Let $G'$ be the graph obtained from a graph $G$ by a single application of Rule~\ref{r:meta}, i.e., $G'=G-w_1$. 
Let $S$ be a feedback vertex set of size $k-1$ in $G'$. 
Then every cycle in $G-S$ contains $w_1$, so $S\cup\{w_1\}$ is a feedback vertex set of size $k$ in $G$.

Let $S$ be a feedback vertex set of size $k$ in $G$. 
If $w_1 \in S$, then clearly $S \setminus \{w_1\}$ is a solution of the instance $(G',k-1)$. 
Hence assume $w_1 \not\in S$. 
We claim that $|S\cap V(Q)|\ge 2$.
Assume the contrary, i.e., $|S\cap V(Q)|\le 1$.
Since $Q-S$ is a forest, 
\begin{equation}
 \label{eq:forest}
 |E(Q-S)| \le |V(Q-S)| - 1 = |V(Q)| - |S\cap V(Q)| - 1 = |A| + 1 - |S\cap V(Q)|.
\end{equation}
On the other hand, by the degree bound, and because $w_1\not\in S$ and $|S\cap V(Q)|\le 1$, 
\begin{equation}
 \label{eq:degree}
 |E(Q-S)| \ge |E(Q)| - (|E(Q)|-|A|-1)|S\cap V(Q)|.
\end{equation}
By~\eqref{eq:forest} and~\eqref{eq:degree}, $|A|+1 \ge |E(Q)| - (|E(Q)|-|A|-2)|S\cap V(Q)|$.
Since $|S\cap V(Q)|\le 1$ this implies $|A|+1 \ge |E(Q)| - (|E(Q)|-|A|-2) = |A|+2$, a contradiction.
It follows that $|S\cap V(Q)|\ge 2$.
Then $S'=S\setminus \{u,v_1,v_2,v\} \cup \{w_1,w_2\}$ is of size at most $k$.
Moreover, $S'$ is a feedback vertex set in $G$, since $S$ is a feedback vertex set and by $(i)$.
Again, this implies that $S' \setminus \{w_1\}$ is a solution of the instance $(G',k-1)$, as required. 

\end{proof}

%\captionsetup[subfloat]{labelformat=empty}
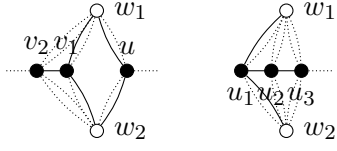
\begin{figure}[t]
\centering
\begin{tikzpicture}[scale=\scalefactor]
\tikzstyle{whitenode}=[draw,circle,fill=white,minimum size=5pt,inner sep=0pt]
\tikzstyle{blacknode}=[draw,circle,fill=black,minimum size=5pt,inner sep=0pt]
\tikzstyle{texte}=[circle,minimum size=5pt,inner sep=0pt]
\tikzstyle{innerWhite} = [semithick, white,line width=0.5pt]
\draw (0,0) node[blacknode] [label=90:$v_1$] (v1) {}
-- ++(180:0.5cm) node[blacknode] [label=90:$v_2$] (v2) {}
 ++(180:0.7cm) node (v3) {};

\draw[densely dotted] (v2) edge node {} (v3);

\draw (1,0) node[blacknode] [label=90:$u$] (u1) {}
 ++(0:0.7cm) node (u2) {};
\draw[densely dotted] (u1) edge node {} (u2);

\draw (0.5,1) node[whitenode] [label=right:$w_1$] (w1) {}; 
\draw (0.5,-1) node[whitenode] [label=right:$w_2$] (w2) {}; 

\draw (w1) edge [bend left=10,densely dotted] node {} (u1);
\draw (w1) edge [bend right=10] node {} (u1);
\draw (w2) edge [bend left=10,densely dotted] node {} (u1);
\draw (w2) edge [bend right=10] node {} (u1);
\draw (w1) edge [bend left=10,densely dotted] node {} (v1);
\draw (w1) edge [bend right=10] node {} (v1);
\draw (w2) edge [bend left=10,densely dotted] node {} (v1);
\draw (w2) edge [bend right=10] node {} (v1);

\draw (v2) edge [bend right=10,densely dotted] node {} (w1);
\draw (v2) edge [bend left=10,densely dotted] node {} (w1);
\draw (v2) edge [bend left=10,densely dotted] node {} (w2);
\draw (v2) edge [bend right=10, densely dotted] node {} (w2);

\end{tikzpicture}
\quad
\begin{tikzpicture}[scale=\scalefactor]
\tikzstyle{whitenode}=[draw,circle,fill=white,minimum size=5pt,inner sep=0pt]
\tikzstyle{blacknode}=[draw,circle,fill=black,minimum size=5pt,inner sep=0pt]
\tikzstyle{texte}=[circle,minimum size=5pt,inner sep=0pt]
\tikzstyle{innerWhite} = [semithick, white,line width=0.5pt]
\draw (0,0) node[blacknode] [label=-90:$u_1$] (u1) {}
-- ++(0:0.5cm) node[blacknode] [label=-90:$u_2$] (u2) {}
-- ++(0:0.5cm) node[blacknode] [label=-90:$u_3$] (u3) {}
 ++(0:0.7cm) node (u4) {};

\draw[densely dotted] (u3) edge node {} (u4);

\draw (0.75,1) node[whitenode] [label=right:$w_1$] (w1) {}; 
\draw (0.75,-1) node[whitenode] [label=right:$w_2$] (w2) {}; 

\draw (w1) edge [bend left=10,densely dotted] node {} (u1);
\draw (w1) edge [bend right=10] node {} (u1);
\draw (w2) edge [bend left=10,densely dotted] node {} (u1);
\draw (w2) edge [bend right=10] node {} (u1);
\draw (w1) edge [bend left=10,densely dotted] node {} (u2);
\draw (w1) edge [bend right=10,densely dotted] node {} (u2);
\draw (w2) edge [bend left=10,densely dotted] node {} (u2);
\draw (w2) edge [bend right=10,densely dotted] node {} (u2);
\draw (w1) edge [bend left=10,densely dotted] node {} (u3);
\draw (w1) edge [bend right=10,densely dotted] node {} (u3);
\draw (w2) edge [bend left=10,densely dotted] node {} (u3);
\draw (w2) edge [bend right=10,densely dotted] node {} (u3);

\end{tikzpicture}
\caption{Configurations in lemmas~\ref{lem:digon:1+2:2} and~\ref{lem:digon:3:2}.}
\label{fig:rules-2}
\end{figure}
%\captionsetup[subfloat]{labelformat=parens}

Rule~\ref{r:meta} is not used directly in our algorithm, because it seems impossible to detect it in $O(n)$ time.
However, to get the claimed kernel size we need just two special cases of Rule~\ref{r:meta}, which are stated in lemmas ~\ref{lem:digon:1+2:2} and \ref{lem:digon:3:2} below.

\begin{lemma}
\label{lem:digon:1+2:2} 
Assume there are five vertices $v_1$, $v_2$, $u$, $w_1$, $w_2$ such that 
$N(v_1)=\{v_2,w_1,w_2\}$, $\{w_1,w_2\}\subseteq N(u)$,  
there is at most one edge incident to $v_2$ and a vertex outside $\{w_1,w_2,v_1\}$, and
there is at most one edge incident to $u$ and a vertex outside $\{w_1,w_2\}$.
Then Rule~\ref{r:meta} applies.
\end{lemma}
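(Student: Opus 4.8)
The plan is to realise Lemma~\ref{lem:digon:1+2:2} as an instance of Rule~\ref{r:meta}, so all I have to do is exhibit the triple $(A,Q,E_I$-free data$)$ and verify hypotheses $(i)$ and $(ii)$. I would set $A=\{v_1,v_2,u\}$ and keep $w_1,w_2$ as the two distinguished vertices, with $w_1$ the vertex to delete. Since $v_1v_2$ and the edges leaving $v_2$ are simple (the configuration of Figure~\ref{fig:rules-2}, together with the ``at most one outside edge'' hypothesis), the degree hypotheses force $v_2$ to have a neighbour in $\{w_1,w_2\}$: otherwise $v_2$ would have at most two neighbours ($v_1$ and one outside vertex), contradicting irreducibility under Rule~\ref{r:deg2}. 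I would therefore first relabel, if necessary, so that $w_1\in N(v_2)$; which of the two $w_i$ is deleted is exactly what will make $(ii)$ go through.

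For hypothesis $(i)$ I would argue directly in $G\setminus\{w_1,w_2\}$. Because $N(v_1)=\{v_2,w_1,w_2\}$, the vertex $v_1$ keeps only its edge to $v_2$ and becomes a leaf; by the outside-edge bounds $u$ retains at most one incident edge and $v_2$ retains only the edge to the leaf $v_1$ together with at most one further edge. Hence in $G\setminus\{w_1,w_2\}$ every vertex of $A$ is either a leaf or has a leaf as one of its at most two neighbours, so no cycle can pass through $A$, which is precisely $(i)$.

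For hypothesis $(ii)$ I would take $Q=G[A\cup\{w_1,w_2\}]=G[\{v_1,v_2,u,w_1,w_2\}]$, the full induced multisubgraph; this is the best possible choice, since for a fixed $x$ the quantity $|E(Q)|-\deg_Q(x)$ (the number of edges of $Q$ avoiding $x$) only grows as edges are added. With $|A|=3$ the requirement $\deg_Q(x)\le|E(Q)|-|A|-1$ for every $x\in V(Q)\setminus\{w_1\}$ becomes: for each $x\in\{v_1,v_2,u,w_2\}$ the number of edges of $Q$ not incident to $x$ is at least $4$. For $x=v_2$ this is automatic, as the four edges $v_1w_1,v_1w_2,uw_1,uw_2$ guaranteed by the neighbourhood assumptions all avoid $v_2$. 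The three remaining inequalities, for $x\in\{v_1,u,w_2\}$, translate into lower bounds on the multiplicity sums of the edges $uw_1,uw_2,v_2w_1,v_2w_2$ (avoiding $v_1$), of $v_1v_2,v_1w_1,v_1w_2,v_2w_1,v_2w_2$ (avoiding $u$), and of $v_1v_2,v_1w_1,uw_1,v_2w_1$ (avoiding $w_2$).

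The main obstacle is verifying these three bounds, and here irreducibility under the earlier rules is indispensable: in a sparse multigraph the bounds can genuinely fail, but each failing pattern is removed by a preceding rule and so does not occur. With $w_1\in N(v_2)$ the counts avoiding $w_2$ and avoiding $u$ are immediately at least $4$ (they contain $v_1v_2,v_1w_1,uw_1,v_2w_1$ and $v_1v_2,v_1w_1,v_1w_2,v_2w_1$, respectively). The only delicate inequality is the one avoiding $v_1$, i.e.\ $\deg_Q(uw_1)+\deg_Q(uw_2)+\deg_Q(v_2w_1)+\deg_Q(v_2w_2)\ge 4$ in multiplicity terms; the sole way it can fail is to equal $3$, which (since each of $uw_1,uw_2,v_2w_1$ is present) forces $uw_1,uw_2$ simple, $v_2$ adjacent to $w_1$ only, and $\deg(u)=3$ with $N(u)=\{w_1,w_2,u_2\}$, and then $\deg(v_2)=3$ with $N(v_2)=\{w_1,v_1,v_3\}$. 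This is exactly the configuration that Rule~\ref{r:digon:1+2:1} would already have reduced, so it cannot arise. Once this tight case is excluded, all three counts are at least $4$, condition $(ii)$ holds, and Rule~\ref{r:meta} applies. I expect this multiplicity case analysis, and the correct choice of which $w_i$ to delete, to be the delicate part, with $(i)$ being routine.
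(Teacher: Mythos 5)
Your proof is correct and follows essentially the same route as the paper: the same choice $A=\{u,v_1,v_2\}$, the same relabelling so that $v_2w_1\in E$, the same four guaranteed edges handling $v_2$, $u$ and $w_2$, and the same key step of excluding the one tight case (namely $uw_1,uw_2,v_2w_1$ all simple and $v_2w_2\notin E$) via the inapplicability of Rule~\ref{r:digon:1+2:1}. The only real difference is cosmetic but pleasant: you take $Q$ to be the full induced multigraph and justify this by the monotonicity of $|E(Q)|-\deg_Q(x)$ under adding edges, whereas the paper hand-picks a seven-edge subgraph to keep all degrees at $3$; both arguments also share the reliance on the figure's implicit assumption that $v_1v_2$ is simple (needed for condition $(i)$ and for $\deg(v_2)=3$ in the tight case), which you at least make explicit.
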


\begin{proof}
It is easy to see that condition $(i)$ of Rule~\ref{r:meta} is satisfied. We proceed to condition $(ii)$.
Since Rule~\ref{r:deg2} does not apply, $v_2$ is adjacent to $w_1$ or $w_2$; by symmetry assume the former.
Let $A=\{u,v_1,v_2\}$.
We build $E(Q)$ as follows.
We start with $E(Q)=\{v_2w_2,v_2v_1,v_1w_1,v_1w_2,uw_1,uw_2\}$.
Since Rule~\ref{r:digon:1+2:1} does not apply, $v_2w_2 \in E$ or one of $\{v_2w_1,uw_1,uw_2\}$ is a double edge.
Hence, we add $v_2w_2$ or another copy of one of $\{v_2w_1,uw_1,uw_2\}$ to $E(Q)$, respectively.
Note that for every $x\in V(Q)\setminus\{w_1\}$ we have $\deg_Q(x)\le 3 = |E(Q)|-|A|-1$, as required.
\end{proof}

\begin{lemma}
\label{lem:digon:3:2} 
Assume there are five vertices $u_1$, $u_2$, $u_3$, $w_1$, $w_2$ such that  $N(u_1)=\{w_1,w_2,u_2\}$, $\{u_1,u_3\}\subseteq N(u_2)\subseteq \{w_1,w_2,u_1,u_3\}$, and there is at most one edge incident to $u_3$ and a vertex outside $\{w_1,w_2,u_2\}$.
Moreover, the edges $v_1v_2$ and $v_2v_3$ are simple.
Then Rule~\ref{r:meta} applies.
%If $|N(w_1)\cap \{v,v_1,v_{2}\}|\ge|N(w_2)\cap \{v,v_1,v_{2}\}|$, then remove $w_1$ and decrease $k$ by one.}
\end{lemma}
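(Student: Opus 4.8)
The plan is to make Rule~\ref{r:meta} fire with $A=\{u_1,u_2,u_3\}$, letting $w_1,w_2$ be the two distinguished vertices of the rule and postponing until the end the choice of which of them is actually deleted. Condition $(i)$ is the easy half. In $G\setminus\{w_1,w_2\}$ the vertex $u_1$ is adjacent only to $u_2$ (since $N(u_1)=\{w_1,w_2,u_2\}$), $u_2$ only to $u_1$ and $u_3$ (since $N(u_2)\subseteq\{w_1,w_2,u_1,u_3\}$), and $u_3$ to $u_2$ plus at most one vertex outside $\{w_1,w_2,u_2\}$ by hypothesis; as the two path edges $u_1u_2$ and $u_2u_3$ are simple (the simplicity hypothesis of the lemma, read on the path vertices), $G[A]$ is just the path $u_1u_2u_3$ and $A$ is joined to the rest of $G\setminus\{w_1,w_2\}$ by at most a single edge. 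A cycle meeting $A$ would have to enter and leave $A$, using two such joining edges, so no cycle of $G\setminus\{w_1,w_2\}$ meets $A$.

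For condition $(ii)$ I would mirror the construction used for Lemma~\ref{lem:digon:1+2:2}: take $V(Q)=\{u_1,u_2,u_3,w_1,w_2\}$ (so $|V(Q)|=|A|+2$) and let $Q$ consist of the edges of $G$ joining $\{u_1,u_2,u_3\}$ to one another or to $\{w_1,w_2\}$. The edges $u_1w_1,u_1w_2,u_1u_2,u_2u_3$ are present by hypothesis, and because Rule~\ref{r:deg2} does not apply while $u_1u_2,u_2u_3$ are simple, each of $u_2,u_3$ sends at least one edge into $\{w_1,w_2\}$. Let $m\ge 2$, $e_2\ge 1$, $e_3\ge 1$ be the numbers of edges from $u_1,u_2,u_3$ respectively to $\{w_1,w_2\}$ (all at most two by Rule~\ref{r:triple}). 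Then $|E(Q)|=m+e_2+e_3+2$ and $|E(Q)|-|A|-1=m+e_2+e_3-2$, and I must bound the $Q$-degrees of $u_1,u_2,u_3$ and of the surviving distinguished vertex. Computing these degrees, the bound for $u_3$ is automatic from $m\ge 2$ and $e_2\ge 1$; deleting whichever of $w_1,w_2$ receives at least one edge from $\{u_2,u_3\}$ (one of them must, since $e_2+e_3\ge 2$) makes the bound for the surviving vertex automatic as well. What is left are exactly the two inequalities $e_2+e_3\ge 3$ (the bound for $u_1$) and $m+e_3\ge 4$ (the bound for $u_2$).

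The crux, and the only real difficulty, is to rule out failure of these two inequalities in an instance on which the earlier rules are exhausted; this is exactly the place where, as in Lemma~\ref{lem:digon:1+2:2}, non-applicability of preceding rules is invoked to force the missing edges. If $e_3\ge 2$ both inequalities hold at once, so the critical regime is $e_3=1$, in which $u_3$ also carries its single external edge and $\deg(u_3)=3$. There a failure of the first inequality forces $e_2=e_3=1$, and a failure of the second forces $m=2$ and $e_2\ge 2$. I would settle all of these by a short case analysis according to whether $u_2$ and $u_3$ send their edges to the same or to different vertices of $\{w_1,w_2\}$, and whether $u_2$ (or $u_1$) has exactly three neighbours joined along simple path edges. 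When $u_2$ and $u_3$ attach to opposite vertices of $\{w_1,w_2\}$ with $u_3$ pendant to the outside, the five vertices are precisely the trigger of Rule~\ref{r:digon:3:1}. When instead some $w\in\{w_1,w_2\}$ is adjacent to all of $u_1,u_2,u_3$, or $u_2$ is adjacent to both $w_1$ and $w_2$, that vertex $w$ (respectively $u_2$) is adjacent to the two neighbours of a degree-$3$ vertex whose incident path edges are simple, so Rule~\ref{r:gamma} applies to $u_2$ (respectively to $u_1$). In every subcase an earlier rule would already have fired, contradicting the assumption that the instance is reduced; hence the two inequalities hold, condition $(ii)$ is met, and Rule~\ref{r:meta} applies. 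The part needing the most care is bookkeeping the edge multiplicities and checking that the simplicity requirement of Rule~\ref{r:gamma} on the two path edges is always available, which is guaranteed precisely by the hypothesis that $u_1u_2$ and $u_2u_3$ are simple.
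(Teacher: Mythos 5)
Your proof is correct and follows essentially the same route as the paper's: both verify condition $(ii)$ of Rule~\ref{r:meta} with $A=\{u_1,u_2,u_3\}$ and a subgraph $Q$ spanning $A\cup\{w_1,w_2\}$, and both extract the needed extra edges from the non-applicability of Rules~\ref{r:deg2}, \ref{r:gamma} and~\ref{r:digon:3:1} via the same case split on where $u_2$ and $u_3$ attach to $\{w_1,w_2\}$. The only differences are bookkeeping: the paper hand-picks a sub-multiset of edges so the degree bound is met with equality, whereas you take all edges incident to $A$ and reduce the check to the two inequalities $e_2+e_3\ge 3$ and $m+e_3\ge 4$ (and you allow exempting $w_2$ rather than $w_1$ from the degree bound, which the rule's symmetry permits); I verified your case analysis of the failure regime and it is complete.
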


\begin{proof}
It is easy to see that condition $(i)$ of Rule~\ref{r:meta} is satisfied. We proceed to condition $(ii)$.
Let $A=\{u_1,u_2,u_3\}$.
We build $E(Q)$ as follows.
We start with $E(Q)=\{u_1w_1,u_1w_2,u_1u_2,u_2u_3\}$.
There are some cases to consider.
Since Rule~\ref{r:deg2} does not apply, $u_2w_1\in E$ or $u_2w_2\in E$.

\mycase{1:} $u_2w_1, u_2w_2\in E$. 
Then, since Rule~\ref{r:gamma} does not apply to the $w_1u_1w_2u_2$ cycle, $u_1w_1$ or $u_1w_2$ is a double edge.
Moreover, since Rule~\ref{r:deg2} does not apply, $u_3w_1\in E$ or $u_3w_2\in E$.
We add to $E(Q)$ edges $u_2w_1$, $u_2w_2$, either $u_3w_1$ or $u_3w_2$ (but not both), and the second copy of either $u_1w_1$ or $u_1w_2$ (but not both).
Then $|E(Q)|=8$ and $\max_{x\in V(Q)\setminus\{w_1\}}\deg_Q(x) = 4 = |E(Q)|-|A|-1$, so $(ii)$ holds.

\mycase{2:} Exactly one of $u_2w_1$ and $u_2w_2$ is an edge; by symmetry assume $u_2w_1\in E$ and $u_2w_2\not\in E$. 
Since Rule~\ref{r:gamma} does not apply, $u_3w_1\not\in E$.
And then since Rule~\ref{r:deg2} does not apply, $u_3w_2\in E$.

\mycase{2.1:} $u_3w_2$ is a double edge. 
We add to $E(Q)$ edge $u_2w_1$, and both copies of $u_3w_2$.
Then $|E(Q)|=7$ and $\max_{x\in V(Q)}\deg_Q(x) = 3 = |E(Q)|-|A|-1$, so $(ii)$ holds.

\mycase{2.2:} $u_3w_2$ is a simple edge. 
Since Rule~\ref{r:digon:3:1} does not apply, $\deg_G(u_1)\ge 4$ and $\deg_G(u_2)\ge 4$.
We add to $E(Q)$ edges $u_2w_1$ and $u_3w_2$, exactly one edge incident to $u_1$ which is not yet in $E(Q)$ and exactly one edge incident to $u_2$ which is not yet in $E(Q)$.
Then $|E(Q)|=8$ and $\max_{x\in V(Q)\setminus\{w_1\}}\deg_Q(x) = 4 = |E(Q)|-|A|-1$, so $(ii)$ holds.
\end{proof}

% We need one more rule, which is used for excluding induced paths with six internal vertices such that the internal vertices have only two neighbors different than the endpoints of the path. The rule comes from~\cite{ak12}. In fact, in Section~\ref{sec:14k} we present a (very complex) reduction rule which deals with such a path with only five internal vertices. 
% The description of this rule is deferred because we want to stress that even using the simple rules above and a relatively short analysis one is able to get a $16k$-kernel.

The following rule was shown to be correct by Abu-Khzam and Khuzam in~\cite{ak12}. 

\rrule{r:indpath6}{Assume there is an induced path with endpoints $u$ and $v$ and with six internal vertices $v_1,\ldots,v_{6}$ such that for some vertices $w_1$, $w_2$ outside the path $N(\{v_1,\ldots,v_{6}\})\setminus\{u,v\} = \{w_1,w_2\}$.
If $|N(w_1)\cap \{v_1,\ldots,v_{6}\}|\ge|N(w_2)\cap \{v_1,\ldots,v_{6}\}|$, then remove $w_1$ and decrease $k$ by one.}

\medskip 

In~\cite{ak12} it was assumed that when Rule~\ref{r:indpath6} described above is applied, $G$ does not contain an induced path $v_1,\ldots,v_5$ such that for some vertex $w$, we have $N(v_2,v_3,v_4)\setminus\{v_1,v_5\} = \{w\}$.  In our algorithm this is guaranteed by Rule~\ref{r:gamma} (slightly more general than their Rule 6). We are able to extend Rule~\ref{r:indpath6} as follows.

\newcounter{indpath5}
\setcounter{indpath5}{\value{theorem}}
\begin{lemma}
 \label{lem:indpath5}
 Assume there is an induced path with endpoints $u$ and $v$ and with five internal vertices $v_1,\ldots,v_5$ such that for some vertices $w_1$, $w_2$ outside the path $N(\{v_1,\ldots,v_5\})\setminus\{u,v\} = \{w_1,w_2\}$.
 Then there is an instance $(G',k')$ with $|V(G')|<|V(G)|$ such that $(G,k)$ is a yes-instance iff $(G',k')$ is a yes-instance and $k'\le k$.
\end{lemma}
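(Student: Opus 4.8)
The plan is to mirror the exchange argument behind Rule~\ref{r:indpath6}, but tuned to five internal vertices, and to fall back on the special cases of Rule~\ref{r:meta} in the few attachment patterns where the direct argument becomes too tight. Throughout I assume, as usual, that none of the earlier rules applies.

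First I would pin down the local structure. Since Rule~\ref{r:deg2} does not apply, every $v_i$ has degree at least three, and as its only non-path neighbours lie in $\{w_1,w_2\}$, each $v_i$ is adjacent to $w_1$, to $w_2$, or to both; by Rule~\ref{r:triple} no edge has multiplicity exceeding two, and the path is chordless by hypothesis. The key structural restriction is the one the paper has already extracted from Rule~\ref{r:gamma}: no three consecutive internal vertices can share a single common outside neighbour, so among $v_1,\dots,v_5$ there is no run of three consecutive vertices all attached to $w_1$ only (or all to $w_2$ only). Recording for each $v_i$ whether it attaches to $w_1$, to $w_2$, or to both yields a short word over a three-letter alphabet with these forbidden runs, and I would organise the remaining argument as a case analysis on this word. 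Relabelling if necessary, assume $w_1$ has at least as many internal neighbours as $w_2$; since $\sum_i |N(v_i)\cap\{w_1,w_2\}|\ge 5$, this forces $w_1$ to have at least three internal neighbours.

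The main step is to prove that $(G,k)$ is equivalent to $(G-w_1,k-1)$, which supplies the desired instance $(G',k')=(G-w_1,k-1)$ with one fewer vertex and $k'\le k$. The direction ``a solution of size $k-1$ for $G-w_1$ gives one of size $k$ for $G$'' is immediate, since reinserting $w_1$ destroys every cycle through it. For the converse I would take a size-$k$ feedback vertex set $S$ of $G$ and, when $w_1\notin S$, perform an exchange: the internal neighbours of $w_1$ force $S$ to contain at least one internal vertex on the path segment between its extreme $w_1$-neighbours, and I would reroute that deletion onto $w_1$. The correctness of the rerouting I would certify through Lemma~\ref{lem:fvs-transfer}, applied to the gadget replacement that deletes $w_1$, checking that the resulting set still leaves a forest --- in particular that no cycle closes through $w_2$ or through the external $u$--$v$ route around the segment. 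This produces a size-$\le k$ solution containing $w_1$, hence a size-$(k-1)$ solution of $G-w_1$.

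The delicate point, and the reason five internal vertices need more care than the six of Rule~\ref{r:indpath6}, is that when $w_1$ has \emph{exactly} three internal neighbours the exchange leaves essentially no slack, and for a handful of attachment words it can genuinely fail to preserve acyclicity after the swap. I expect these residual words to be exactly those in which one end of the segment ``dead-ends'', i.e.\ the extreme internal vertices see only $\{w_1,w_2\}$ and a single further vertex, so that deleting $w_1$ and $w_2$ already makes that end acyclic. For each such word I would verify that the local picture matches the hypotheses of Lemma~\ref{lem:digon:1+2:2} or Lemma~\ref{lem:digon:3:2}, so that Rule~\ref{r:meta} removes a vertex and decreases $k$, again yielding a smaller equivalent instance; the work there is to check condition $(i)$ of Rule~\ref{r:meta} (the dead-end just described) and to exhibit the witnessing subgraph $Q$. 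The main obstacle is thus concentrated in the tight case analysis: proving that the exchange preserves a forest for every non-degenerate word, and confirming that the remaining degenerate words are precisely the ones covered by the two special cases of Rule~\ref{r:meta}.
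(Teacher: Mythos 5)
Your reduction to ``delete $w_1$ and decrement $k$, with a fallback to Rule~\ref{r:meta} for the tight attachment patterns'' has a genuine gap, and it is located exactly where you suspect the difficulty lies. The problematic patterns are not ones where an end of the segment ``dead-ends'': both extreme internal vertices are adjacent to $u_0$ and $v_0$ respectively, which connect $Q$ to the rest of the graph, so a cycle through the whole path avoiding $w_1$ and $w_2$ may exist (via a $u_0$--$v_0$ path outside $Q$), and condition $(i)$ of Rule~\ref{r:meta} fails for $A=\{v_1,\dots,v_5\}$. Lemmas~\ref{lem:digon:1+2:2} and~\ref{lem:digon:3:2} concern a different local structure (short chains hanging off $L_2$-vertices, with essentially no connection to the rest of the graph except through $w_1,w_2$) and cannot absorb these cases. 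More fundamentally, the fallback is circular: the conclusion of Rule~\ref{r:meta} is again ``remove $w_1$ and decrease $k$'', which is precisely the operation that is incorrect for those patterns. They are incorrect because there the only size-$2$ local hit sets lie entirely on the path (e.g.\ $\{x_1,x_3\}$ in the paper's notation) and leave a reachability pattern on the boundary $\{u,v,w_1,w_2\}$ --- such as $\{u,w_1,w_2\}$ connected with $v$ separate, or $\{u,w_2\}$ and $\{v,w_1\}$ as the two classes --- that no equal-size set containing $w_1$ can reproduce; hence no exchange argument can certify the existence of a size-$k$ solution through $w_1$.

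The paper's resolution is structurally different: it enumerates all feedback vertex sets of $Q=G[V(P)\cup\{w_1,w_2\}]$ and, in the cases where every such set is ``exchangeable'' (size $\ge 3$, or leaves a $(u,v)$-path, or contains $w_1$ or $w_2$), it does delete $w_1$ (Rules~\ref{r:del-w1} and~\ref{r:n(w2)=2}), matching your main step. But in the remaining cases it does \emph{not} delete $w_1$; instead it replaces the whole of $Q$ by one of three smaller gadgets (Rules~\ref{r:gadget:u}--\ref{r:gadget:4}) engineered so that their feedback vertex sets realize exactly the same boundary reachability relations at the same cost, with correctness certified via Lemma~\ref{lem:fvs-transfer} in both directions. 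Your proposal is missing this gadget-replacement idea entirely, and without it the residual attachment words cannot be handled.
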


The proof of Lemma~\ref{lem:indpath5} involves five more rules and is quite technical; we defer it to Section~\ref{sec:5path}.
We stress here that Lemma~\ref{lem:indpath5} is not crucial for getting a substantial improvement of the kernel size.
Indeed, if one uses Rule~\ref{r:indpath6} instead of Lemma~\ref{lem:indpath5}, the resulting kernel is of size at most $15k-28$ (see Section~\ref{sec:analysis}). 
Let us also remark that by the analysis in Section~\ref{sec:analysis}, if someone manages to exclude paths described in Lemma~\ref{lem:indpath5} with only {\em four} internal vertices, the kernel size decreases further to $11k-20$.

To complete the algorithm we need a final rejecting rule which is applied when the resulting graph is too big. In Section~\ref{sec:analysis} we prove that Rule~\ref{r:reject} is correct.

\rrule{r:reject}{If the graph has more than $13k-24$ vertices, return a trivial no-instance (conclude that there is no feedback vertex set of size $k$ in $G$).} 

\section{The size bound}
\label{sec:analysis}

In this section we prove the following theorem.

\begin{theorem}
\label{thm:size-bound}
Let $G$ be a planar graph such that rules 1--\ref{r:digon:3:1} do not apply and $G$ does not contain the configurations described in lemmas~\ref{lem:digon:1+2:2} and \ref{lem:digon:3:2}. 
Assume also that for every induced path $P$ with endpoints $u$ and $v$ and with $\ell$ internal vertices $v_1$, \ldots, $v_{\ell}$ the internal vertices have at least three neighbors outside the path, i.e., $|N(\{v_1,\ldots,v_{\ell}\})\setminus\{u,v\}|\ge 3$.
If there is a feedback vertex set of size $k$ in $G$, then $|V(G)|\le (2\ell+4)k- (4\ell+6)$.
\end{theorem}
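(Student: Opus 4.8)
The plan is to prove this by a region-decomposition argument. Let $S$ be a feedback vertex set of size $k$, so $F = G - S$ is a forest. First I would fix a plane embedding of $G$ and consider the graph $G$ together with its faces. The key object will be a plane graph $D$ on the vertex set $S$ (or closely related to it) whose faces serve as the ``regions''. Since $|S| = k$ and $G$ is planar and reduced, I expect $D$ to have $O(k)$ faces, and the crux will be showing that each face of $D$ contains a number of internal (forest) vertices that is linear in the length of the bounding facial walk, with the path-length parameter $\ell$ controlling the proportionality constant.

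\textbf{Setting up the decomposition.} First I would bound the size of $F = G - S$ by analysing how the forest vertices attach to $S$. Because rules~\ref{r:deg1} and~\ref{r:deg2} do not apply, every vertex of $F$ has degree at least $3$ in $G$, and a vertex of $F$ whose all neighbours lie in $F$ would create a degree-$\geq 3$ forest structure; the forest $F$ has at most $|F|-1$ edges, so many of the $\geq 3|F|/2$ edge-endpoints at $F$-vertices must go to $S$. I would classify $F$-vertices by the number of neighbours they have in $S$: call a vertex of $F$ a \emph{junction} if it has $\geq 3$ neighbours in $S$ (counting multiplicity appropriately) or degree $\geq 3$ within $F$, and otherwise it lies on an \emph{$F$-path} whose internal vertices each have exactly the minimum attachment to $S$. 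The path hypothesis of the theorem is exactly what caps the length of such maximal paths: any induced path with $\ell$ internal vertices has at least three neighbours outside the path in $S$, so no long path can attach to only two vertices of $S$. This is what lets me charge path-vertices to short regions.

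\textbf{Counting via Euler's formula.} I would build an auxiliary plane bipartite-like graph $H$ whose vertices are $S$ together with the junction vertices of $F$, and whose edges record adjacencies through $F$-paths, each such path contracted to a single edge of $H$. By planarity and Euler's formula, $|E(H)| \leq 3|V(H)| - 6$, and the faces of $H$ are the regions. Each region of $H$ then accounts for a bounded number of junction vertices plus the $F$-path vertices strung along its boundary, and because every $F$-path has at most $\ell$ internal vertices, the vertices of $G$ inside a region are bounded linearly in the length of the region's facial walk. Summing the facial-walk lengths gives $2|E(H)|$, which feeds back into a bound of the form $|V(G)| \leq |S| + (\text{junctions}) + \ell \cdot (\text{number of }F\text{-paths})$, and the number of $F$-paths is $O(|E(H)|) = O(k)$. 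Assembling the constants carefully should yield $(2\ell+4)k - (4\ell+6)$; I would reserve the additive $-(4\ell+6)$ term to the base cases of Euler's formula (the ``$-6$'' and the handling of the outer face and low-degree vertices of $H$).

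\textbf{Main obstacle.} The hard part will be making the region decomposition honest for a general plane \emph{multigraph} with double edges, where the usual tree/path structure of $F$ interacts with the digon-configurations forbidden by lemmas~\ref{lem:digon:1+2:2} and~\ref{lem:digon:3:2}. Specifically, I expect the delicate step to be showing that no region can contain ``too many'' junction vertices relative to its boundary length, and that the contracted graph $H$ remains simple enough (after discarding multiplicities and degenerate faces) to apply the planar edge bound with the right constant. Handling the boundary faces, vertices of $S$ of small degree in $H$, and paths that touch the same vertex of $S$ twice will require careful bookkeeping, and I anticipate that getting the constant in front of $k$ exactly right — rather than merely $O(k)$ — is where the bulk of the technical care goes, since the whole point of the paper is the tight constant.
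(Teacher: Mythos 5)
Your overall strategy --- an auxiliary plane graph on the solution set $S$, Euler's formula, and a per-region count linear in the facial walk length --- is exactly the paper's, but as written the proposal defers every step that actually produces the constant, and two of those steps contain genuine gaps rather than mere bookkeeping. First, the per-region bound is asserted, not proved. The paper's key lemma (Lemma~\ref{cl:df2}) shows that each face $f$ of the solution graph contains at most $d(f)-2$ junction-type objects (forest vertices of tree-degree at least three, leaves with at least three solution neighbours, and contracted chains with at least three solution neighbours). Proving this requires first passing to a \emph{maximal} supergraph (so that every such object really sends three edges to the boundary of its face and the objects inside a face form a single tree) and then an induction on $d(f)$; without it you only get that each region contains a bounded number of junctions per boundary edge with an unspecified constant, which is not enough for $(2\ell+4)k-(4\ell+6)$. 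Relatedly, your appeal to $|E(H)|\le 3|V(H)|-6$ is circular if the junctions are themselves vertices of $H$: you would be bounding $|E(H)|$ in terms of the number of junctions, which is one of the quantities you are trying to bound. (It is also false as stated, since the relevant auxiliary graph is a multigraph with digons; the paper has to treat the length-two faces $F_{S,2}$ separately.)

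Second, you never use the hypotheses that rules~\ref{r:digon:1+2:1} and~\ref{r:digon:3:1} do not apply and that the configurations of lemmas~\ref{lem:digon:1+2:2} and~\ref{lem:digon:3:2} are absent, and these are load-bearing for the tight constant. The induced-path hypothesis only caps a chain of forest vertices with two solution neighbours at $\ell-1$ vertices, and Lemma~\ref{lem:prop-G}$(ii)$ still allows two such chains per pair of solution vertices; without the digon hypotheses each edge of the solution graph could therefore carry up to $2(\ell-1)$ forest vertices, which destroys the claimed bound. The paper uses exactly those hypotheses to prove $|V(CL_2)|\le 2(|E_S|-|F_{S,2}|)$ (Lemma~\ref{lem:L23E}), i.e., at most two such vertices per solution pair. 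You also need connectivity of the auxiliary graph before invoking Euler's formula with the additive ``$-2$''; the paper obtains this from the same maximality argument. So the plan is sound and matches the paper's route, but the proof of the stated bound is not yet there: the maximality construction, the inductive per-face lemma, and the quantitative use of the digon rules all still have to be supplied.
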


Let $S$ be a feedback vertex set of size $k$ in $G$ (i.e., a ``solution''), and let $F$ be the forest induced by $V(G)\setminus S$.
Denote the set of vertices of $F$ by $V_F=V(G)\setminus S$.
We call the vertices in $S$ {\em solution vertices} and the vertices in $V_F$ {\em forest vertices}. 

%NOT NEEDED!
% Note that we can assume that $G$ is connected, because if it has at least two connected components $G_1,\ldots,G_q$, 
% then for every $i=1,\ldots,q$ we define $k_i=|V(G_i)\cap S|$ and we apply~\ref{thm:size-bound} in $G_i$ getting
% $|V(G_i)|\le (2\ell+3)k_i-2\ell-4$, what gives the desired bound in $G$.

\heading{A partition of $V_F$}
Now we define some subsets of $V_F$.
Let $I_{2}, I_{3^+} \subseteq V_F$ denote the vertices whose degree in $F$ is two or at least three, respectively.
The leaves of $F$ are further partitioned into two subsets.
Let $L_2$ and $L_{3^+}$ be the leaves of $F$ that have two or at least three solution neighbors, respectively. 
By rules~\ref{r:deg1} and~\ref{r:deg2} all the vertices in $G$ have degree at least $3$.
Hence, if a leaf of $F$ has fewer than two solution neighbors, Rule~\ref{r:deg3-double} or Rule~\ref{r:triple} applies. 
It follows that every leaf of $F$ belongs to $L_2 \cup L_{3^+}$. 
This proves claim $(i)$ of Lemma~\ref{lem:prop-G} below. 
\begin{lemma}
\label{lem:prop-G}
Graph $G$ satisfies the following properties.
 \begin{enumerate}[$(i)$]
  \item The sets $I_2$, $I_{3^+}$, $L_2$, $L_{3^+}$ form a partition of $V_F$. 
  \item For every pair $u$, $v$ of solution vertices there are at most two vertices $x,y\in L_2$ such that $N(x)\cap S = N(y)\cap S = \{u,v\}$.
  \item Every vertex of $G$ is of degree at least three.
  \item Every face of $G$ is of length at least two.
 \end{enumerate}
\end{lemma} 

Claim $(ii)$ follows from the fact that Rule~\ref{r:3deg3} does not apply to $G$.
Claim $(iii)$ follows because rules~\ref{r:deg1} and~\ref{r:deg2} do not apply to $G$ and Claim $(iv)$ by Rule~\ref{r:loop}.

\heading{The inner forest}
Let $F_I$ be the forest on the vertex set $I_{3^+}\cup L_{3^+}$ such that $uv\in E(F_I)$ iff for some integer $i\ge 0$, 
there is a path $ux_1\cdots x_iv$ in forest $F$ such that $u,v\in I_{3^+}\cup L_{3^+}$ and for every $j=1,\ldots,i$, vertex
$x_i$ belongs to $I_2$.

\heading{Three sets of short chains}
A path in $F$ consisting of vertices from $I_2 \cup L_2$ will be called a {\em chain}.
A chain is maximal if it is not contained in a bigger chain.
In what follows we introduce three sets of (not necessarily maximal) chains, denoted by $CL_2$, $C_{2^-}$ and $C_{3^+}$.
We will do it so that each vertex in $I_2$ belongs to {\em at least one} chain from these sets of chains.

For every vertex $x \in L_2$, we consider the maximal chain $(y_1,\ldots,y_p)$ of degree $2$ vertices in $F$ such that $y_1$ is adjacent to $x$ and no $y_i$ has a solution neighbor outside $N_G(x)\cap S$. 
Then the chain $(x,y_1,\ldots,y_p)$ is an element of $CL_2$. 
Note that $L_2 \subseteq V(CL_2)$.

Chains of $C_{2^-}$ and $C_{3^+}$ are defined using the following algorithm.
We consider maximal chains in $F$, one by one (note that all maximal chains are vertex-disjoint).
%Let $c=(x_1,x_2,\ldots,x_p)$ be a maximal chain, with $\deg_F(x_1) \geq \deg_F(x_p)$. 
Let $c=(x_1,x_2,\ldots,x_p)$ be a maximal chain. 
The vertices of $c$ are ordered so that if $\{x_1,x_p\}\cap L_2 \ne \emptyset$, then $x_p\in L_2$.
Using vertices of $c$ we form disjoint bounded length chains and put them in the sets $C_{2^-}$ and $C_{3^+}$ as follows.
Assume that for some $i<p$ the vertices of a prefix $(x_1,x_2,\ldots,x_i)$ have been already partitioned into such chains (in particular $i=0$ if we begin to process $c$). There are three cases to consider.

Consider a shortest chain $c_i=(x_{i+1},\ldots,x_j)$ such that the vertices of $c_i$ have at least three solution neighbors, i.e., $|S\cap N(\{x_{i+1},\ldots,x_j\})|\ge 3$.
If the chain $c_i$ exists, we put it in $C_{3^+}$, and we proceed to the next vertices of $c$.
Otherwise we consider the chain $c'_i=(x_{i+1},\ldots,x_p)$.
Note that vertices of $c'_i$ have at most two solution neighbors.

If $x_p\in I_2$, then we add the chain $c'_i$ to $C_{2^-}$ and we finish processing $c$.
Note that then $x_p$ is adjacent to a vertex $u\in L_{3^+}\cup I_{3^+}$ (otherwise $c$ is not maximal, as we can extend it by a vertex in $L_2$). 
Moreover, because of the order of the vertices in $c$, we know that $x_1\not\in L_2$.
It follows that $x_1$ is also adjacent to a vertex $v\in L_{3^+}\cup I_{3^+}$.
Hence, $uv\in E(F_I)$. We assign chain $c'_i$ to edge $uv$.

If $x_p\in L_2$, then we do not form a new chain and we finish processing $c$.
Note, however, that the vertices $\{x_{i+1},\ldots,x_p\}\cap I_2$ belong to a chain in $CL_2$.

Note also that some vertices of the first chain $c_0$ can belong to two chains, one in $C_{3^+}$ and one in $CL_2$.
%However, for the correctness of the analysis it is only important to note that when all maximal chains of $F$ are processed, 

Let us summarize the main properties of the construction.

\begin{lemma}
\label{lem:construction}
The following properties hold:
 \begin{enumerate}[$(i)$]
  \item Every vertex from $I_2$ belongs to a chain in $CL_2$, $C_{2^-}$ or $C_{3^+}$.
  \item Every chain in $CL_2 \cup C_{2^-}$ has at most two solution neighbors.
  \item Every chain in $C_{3^+}$ has at least three solution neighbors.
  \item Every chain in $C_{2^-}$ is assigned to a different edge of inner forest $F_I$.
  \item Every chain in $C_{2^-}\cup CL_2$ has at most $\ell-1$ vertices.
  \item Every chain in $C_{3^+}$ has at most $\ell$ vertices.
 \end{enumerate}

\end{lemma}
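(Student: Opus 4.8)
The plan is to establish the six items essentially in the order listed, isolating a single \emph{core length bound} for chains with few solution neighbours and deriving everything else from it.

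Items $(i)$--$(iv)$ are read off from the construction. For $(i)$, every $I_2$ vertex lies on a unique maximal chain, and the processing of that maximal chain either places it in a $C_{3^+}$ chain, or in the trailing $C_{2^-}$ chain, or (in the case $x_p\in L_2$) leaves it untreated; in this last case I check the claim that the untreated vertices $\{x_{i+1},\dots,x_p\}\cap I_2$ lie in the $CL_2$ chain rooted at $x_p$. Indeed $(x_{i+1},\dots,x_p)$ has at most two solution neighbours while $x_p\in L_2$ already supplies two, so every such vertex has all its solution neighbours inside $N(x_p)\cap S$ and therefore belongs to the maximal degree-$2$ chain that defines this $CL_2$ element. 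Items $(ii)$ and $(iii)$ are immediate from the defining conditions: $CL_2$ chains confine their solution neighbours to $N(x)\cap S$, a $C_{2^-}$ chain is a leftover with no prefix having three solution neighbours, and a $C_{3^+}$ chain is by definition one with at least three. For $(iv)$, each maximal chain contributes at most one $C_{2^-}$ chain, and it is assigned the edge $uv$ of $F_I$ determined by the two vertices of $I_{3^+}\cup L_{3^+}$ flanking that maximal chain; since $F_I$ is a forest, each of its edges corresponds to the unique $u$--$v$ path in $F$ and hence to a unique maximal chain, so the assignment is injective.

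The heart of the argument is the following core bound, from which $(v)$ and $(vi)$ follow: any chain $\gamma=(z_1,\dots,z_m)$ in $F$ whose vertices have at most two solution neighbours $\{s_1,s_2\}$ satisfies $m\le\ell-1$. I would prove this by contradiction: assuming $m\ge\ell$, I build an induced path with $\ell$ internal vertices whose internal vertices have at most two neighbours outside the path, contradicting the standing hypothesis. The construction is clean when both ends of $\gamma$ can be extended by a forest vertex: take $\ell$ consecutive chain vertices and append the two flanking forest vertices as endpoints. Because $F=G[V_F]$ is an \emph{induced} forest, the only edges among forest vertices are edges of $F$, so this path is chordless, and the neighbours of its internal vertices outside the path are contained in $\{s_1,s_2\}$. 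For a $C_{2^-}$ chain both flanking vertices are forest vertices (one being the vertex of $I_{3^+}\cup L_{3^+}$ that ends the maximal chain, the other either a previous chain vertex or again a vertex of $I_{3^+}\cup L_{3^+}$), so this case is complete and gives $(v)$ for $C_{2^-}$. Item $(vi)$ then follows for $C_{3^+}$ chains not starting at a leaf: deleting the last vertex of such a chain leaves a chain with at most two solution neighbours, hence at most $\ell-1$ vertices, so the $C_{3^+}$ chain has at most $\ell$.

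The main obstacle is the case where one end of $\gamma$ is a leaf $x\in L_2$ of $F$; this is exactly the situation for the $CL_2$ chains and for the first $C_{3^+}$ chain of a maximal chain whose starting vertex lies in $L_2$. Here there is no forest vertex past $x$ to serve as an endpoint, and taking a solution neighbour $s\in N(x)\cap S$ as the endpoint may create chords $sz_i$ that destroy inducedness, so the naive argument only yields the weaker one-sided bound $m\le\ell$. This is precisely where the additional reduction rules are needed: using that none of Rules~1--\ref{r:digon:3:1} applies and that the configurations of Lemmas~\ref{lem:digon:1+2:2} and~\ref{lem:digon:3:2} are absent, I would show that one can still choose an endpoint and a sub-chain forming a chordless induced path with $\ell$ internal vertices whose outside-neighbourhood is contained in $\{s_1,s_2\}$, upgrading the bound to $m\le\ell-1$. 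Completing this leaf case finishes $(v)$ for $CL_2$ and $(vi)$ for leaf-started $C_{3^+}$ chains, and I expect the careful bookkeeping of these chords and endpoints to be the most delicate part of the proof.
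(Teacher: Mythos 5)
Your treatment of items $(i)$--$(iv)$, of $(v)$ for $C_{2^-}$ chains, and of $(vi)$ for chains whose length-minus-one prefix is flanked by forest vertices on both sides is correct and matches what the paper intends: those items are read off the construction, and the length bounds come from the induced-path hypothesis exactly as you describe. The genuine gap is the leaf case, which you correctly isolate but do not close; moreover, the route you propose for closing it cannot work. For a $CL_2$ chain $(x,y_1,\dots,y_p)$ every $y_i$ has degree at least three and all of its solution neighbours lie in $N(x)\cap S=\{s_1,s_2\}$, so each $y_i$ is adjacent to $s_1$ or $s_2$ --- and possibly to both. If some $y_i$ is adjacent to both $s_1$ and $s_2$, then no choice of a solution vertex as the missing endpoint yields a chordless path, so in general there is \emph{no} induced path having the chain vertices as internal vertices, and the hypothesis of Theorem~\ref{thm:size-bound} cannot be invoked. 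The excluded configurations do not help you manufacture such a path; they bound the chain length directly.

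Concretely, the intended resolution (made explicit later, in the proof of Lemma~\ref{lem:L23E}) is that a $CL_2$ chain with three vertices $(x,y_1,y_2)$ is precisely the configuration of Lemma~\ref{lem:digon:3:2}, with $u_1=x$, $u_2=y_1$, $u_3=y_2$ and $w_1,w_2$ the two solution neighbours of $x$: indeed $N(x)=\{s_1,s_2,y_1\}$, $\{x,y_2\}\subseteq N(y_1)\subseteq\{s_1,s_2,x,y_2\}$, and the only edge from $y_2$ leaving $\{s_1,s_2,y_1\}$ is the single forest edge to its second forest neighbour. Since this configuration is excluded by hypothesis, every $CL_2$ chain has at most two vertices, which gives $(v)$ for $CL_2$ outright because $\ell\ge 3$ --- no induced-path or chord analysis is needed, and the bound obtained ($2$) is in fact much stronger than $\ell-1$. (The companion configuration of Lemma~\ref{lem:digon:1+2:2} is what later caps the \emph{total} number of $CL_2$ vertices per solution pair.) The same fact handles the one remaining case of $(vi)$: if the first $C_{3^+}$ chain $(x_1,\dots,x_j)$ of a maximal chain starts at $x_1\in L_2$, its prefix $(x_1,\dots,x_{j-1})$ has at most two solution neighbours, all in $N(x_1)\cap S$, so it is contained in the $CL_2$ chain rooted at $x_1$ and hence $j\le 3\le\ell$. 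You should replace your ``careful bookkeeping of chords and endpoints'' step by this direct application of the excluded configurations.
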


% \heading{A contracted graph $H$}
% We define $H$ as the planar graph obtained from $G$ by contracting every chain from $C_{3^+}$ into a single vertex, and every chain from $C_{2^-}\cup CL_2$ into a single edge. 
% By the latter we mean that every maximal chain $d=x_1,\ldots,x_i$ of $I_2$ vertices which is contained in a chain from $C_{2^-}\cup CL_2$, is replaced by the edge $yz$ where $y$ and $z$ are the forest neighbors of $x_1$ and $x_i$ outside the chain $d$.
% Note that the set $V(H)$ is in a bijection with $S \cup L_2 \cup  L_{3^+} \cup I_{3^+} \cup  C_{3^+}$. 

\heading{A solution graph $H_S$}
Let us introduce a new plane multigraph $H_S=(S,E_S)$.
Since the vertices of $H_S$ are the solution vertices we call it a {\em solution graph}. 
From now on, we fix a plane embedding of $G$.
The vertices of $H_S$ are embedded in the plane exactly in the same points as in $G$.
The edge multiset $E_S$ is defined as follows.
For every triple $(u,x,v)$ such that $u,v\in S$, $x\in L_2$ and there is a path $uxv$ in $G$, we put an edge $uv$ in $E_S$.
Moreover, the edge $uv$ is embedded in the plane exactly as one of the corresponding paths $uxv$ (note that there can be up to four such paths if some edges are double).
Note that by Lemma~\ref{lem:prop-G}$(ii)$, every edge of $H_S$ has multiplicity at most two.

The set of faces of $H_S$ is denoted by $F_S$. By $F_{S,2}$ we denote its subset with the faces of length two, while $F_{S,3+}$ are the remaining faces.
Note that there are no faces of length 1 in $H_S$.

\begin{lemma}\label{lem:L23E}
We have $|V(CL_2)| \leq 2 (|E_S|-|F_{S,2}|)$.
\end{lemma}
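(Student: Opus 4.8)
The plan is to reduce the statement to a purely counting argument over the leaves in $L_2$. First I would record two bookkeeping facts. Each $x\in L_2$ has exactly two solution neighbours, say $w_1,w_2$, and contributes exactly the one edge $w_1w_2$ to $E_S$; distinct leaves give distinct elements of the edge multiset, so $|E_S|=|L_2|$. Since every chain of $CL_2$ begins with a distinct leaf of $L_2$ and these leaves are pairwise different, the $L_2$-vertices account for exactly $|L_2|$ of the vertices counted by $V(CL_2)$. Writing $Y$ for the set of distinct $I_2$-vertices lying on chains of $CL_2$, we thus have $|V(CL_2)|=|L_2|+|Y|$, and the target inequality becomes $|Y|\le |L_2|-2|F_{S,2}| = |E_S|-2|F_{S,2}|$.

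The key structural step is to show that every chain of $CL_2$ has at most two vertices, i.e.\ at most one $I_2$-vertex. Suppose some chain were $(x,y_1,y_2,\dots)$ with pair $\{w_1,w_2\}=N(x)\cap S$. Since $x$ is a leaf of $F$ its only forest neighbour is $y_1$, so $N(x)=\{w_1,w_2,y_1\}$; moreover $y_1,y_2\in I_2$ have two forest neighbours each and all their solution neighbours inside $\{w_1,w_2\}$, whence $\{x,y_2\}\subseteq N(y_1)\subseteq\{w_1,w_2,x,y_2\}$, and $y_2$ has at most one neighbour (namely $y_3$) outside $\{w_1,w_2,y_1\}$. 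The forest edges $xy_1$ and $y_1y_2$ are simple because $F$ is a forest. Setting $u_1:=x$, $u_2:=y_1$, $u_3:=y_2$ I would verify that this is exactly the configuration forbidden by Lemma~\ref{lem:digon:3:2}, a contradiction. Hence chains have length at most two, $Y$ consists only of the (distinct) $y_1$'s, and the surjection sending each leaf owning a $y_1$ to that $y_1$ gives $|Y|\le |\{x\in L_2: x \text{ has a } y_1\}|$.

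It remains to gain the correction $-2|F_{S,2}|$, which I would extract from the digon faces. Let $D\subseteq L_2$ be the set of leaves whose $E_S$-edge bounds a length-two face. A length-two face is bounded by two parallel edges $w_1w_2$, coming (using Lemma~\ref{lem:prop-G}$(ii)$, which caps the multiplicity at two) from two \emph{distinct} leaves $x,x'\in L_2$ both adjacent to $w_1$ and $w_2$; and no leaf can border two such faces, as that would force multiplicity at least three. Thus $|D|=2|F_{S,2}|$ with all these leaves distinct. For $x\in D$ with digon partner $x'$, I claim $x$ owns no $y_1$: otherwise, taking $v_1:=x$, $v_2:=y_1$ and the ``$u$''-vertex of Lemma~\ref{lem:digon:1+2:2} to be $x'$ (which is adjacent to both $w_1,w_2$ and, being a leaf, has a single neighbour outside $\{w_1,w_2\}$, while $y_1$ has the single outside neighbour $y_2$), the configuration forbidden by Lemma~\ref{lem:digon:1+2:2} would appear. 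Consequently $\{x\in L_2: x \text{ has a } y_1\}$ is disjoint from $D$, so $|Y|\le |L_2|-|D|=|L_2|-2|F_{S,2}|$, which yields $|V(CL_2)|\le 2|L_2|-2|F_{S,2}|=2(|E_S|-|F_{S,2}|)$ as required.

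The routine parts are the two configuration checks against lemmas~\ref{lem:digon:3:2} and~\ref{lem:digon:1+2:2}; the step I expect to need the most care is the exact correspondence $|D|=2|F_{S,2}|$, i.e.\ ruling out degenerate length-two faces of $H_S$ that are not genuine digons (for instance a two-edge facial walk arising from an isolated or pendant $E_S$-edge), since such a face would be incident to only one leaf and would spoil the count. Here I would appeal to connectivity and degree properties of $H_S$ inherited from the reduction rules, together with the absence of length-one faces already noted after the definition of $F_S$.
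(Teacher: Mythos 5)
Your proof is correct and uses the same two key ingredients as the paper's own argument: the forbidden configuration of Lemma~\ref{lem:digon:3:2} to cap each $CL_2$-chain at two vertices, and that of Lemma~\ref{lem:digon:1+2:2} to rule out an interior chain vertex when two leaves of $L_2$ share the same solution pair (equivalently, when the corresponding $E_S$-edge is doubled and bounds a length-two face). The paper merely organizes the count differently, charging at most two vertices of $V(CL_2)$ to each of the $|E_S|-|F_{S,2}|$ distinct solution pairs $\{u,v\}$ rather than writing $|V(CL_2)|=|L_2|+|Y|$; this is only bookkeeping, and the digon-face/doubled-edge correspondence you flag at the end is relied upon just as implicitly in the paper's assertion that the number of such pairs equals $|E_S|-|F_{S,2}|$.
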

\begin{proof}
By the definition, for every vertex $x\in L_2$ there is a corresponding edge $uv\in E_S$, where $N_G(x)\cap S = \{u,v\}$.
Also, for every chain $c$ in $CL_2$ there is a corresponding vertex $x\in L_2$, and thus a corresponding edge $uv\in E_S$.
We {\em assign} $x$, $c$ and the vertices of $c$ to the pair $\{u,v\}$.

Consider an arbitrary pair $u$, $v$ such that $uv\in E_S$. Note that there are exactly $|E_S|-|F_{S,2}|$ such pairs.
We claim that there are at most two elements in $V(CL_2)$ assigned to the pair $\{u,v\}$.
Indeed, by Lemma~\ref{lem:prop-G}$(ii)$, there are at most two vertices in $L_2$ assigned to $\{u,v\}$.
If there are no such vertices, no chain in $CL_2$ is assigned to $\{u,v\}$, so the claim holds.
If there is exactly one vertex $x\in L_2$ assigned, there is exactly one chain $c\in CL_2$ assigned. 
By Lemma~\ref{lem:r:digon:3:1}, chain $c$ has at most two vertices, so the claim holds.
Finally, if there are exactly two vertices $x,y\in L_2$ assigned, there are exactly two chains $c_x$ and $c_y $ assigned.
By Lemma~\ref{lem:r:digon:1+2:1} we have $|V(c_x)|=|V(c_y)|=1$. This concludes the proof. 
\end{proof}

\heading{Maximality}
In what follows we assume that graph $G$ is {\em maximal}, meaning that one can add neither an edge to $E(G)$ nor a vertex to $L_2$ obtaining a graph $G'$ such that $S$ is still a feedback vertex set of $G'$ and all the claims of lemmas~\ref{lem:prop-G}, \ref{lem:construction} and \ref{lem:L23E} hold.
Note that the number of $L_2$-vertices which can be added to $G$ is bounded, since each such vertex corresponds to an edge in $H_S$, 
and $H_S$ has at most $6|S|$ edges as a plane multigraph with edge multiplicity at most two.
Similarly, once the set of $L_2$-vertices is maximal, and hence the vertex set of $G$ is fixed, the number of edges which can be added to $G$ is bounded by $6|V(G)|$. 
It follows that such a maximal supergraph of $G$ exists.
Clearly, it is sufficient to prove Theorem~\ref{thm:size-bound} only in the case when $G$ is maximal.

\begin{lemma}\label{cl:solutionconnected}
The planar graph $H_S$ is connected.
\end{lemma}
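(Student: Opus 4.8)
The plan is to argue by contradiction, exploiting the maximality of $G$. Since $H_S$ is drawn in the fixed plane embedding of $G$, a disconnected $H_S$ should let us exhibit one more $L_2$-vertex that can be added to $G$ while keeping $S$ a feedback vertex set and preserving the conclusions of Lemmas~\ref{lem:prop-G}, \ref{lem:construction} and~\ref{lem:L23E} --- contradicting maximality. We may assume $G$ is connected (a component of $G$ meeting no solution vertex would be a forest, hence contain a vertex of degree at most one, contradicting Lemma~\ref{lem:prop-G}$(iii)$; and the general case reduces to the connected one). So suppose, for contradiction, that $H_S$ has at least two connected components, and fix two of them, $C_1$ and $C_2$.

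First I would locate a good place to insert the new vertex. Because the edges of $H_S$ follow $L_2$-paths of $G$ and $H_S$ inherits the embedding of $G$, there is a face $f$ of $H_S$ whose boundary meets both $C_1$ and $C_2$; pick $u\in C_1$ and $v\in C_2$ on the boundary of $f$. As $G$ is connected while $u$ and $v$ lie in different components of $H_S$ (so no edge of $H_S$ crosses $f$), there is a $u$--$v$ path of $G$ running through the interior of $f$; take a shortest cross-component path $P=u,q_1,\dots,q_{t-1},v$. A shortest-path exchange shows that all internal vertices $q_i$ are forest vertices, since an internal solution vertex would form a cross-component pair with $u$ or with $v$ at strictly smaller distance. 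I would then use $P$ together with the planarity of the region $f$ to identify a single face of $G$ incident with $u$, with $v$, and with a forest vertex $q$, into which a new vertex can be drawn without crossings.

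The second step is the insertion. Add to $G$ a new vertex $x$, drawn inside that face, with the three edges $xu$, $xv$ and $xq$, where $u,v\in S$ and $q\in V_F$, and let $G'$ be the result. Then $x$ is a leaf of the forest $G'-S$ (its only non-solution neighbour is $q$) with exactly two solution neighbours, so $x\in L_2$ and a new edge $uv$ appears in $E_S$, merging $C_1$ and $C_2$. It remains to verify that $G'$ is admissible. The set $S$ stays a feedback vertex set because $x$ is a pendant attached to $F$. Lemma~\ref{lem:prop-G} survives: $x$ falls into $L_2$, the reclassified vertex $q$ stays within the partition of $V_F$, every degree is still at least three, and the new edge $uv$ has multiplicity one, since $u$ and $v$ shared no $L_2$-vertex before (otherwise they would already lie in the same component of $H_S$), so Lemma~\ref{lem:prop-G}$(ii)$ is not violated. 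For Lemma~\ref{lem:construction}, $x$ forms a one-vertex chain in $CL_2$ with two solution neighbours, consistent with all six claims. For Lemma~\ref{lem:L23E}, adding $x$ raises $|V(CL_2)|$ by one and $|E_S|$ by one while $|F_{S,2}|$ is unchanged (the new simple edge bridges two components and creates no face of length two), so the right-hand side grows by two and the inequality is preserved. Thus $G'$ keeps $S$ a feedback vertex set and all three lemmas, contradicting the maximality of $G$, and therefore $H_S$ is connected.

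The main obstacle is the geometric step: turning ``$H_S$ is disconnected'' into an explicit face of $G$ that simultaneously sees two solution vertices from different components and a forest vertex, so that the new leaf $x$ can be drawn without crossings. The routing is delicate precisely when the shortest cross-component path is long, because the edges $xu$ and $xv$ must avoid the other edges incident to the internal vertices of $P$; handling this carefully --- for instance, by choosing $f$, $u$, $v$ and $q$ so that a common face of $G$ is incident with all three --- is where the real work lies. Everything after the insertion is a routine check that the three structural lemmas are preserved.
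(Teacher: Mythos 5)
Your overall strategy --- contradict maximality by inserting a new $L_2$-vertex into a face that ``sees'' two components of $H_S$ --- matches the first half of the paper's proof, but the proposal leaves precisely the hard step unproven, and that step is not routine. You need a face of $G$ (not of $H_S$) that is simultaneously incident with a solution vertex of $C_1$ and one of $C_2$. A face of $H_S$ meeting both components is easy to find, but its interior is subdivided by the forest $F$ into many faces of $G$, and the existence of a $u$--$v$ path of forest vertices through that region does not yield a single face of $G$ incident with both $u$ and $v$ (two vertices joined by a path need not share a face). This is exactly what the second half of the paper's argument supplies: assuming no face of $G$ is incident with solution vertices of both parts, the faces of $G$ split into $\Ff_1$ (touching $S_1$) and $\Ff_2$ (touching $S_2$); the vertices incident with faces of both kinds induce a subgraph of minimum degree two (every face has length at least two), which therefore contains a cycle, yet all these vertices avoid $S$ --- contradicting that $F$ is a forest. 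Without an argument of this kind your construction has nowhere to place the new vertex; the sentence ``handling this carefully \ldots is where the real work lies'' concedes the gap rather than closing it.

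A secondary issue is that your gadget (three simple edges $xu$, $xv$, $xq$ with $q\in V_F$) is needlessly fragile. It requires the chosen face of $G$ to also contain a forest vertex, which may fail (a face of $G$ can be bounded entirely by solution vertices), and attaching $x$ to $q$ raises $q$'s degree in $F$, which can move $q$ between $L_2$, $I_2$ and $I_{3^+}$ and reshuffle the chain decomposition, so preservation of Lemmas~\ref{lem:construction} and~\ref{lem:L23E} is not as automatic as you claim. The paper joins the new vertex to the two solution vertices by \emph{double} edges: it then has degree four, exactly two neighbours (both in $S$), is an isolated vertex of $F$ lying in $L_2$, and nothing else in the graph is disturbed; $|V(CL_2)|$ and $|E_S|-|F_{S,2}|$ each increase by one and Lemma~\ref{lem:L23E} survives. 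You should adopt that insertion and, more importantly, supply the forest-cycle argument establishing that the required face of $G$ exists.
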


\begin{proof}
%Recall that $G$ is connected. Since $G$ has no vertices of degree at most one, the boundary of every face of $G$ contains a cycle. 
%
Assume now for contradiction that there is a partition $S=S_1 \cup S_2$ such that there is no edge in $H_S$ between a vertex of $S_1$ and a vertex of $S_2$. 

Every face of $G$ is incident to at least one vertex of $S$, for otherwise the boundary of the face does not contain a cycle, a contradiction.
Assume that a face $f$ of $G$ contains a solution vertex $u_1$ in $S_1$ and a solution vertex $u_2$ in $S_2$.
%Then $f$ is of length at least three, for otherwise $u_1u_2u_1$ is a 2-cycle and $S$ is not a feedback vertex set.
%Then we can add vertices $x$, $y$ and $z$ and edges $xu_1,xu_2,xy,zu_1,zu_2,zy,yu_1$.
Then we can add a vertex $x$, two edges $xu_1$ and two edges $xu_2$.
Note that $S$ is still a feedback vertex set in the new graph; in particular now $x\in L_2$.
In the new graph there are no more vertices in $L_2$ adjacent to both $u_1$ and $u_2$ because of our assumption that $S_1$ and $S_2$ are not connected by an edge in $H_S$, so Lemma~\ref{lem:prop-G}$(ii)$ holds.
Moreover,  $|V(CL_2)|$ was increased by one and $|E_S|-|F_{S,2}|$ was also increased by one, so Lemma~\ref{lem:L23E} holds.
The other claims of lemmas~\ref{lem:prop-G} and \ref{lem:construction} trivially hold, so $F$ is not maximal, a contradiction.

Let $\Ff_1$ and $\Ff_2$  be the collections of faces of $G$ containing a vertex in $S_1$, or in $S_2$, respectively. 
We have shown above that $\Ff_1\cup \Ff_2$ is a partition of the set of all the faces of $G$. 
Let $V_1$ and $V_2$ denote the sets of vertices incident to a face in $\Ff_1$, or in $\Ff_2$, respectively. 
Note that $V_1\cap V_2 \ne \emptyset$, since there must be two neighboring faces, one in $\Ff_1$ and the other in $\Ff_2$.
Let $x\in V_1\cap V_2$. Since faces of $G$ are of length at least two, $x$ has in $G$ at least two neighbors in $V_1\cap V_2$.
It follows that $G[V_1\cap V_2]$ has minimum degree two, so $G[V_1\cap V_2]$ contains a cycle. However, $(V_1\cap V_2) \cap S = \emptyset$, since $\Ff_1$ and $\Ff_2$ are disjoint. Hence $V_1\cap V_2 \subseteq F$, a contradiction.

\end{proof}

\heading{Bounding the number of forest vertices in a face of $H_S$}
For a face $f$ of $H_S$ and a set of vertices $A\subseteq V(G)$ we define $A^f$ as the subset of $A$ of vertices which are embedded in $f$ or belong to the boundary of $f$. 
Note that all vertices of every chain belong to the same face $f$ of $H_S$.
When $C$ is a set of chains, by $C^f$ we denote the subset of chains of $C$ which lie in $f$, i.e., $C^f=\{c \in C\ :\ V(c) \subseteq V(G)^f\}$.

\begin{lemma}\label{cl:df2}
For every face $f$ of $H_S$, it holds that $|L^f_{3^+}|+|I^f_{3^+}|+|C^f_{3^+}|\leq d(f)-2$.
\end{lemma}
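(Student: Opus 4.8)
The claim is that for every face $f$ of the solution graph $H_S$, the quantity $|L^f_{3^+}|+|I^f_{3^+}|+|C^f_{3^+}|$ is at most $d(f)-2$. My plan is to build an auxiliary planar graph drawn \emph{inside} the closed region bounded by $f$, whose vertices are the $d(f)$ occurrences of solution vertices on the facial walk of $f$ together with the ``interesting'' forest structures counted on the left-hand side, and then apply Euler's formula (or the standard edge bound for planar bipartite/simple graphs) to this auxiliary graph. The key intuition is that each element counted by the left-hand side connects, through forest paths, to at least three distinct solution vertices on the boundary of $f$; high-degree forest vertices and $C_{3^+}$-chains behave like ``internal'' nodes of a planar graph each of whose degree is at least three.

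**The main construction.**
First I would contract, inside $f$, the relevant part of the forest $F$ so that each element of $L^f_{3^+}\cup I^f_{3^+}\cup C^f_{3^+}$ becomes a single node, and keep the solution vertices on the boundary walk of $f$ as the remaining nodes. By definition of $H_S$, the only forest vertices giving edges of $H_S$ are those in $L_2$; the vertices counted here (leaves/internal vertices of $F$ with at least three solution neighbors, and $C_{3^+}$-chains with at least three solution neighbors) are precisely those that attach to \emph{three or more} solution vertices. So in the contracted auxiliary graph $\Gamma$ drawn in the disk bounded by $f$, every node coming from the left-hand side has degree at least three into the set of boundary solution vertices. Crucially, the forest structure guarantees these attachments do not create spurious cycles, so $\Gamma$ is (essentially) a plane forest-like bipartite graph between boundary solution vertices and the interior nodes. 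I would then count edges: if $\Gamma$ has $s$ boundary solution vertices (with $s\le d(f)$, with multiplicity coming from the facial walk) and $t=|L^f_{3^+}|+|I^f_{3^+}|+|C^f_{3^+}|$ interior nodes, each interior node contributes at least three edges, giving $|E(\Gamma)|\ge 3t$. On the other hand, since $\Gamma$ is planar bipartite and drawn in a disk with all $s$ boundary vertices on the outer boundary, Euler's formula bounds $|E(\Gamma)|\le 2(s+t)-$ (a constant depending on connectivity/the outer face), and combining the two inequalities isolates $t\le d(f)-2$.

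**The delicate points.**
The subtle step is the bookkeeping that makes the edge-count inequality come out to exactly $d(f)-2$ rather than some looser $O(d(f))$ bound. I would handle this by treating $\Gamma$ as a plane graph all of whose $d(f)$ boundary incidences lie on a single outer face, so the number of \emph{bounded} faces is controlled; a bipartite plane graph where one side is the interior nodes each of degree $\ge 3$ forces many edges, and matching this against the planar edge bound $|E|\le 2|V|-4$ for bipartite planar graphs (or the disk-version thereof) yields the constant $-2$. The main obstacle I anticipate is justifying that the contraction really produces a \emph{simple enough} planar graph in which each interior node has degree at least three into \emph{distinct} boundary solution vertices: one must argue that the three solution neighbors guaranteed by the definition of $I_{3^+}$, $L_{3^+}$, and $C_{3^+}$ all lie on the boundary of the same face $f$, and that parallel attachments to the same solution vertex do not inflate the degree count artificially. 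This requires using that all vertices of a given chain lie in one face of $H_S$ (noted just before the lemma) and that the three solution neighbors are reached by internally disjoint forest paths, so they appear as genuinely distinct boundary incidences; once that is secured, the Euler-formula calculation is routine and gives the stated bound.
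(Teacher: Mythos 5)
Your overall plan (contract the chains, view the elements of $L^f_{3^+}\cup I^f_{3^+}\cup C^f_{3^+}$ as interior nodes of a plane graph drawn in the disk bounded by $f$, and exploit that each such node has at least three attachments to the boundary) matches the paper's construction of the auxiliary graph $B$. However, there are two genuine gaps. First, the degree hypothesis fails as you state it for $I^f_{3^+}$: a vertex of $I_{3^+}$ is defined by having forest-degree at least three, not by having three solution neighbours, and a priori it may have \emph{no} edge to the boundary of $f$ (e.g.\ a branching vertex of $F$ all of whose neighbours are other forest vertices). You cannot repair this by ``contracting the forest paths'' from such a vertex to the boundary, because those paths pass through other counted elements ($L_{3^+}$-leaves, $C_{3^+}$-chains) that must remain separate nodes; contracting would merge them and destroy the count. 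The paper instead derives the three boundary attachments from the \emph{maximality} of $G$: the three subtrees hanging off $x$ reach the boundary in three disjoint arcs, and maximality forces an actual edge from $x$ into each arc. Some such argument is indispensable and is missing from your proposal.

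Second, the Euler computation you sketch does not produce the constant you need. From $|E(\Gamma)|\ge 3t$ and the standard bipartite planar bound $|E(\Gamma)|\le 2(s+t)-4$ one gets only $t\le 2s-4$, which is roughly twice the claimed bound $t\le d(f)-2$. To make an Euler-type argument work you would need the sharper ``disk'' inequality $|E(\Gamma)|\le s+2t-2$ for a bipartite plane graph with all $s$ vertices of one side on the outer face (provable by adding the boundary cycle and observing that every triangular bounded face must use a boundary edge), and you neither state nor justify it. The paper avoids this entirely by using a structural fact your bipartite reduction discards: the interior nodes form a \emph{tree} (since $F^f$ is a tree, again by maximality), so one can peel off a leaf $x$ of that tree, note that its three boundary attachments cut $f$ into at least three regions with the rest of the tree confined to one of them, and induct on $d(f)$. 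Either route can be completed, but as written your argument establishes neither the degree condition for $I_{3^+}$ nor the inequality that yields $d(f)-2$.
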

\begin{proof}
First we note that the forest $F^f$ is in fact a tree. Indeed, if $F^f$ has more than one component, we can add an edge between two solution vertices on the boundary of $f$ preserving planarity, what contradicts the assumed maximality.

Consider a plane subgraph $A$ of $G$ induced by $V(G)^f$, i.e., we take the plane embedding of $G$ and we remove the vertices outside $V(G)^f$.
Then we can define graph $A_S$, analogously to $H_S$. We treat $f$ as a face of $A_S$.
Let $u_1u_2\cdots u_{d(f)}u_1$ be the facial walk of $f$.

\ignore{
%BELOW IS A COMMENTED-OUT ARGUMENT THAT WE CAN ASSUME THE BOUNDARY IS A CYCLE. PLEASE DO NOT REMOVE
%Without loss of generality we can assume that $f$ is not the outer face of $A_S$ (we change the embedding if needed).
Let $u_1,u_2,\ldots,u_{d(f)},u_1$ be the facial walk of $f$.
(Note that for all $i=1,\ldots,d(f)$ we have $u_i\in S$.)
Note that this walk does not need to be a simple cycle, since $H_S$, and consequently the boundary of $f$, does not need to be 2-connected.
However, we can get an equivalent graph where the facial walk of $f$ is a cycle, as follows.
If the boundary of $f$ contains a cutvertex, there is also a cutvertex $u_i$ incident to the outer face of $G_1$.
Then for some $j\ne i$, we have that $u_j=u_i$ and for all indices $j'=i+1,\ldots,j-1$ taken modulo $d(f)$ it holds that $u_{j'}\ne u_i$.
Order all edges of $A$ incident to $u_i$ in the clockwise order around $u_i$ in the considered embedding; the other endpoints form the sequence $u_{i-1},w_1,\ldots,w_p,u_{i+1},u_{j-1},z_1,\ldots,z_q,u_{j+1}$.
Replace in $A$ the vertex $u_i$ by two vertices $x$ and $y$ and connect $x$ to $u_{i-1},w_1,\ldots,w_p,u_{i+1}$ and $y$ to $u_{j-1},z_1,\ldots,z_q,u_{j+1}$.
The resulting graph is plane, since we can obtain its embedding by a minor modification of the previous graph (see Fig.~\ref{fig:topology}).
We repeat this transformation as long as there are cutvertices in graph $A$. 
\begin{figure}[t]
\centering
\includegraphics[width=5cm]{topology.pdf}
\caption{\label{fig:topology}}
\end{figure}
}

Consider an arbitrary vertex $x$ of $I^f_{3^+}$. 
%Let $x_1,\ldots,x_r$ be the neighbors of $x$ in $F$, where $r\ge 3$.
Let $T_1,\ldots,T_r$ be the $r$ trees obtained from the tree $T$ in $F$ containing $x$ after removing $r$ from $T$.
Then $r\ge 3$ since $x$ has at least three neighbors in $T$.
By planarity, there are $2r$ indices $b_1,e_1,b_2,e_2,\ldots,b_r,e_r$ such that for every $i=1,\ldots,r$ 
\[\{u_{b_i},u_{e_i}\} \subseteq N(V(T_t)) \cap \{u_1,\ldots,u_{d(f)}\} \subseteq \{u_{b_i},u_{b_i+1},\ldots,u_{e_i}\}.\]
Then, for every $j\in\{b_1,b_2,\ldots,b_r\}$ there is an edge $xu_j$, for otherwise we can add it in the current plane embedding, contradicting the maximality of $G$.
This means that every vertex in $I^f_{3^+}$ has at least three neighbors in $\{u_1,u_2,\ldots,u_{d(f)}\}$.

We further define $B$ as the plane graph obtained from $A$ by (1) replacing every triple $(u,x,v)$ where $x\in L_2$, $u,v\in S$ and $uxv$ forms a path by a single edge, (2) removing vertices of $V(CL_2)$, (3) contracting every chain from $C_{3^+}$ into a single vertex, and (4) contracting every chain from $C_{2^-}$ into a single edge. 
By (4) we mean that every maximal chain $d=x_1,\ldots,x_i$ of $I_2$ vertices which is contained in a chain from $C_{2^-}$, is replaced by the edge $yz$ where $y$ and $z$ are the forest neighbors (in $L_{3^+}\cup I_{3^+}$) of $x_1$ and $x_i$ outside the chain $d$.
Let us call the vertices of $B$ that are not on the boundary of $f$ as {\em inner vertices}.

Note that the set of inner vertices is in a bijection with $L^f_{3^+} \cup I^f_{3^+} \cup C^f_{3^+}$.
Moreover, $I$ forms a tree, since $F^f$ is a tree.
Also, each inner vertex has at least three neighbors in $\{u_1,u_2,\ldots,u_{d(f)}\}$.
We show that $|I|\le d(f)-2$ by the induction on $d(f)$.
When $d(f)=2$ the claim follows since each inner vertex has at least three neighbors on the boundary of $f$.
Now assume $d(f)>2$. 
Let $x$ be leaf in the tree $I$. 
Then the edges from $x$ to the boundary of face $f$ split $F$ into at least three different faces.
The subtree $I-x$ lies in one of these faces, say face bounded by the cycle $xu_iu_{i+1}\cdots u_j x$.
We remove $x$ and vertices $u_{j+1},\ldots,u_{i-1}$ (there is at least one of them) and we add edge $u_iu_j$.
The outer face of the resulting graph is of length at most $d(f)-1$, so we can apply induction and the claim follows.

\end{proof}

\ignore{
%THE OLD PROOF  
\begin{proof}
We claim that $|L^f_{3^+}|+|I^f_{3^+}|+|C^f_{3^+}|$ is at most the number of faces induced by edges we can add to $f$ without violating planarity. Let $f=(v_1,v_2,\ldots,v_p)$. 

Note that we can safely assume that the forest $F^f$ is actually a tree: we can add edges between the different connected components without decreasing $|L^f_{3^+}|+|I^f_{3^+}|+|C^f_{3^+}|$. Let $A$ be the forest resulting from $F^f$ by applying the contractions from $G$ to $H$ and discarding the elements that are not in $L^f_{3^+}\cup I^f_{3^+} \cup C^f_{3^+}$.

We will show to transform step-by-step $A$ into a set $B$ of vertices each with at least three solution neighbors inside $f$, with $|A|=|B|$. Let $x$ be a leaf of $A$, and $D$ be the set of its neighbors outside $A$. If there is a forest vertex $y$ in $D$, at this point it must be an element of $L_2 \cup CL_2$, which is or could be adjacent to two solution vertices $v_i$ and $v_{i+1}$. Remove the edge $(x,y)$ and add the edges $(x,v_i)$ and $(x,v_{i+1})$ if they are not already in the graph. Note that this operation maintains planarity. Repeat until $x$ has no forest neighbor outside $A$ anymore. Note that $x$ has at least three solution neighbors, so that $x$ now splits $f$ in at least three different faces $\{f'_1,f'_2,\ldots,f'_d\}$. Now, consider the unique neighbor $w$ of $x$ in $A$, if any. Since $A$ is connected, all of $A\setminus \{x\}$ is in the same face $f'_i$. Thus we can remove $x$ from $A$ and consider that $x$ behaves like an element of $L_2$ as long as $f'_i$ is concerned. We repeat 
inside $f'_i$ until $A$ is empty.

Consequently, we built a set $B$ of vertices embedded in $f$, each with at least three solution neighbors. We cannot have more vertices in $B$ than triangles in a triangulation of $f$, thus $|B| \leq d(f)-2$. Since $|A|=|B|$ by construction, the result holds.
\end{proof}
}

\begin{lemma}\label{lem:fdf2}
For every face $f$ in $H_S$ of length at least three, \[|V_F^f\setminus V(CL^f_2)|\leq \ell\cdot(d(f)-2) - (\ell -1).\]
\end{lemma}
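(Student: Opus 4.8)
The plan is to split $V_F^f\setminus V(CL_2^f)$ along the partition of the forest vertices and to charge each surviving vertex either against Lemma~\ref{cl:df2} or against a chain of bounded length. Write $a=|L_{3^+}^f|$, $b=|I_{3^+}^f|$, $p=|C_{3^+}^f|$ and $q=|C_{2^-}^f|$. Since $L_2\subseteq V(CL_2)$, no vertex of $L_2$ survives in $V_F^f\setminus V(CL_2^f)$, whereas all of $L_{3^+}^f\cup I_{3^+}^f$ does, as these vertices never lie on a chain of $CL_2$. Every remaining vertex lies in $I_2$, and by Lemma~\ref{lem:construction}$(i)$ it belongs to a chain of $C_{2^-}^f$ or $C_{3^+}^f$; these chains (taken outside $CL_2$) are vertex-disjoint, so there is no double counting. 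Using the length bounds of Lemma~\ref{lem:construction}$(v)$ and $(vi)$, each chain of $C_{3^+}^f$ contributes at most $\ell$ such vertices and each chain of $C_{2^-}^f$ at most $\ell-1$, whence
\[|V_F^f\setminus V(CL_2^f)|\le a+b+\ell p+(\ell-1)q.\]
I would then combine this with the inequality $a+b+p\le d(f)-2$ from Lemma~\ref{cl:df2}.

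First I would dispose of the generic case $a+b\ge 1$. The inner forest restricted to $f$ is a tree on the $a+b$ vertices of $L_{3^+}^f\cup I_{3^+}^f$, so it has exactly $a+b-1$ edges. By Lemma~\ref{lem:construction}$(iv)$ distinct chains of $C_{2^-}$ are assigned to distinct edges of $F_I$, and a chain lying in $f$ is assigned to an edge whose endpoints also lie in $f$; hence $q\le a+b-1$. Substituting,
\[a+b+\ell p+(\ell-1)q\le a+b+\ell p+(\ell-1)(a+b-1)=\ell(a+b+p)-(\ell-1)\le \ell(d(f)-2)-(\ell-1),\]
which is exactly the desired bound.

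The delicate case, and the one I expect to be the main obstacle, is $a+b=0$: the face carries no branching vertex and no high-degree leaf of the forest. Then $F^f$ is a single path whose two endpoints lie in $L_2$ and whose internal vertices lie in $I_2$, and there are no chains of $C_{2^-}$, so $q=0$; the crude term $\ell p$ is now too weak. Here I would exploit the overlap between $C_{3^+}$ and $CL_2$ noted in the construction. Processing the path from an $L_2$-endpoint $z_1$, the first chain $c_0\in C_{3^+}$ is the shortest prefix acquiring three solution neighbours; every vertex of $c_0$ except its last one still has only the two solution neighbours of $z_1$ and has degree two, hence lies on the $CL_2$-chain of $z_1$. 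Thus $c_0$ contributes at most one vertex to $V_F^f\setminus V(CL_2^f)$, while the remaining $p-1$ chains of $C_{3^+}^f$ contribute at most $\ell$ each, and the suffix beyond the last $C_{3^+}$-chain is absorbed into $CL_2$ by Lemma~\ref{lem:construction}$(i)$. This gives
\[|V_F^f\setminus V(CL_2^f)|\le 1+(p-1)\ell=\ell p-(\ell-1)\le \ell(d(f)-2)-(\ell-1),\]
using $p\le d(f)-2$ (the subcase $p=0$ being trivial). The one step requiring genuine care is the verification that all of $c_0$ but its last vertex is indeed captured by the $CL_2$-chain of $z_1$, which is where the minimality in the definition of $c_0$ and the stopping condition of the $CL_2$-chains must be matched up precisely.
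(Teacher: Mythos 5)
Your proof is correct and follows essentially the same route as the paper's: the same initial bound $a+b+\ell p+(\ell-1)q$, the same use of Lemma~\ref{lem:construction}$(iv)$ together with the tree structure of $F^f$ to get $q\le a+b-1$ when $a+b\ge 1$, and the same observation in the degenerate case $a+b=0$ that the first $C_{3^+}$-chain overlaps a $CL_2$-chain in all but one vertex. The step you flag as delicate (matching the minimality of $c_0$ with the stopping condition of the $CL_2$-chain) is exactly the argument the paper makes explicit, so nothing is missing.
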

\begin{proof}
%First note that $d(f) \geq 3$, so the right term is always positive. We can consequently consider the case where $F_f$ is not empty.
We have 
\[|V_F^f\setminus V(CL^f_2)|\leq  |L^f_{3^+}|+|I^f_{3^+}|+|V(C^f_{3^+})|+|V(C^f_{2^-})|.\]

By Lemma~\ref{lem:construction}$(v)$ we get
\begin{equation}
\label{eq:bound1}
|V_F^f\setminus V(CL^f_2)|\leq  |L^f_{3^+}|+|I^f_{3^+}|+\ell|C^f_{3^+}|+(\ell-1)|C^f_{2^-}|.
\end{equation}
By Lemma~\ref{lem:construction}$(iv)$, $|C^f_{2^-}|$ is bounded by the number of edges of the inner forest $F_I$.
Hence, $|C^f_{2^-}|\leq |L^f_{3^+}|+|I^f_{3^+}|-1$ when $|L^f_{3^+}|+|I^f_{3^+}|> 0$ and $|C^f_{2^-}|=0$ otherwise.
In the prior case, by~\eqref{eq:bound1} we get that
\[|V_F^f\setminus V(CL^f_2)|\leq  \ell (|L^f_{3^+}|+|I^f_{3^+}|+|C^f_{3^+}|)-(\ell-1),\] and the result then follows from Lemma~\ref{cl:df2}. 
Hence it suffices to prove the claim when $|L^f_{3^+}|=|I^f_{3^+}|=|C^f_{2^-}|=0$.
%, then by construction of $C_{3^+}$ and since $d(f) \geq 3$, we have $|C^f_{3^+}|\neq 0$. 
Then the forest $F^f$ is a non-empty collection of paths, each with both endpoints in $L_2$. 
Let $c$ be such a path on $p$ vertices $x_1,\ldots,x_p$. %, which partitions $f$ in such a way that one side contains no element of $F$. 
Then $x_1\in L_2$ and $x_1$ has exactly two neighbors $u,v$ in $S$.
Let $i$ be the largest such that $N(\{x_1,\ldots,x_i\})\cap S = \{u,v\}$.
By definition, $(x_1,\ldots,x_i)$ is a chain in $CL^f_2$.
We infer that if $i=p$ for every such path, then $|V_F^f\setminus V(CL^f_2)| = 0$ and the claim follows.
Hence we can assume that $i<p$, i.e., $x_{i+1}$ has a neighbor in $S\setminus\{u,v\}$.
Then, by definition, $(x_1,\ldots,x_{i+1})$ is a chain in $C^f_{3^+}$. 
Since $(x_1,\ldots,x_i)\in CL^f_2$, we get $|\{x_1,\ldots,x_{i+1}\}\setminus V(CL^f_2)| = 1$. 
Hence, 
\[|V_F^f\setminus V(CL^f_2)|\leq  1+\ell(|C^f_{3^+}|-1),\]
what, by Lemma~\ref{cl:df2} , is bounded by $1+\ell\cdot(d(f)-3) = \ell\cdot(d(f)-2) - (\ell -1)$, as  required.

\end{proof}

\begin{lemma}\label{lem:face-len-2}
For every face $f$ in $H_S$ of length two, $V_F^f \subseteq  V(CL^f_2)$.
\end{lemma}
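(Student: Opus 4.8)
The plan is to pin down the structure of a length-two face $f$ of $H_S$ exactly, and show that the only forest vertices it contains are the two $L_2$-vertices bounding it, each carrying a singleton $CL_2$-chain that lies in $f$. A face of length two is a $2$-gon bounded by a double edge between two distinct solution vertices $u,v$. By the definition of $E_S$, each of the two parallel edges is the embedding of a path $ux_iv$ with $x_i\in L_2$ and $N_G(x_i)\cap S=\{u,v\}$ ($i=1,2$). After the earlier reductions I may assume $x_1\neq x_2$ and that the edges $ux_1,x_1v,ux_2,x_2v$ are all simple; the degenerate situation in which the double edge of $H_S$ comes from a single $L_2$-vertex through a doubled solution edge is excluded by Rules~\ref{r:deg3-double} and~\ref{r:triple}. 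Hence $x_1,x_2$ are the only forest vertices on the boundary of $f$, each is a leaf of $F$, and each has a single forest neighbour.

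\textbf{No forest vertex inside $f$.} First note that every vertex of $V_F^f$ has all its solution neighbours in $\{u,v\}$: for $x_1,x_2$ this is immediate, and for a vertex embedded in the interior of the disk $f$ it follows from planarity, since $u,v$ are the only solution vertices incident to $f$. Now suppose some forest vertex $z$ lies in the interior of $f$, and let $T$ be the tree of $F$ containing it. Since forest edges do not cross the boundary of $f$ and the only forest vertices on that boundary are the leaves $x_1,x_2$ of $F$, no forest path can pass from the inside of $f$ to its outside (a degree-one leaf cannot serve as a transit vertex); thus $T$ lies entirely in the closed disk $\overline f$. Every leaf of $T$ is then an $L_2$-vertex with $N\cap S=\{u,v\}$, and by Lemma~\ref{lem:prop-G}$(ii)$ there are at most two such vertices, namely $x_1,x_2$. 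As $T$ has at least two leaves, it is exactly a path from $x_1$ to $x_2$ with $z$ among its interior degree-two vertices. Taking $v_1=x_1$, $v_2$ its neighbour on $T$, $w_1=u$, $w_2=v$, and $x_2$ in the role of the vertex adjacent to both $w_1,w_2$, we recover precisely the configuration forbidden by Lemma~\ref{lem:digon:1+2:2}, a contradiction. Therefore $V_F^f=\{x_1,x_2\}$.

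\textbf{The chains are singletons.} It remains to check that the $CL_2$-chain of $x_1$ is $(x_1)$, since $\{x_1\}\subseteq V(G)^f$ then gives $x_1\in V(CL^f_2)$. Its unique forest neighbour is either the leaf $x_2$ (in which case the chain stops at once), or a vertex $y_1$ outside $f$; and if the latter were a degree-two vertex of $F$ with solution neighbours in $\{u,v\}$, then $v_1=x_1$, $v_2=y_1$, $w_1=u$, $w_2=v$ together with $x_2$ would again realise the forbidden configuration of Lemma~\ref{lem:digon:1+2:2}. Hence the chain is $(x_1)$, and symmetrically the chain of $x_2$ is $(x_2)$. Combining with the previous step, $V_F^f=\{x_1,x_2\}\subseteq V(CL^f_2)$, as required.

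\textbf{Main obstacle.} I expect the crux to be the second step: arguing cleanly that the tree $T$ cannot straddle the boundary of $f$ (so that the whole of $T$ lives in $\overline f$ and its leaf count is governed by Lemma~\ref{lem:prop-G}$(ii)$), and then matching the surviving path configuration \emph{exactly} to the forbidden pattern of Lemma~\ref{lem:digon:1+2:2}. A secondary nuisance is the bookkeeping in the first step that disposes of the degenerate multi-edge embeddings of the $2$-gon.
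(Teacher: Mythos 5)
Your argument proves a much stronger statement than the lemma asks for ($V_F^f=\{x_1,x_2\}$ with both chains singletons), and the topological reasoning behind it is essentially sound --- but it rests on a hypothesis that is no longer available at the point where this lemma lives. The lemma is stated and used \emph{after} the paper's maximality step, where $G$ is replaced by a supergraph that is only guaranteed to satisfy the \emph{conclusions} of Lemmas~\ref{lem:prop-G}, \ref{lem:construction} and \ref{lem:L23E}; nothing guarantees that this maximal graph is still free of the configuration of Lemma~\ref{lem:digon:1+2:2} (indeed, Lemma~\ref{cl:df2} is proved by adding edges whenever planarity permits, which can create reducible configurations). Both of your key steps --- excluding an interior forest path from $x_1$ to $x_2$, and truncating the two $CL_2$-chains to singletons --- invoke exactly that forbidden configuration, so as written the proof applies to the original reduced graph but not to the graph the lemma is actually about. (A smaller quibble: the simplicity of $ux_1,x_1v,ux_2,x_2v$ does not follow from Rules~\ref{r:deg3-double} and~\ref{r:triple} --- a leaf of $F$ with double edges to both of its two solution neighbours is irreducible --- but nothing in your argument uses this.)

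The fix is to aim only at what is claimed, which needs none of this machinery. The paper's proof is two lines: every forest vertex or chain lying in a face of length two can see at most the two solution vertices on its boundary, so $L_{3^+}^f$, $I_{3^+}^f$ and $C_{3^+}^f$ are empty (for $I_{3^+}^f$ this is Lemma~\ref{cl:df2} with $d(f)=2$); by Lemma~\ref{lem:construction}$(iv)$ every chain of $C_{2^-}$ is attached to an edge of the inner forest $F_I$, whose endpoints lie in $L_{3^+}\cup I_{3^+}$, so $C_{2^-}^f$ is empty too; and Lemma~\ref{lem:construction}$(i)$ then places every remaining vertex of $V_F^f$ in a $CL_2$-chain, which lies entirely in $f$. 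All of these ingredients survive maximality. The sharper facts you establish (singleton chains whenever two $L_2$-vertices share a solution pair) are true of the reduced graph, but the paper proves and exploits them earlier, inside Lemma~\ref{lem:L23E}, precisely so that only the resulting numerical bound has to be carried through the maximality step.
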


\begin{proof}
Since the boundary of $f$ has only two solution vertices, $F^f$ contains no vertices of $L_{3^+}^f$, $V(C_{3^+})^f$ or $I_{3^+}^f$. 
Then by Lemma~\ref{lem:construction}$(iv)$, $C_{2^-}^f$ is also empty. The claim follows.
\end{proof}

% Note now that every vertex of $F$ that is, in the considered planar embedding of $G$, inside a region delimited by two elements of $L_2$ with the same two solution neighbors is necessarily an element of $L_2$ or of a chain in $CL_2$.

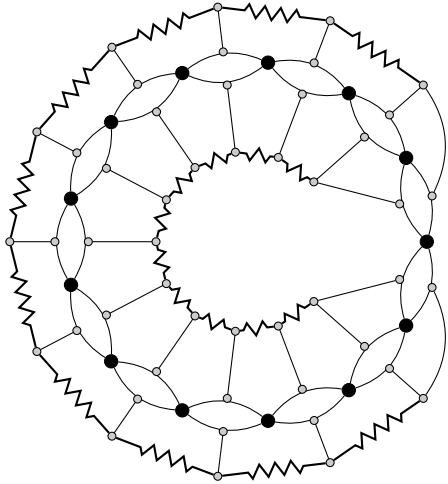
\begin{figure}[t]
\centering
\begin{tikzpicture}[scale=1.2,auto]
%\tikzstyle{whitenode}=[draw,circle,fill=white,minimum size=5pt,inner sep=0pt]
\tikzstyle{blacknode}=[draw,circle,fill=black,minimum size=5pt,inner sep=0pt]
\tikzstyle{greennode}=[draw,circle,fill=black!20!white,minimum size=3pt,inner sep=0pt]
\tikzstyle{greennodeb}=[draw,circle,fill=black!20!white,minimum size=3pt,inner sep=0pt]
\tikzstyle{spring}=[thick,decorate,decoration={zigzag,pre length=0.05cm,post length=0.05cm,segment length=6}];
\tikzstyle{springb}=[thick,decorate,decoration={zigzag,pre length=0.3cm,post length=0.3cm,segment length=6}];
\foreach \i in {0,...,12}
{\draw (0,0)
++(360*\i/13:2) node[blacknode] (a\i) {};}

\foreach \i in {0,...,12}
{\pgfmathtruncatemacro{\j}{mod(\i+1,13)} 
\draw (a\i) edge [bend right] node {} (a\j);
\draw (a\i) edge [bend left] node {} (a\j);
    }

\foreach \i in {2,...,12}
{\pgfmathtruncatemacro{\j}{mod(\i-1,13)} 
\draw (0,0)
++(360*\j/13+180/13:1) node[greennode] (b\i) {}
--++(360*\j/13+180/13:0.75) node[greennode] (c\i) {};
%\draw (0,0)
%++(360*\j/13+180/13:1.50) node[greennodeb] (d\i) {};
%\draw (d\i) -- (a\j);  %14k
%\draw (d\i) -- (a\i);  %14k
}

\foreach \i in {0,1}
{\pgfmathtruncatemacro{\j}{mod(\i-1,13)} 
\draw (0,0)
++(360*\j/13+180/13:1.75) node[greennode] (c\i) {};}

\foreach \i in {3,...,12}
{\pgfmathtruncatemacro{\j}{mod(\i-1,13)} 
\draw[spring] (b\j) -- (b\i);
%\draw (b\j) edge node {\tiny $5$} (b\i);
    }
    
\draw (b12) -- (c0);    
\draw (b2) -- (c1);

\foreach \i in {2,...,12}
{\pgfmathtruncatemacro{\j}{mod(\i-1,13)} 
\draw (0,0)
++(360*\j/13+180/13:2.12) node[greennode] (e\i) {}
--++(360*\j/13+180/13:0.5) node[greennode] (f\i) {};
}

\foreach \i in {0,1}
{\pgfmathtruncatemacro{\j}{mod(\i-1,13)} 
\draw (0,0)
++(360*\j/13+180/13:2.12) node[greennode] (e\i) {};}

\draw (f12) edge [bend right] node {} (e0);    
\draw (f2) edge [bend left] node {} (e1);

\foreach \i in {3,...,12}
{\pgfmathtruncatemacro{\j}{mod(\i-1,13)} 
\draw[springb] (f\i) -- (f\j);
    }
\end{tikzpicture}
\caption{A tight example. The big black vertices are solution vertices, the small gray ones are forest vertices. 
The zigzag edges represent paths of $\ell-1$ forest vertices, each adjacent to the two available solution vertices. Asymptotically for larger cycles, we have $2\ell + 3$ forest vertices for each solution vertex.}
\label{fig:tight}
\end{figure}

Now we proceed to the bound of Theorem~\ref{thm:size-bound}.
By Lemmas~\ref{lem:fdf2} and~\ref{lem:face-len-2} we have 

\[|V_F| \leq |V(CL_2)|+\sum_{f \in F_{S,3+}} (\ell(d(f)-2)-(\ell-1)) \]

By Lemma~\ref{lem:L23E} we get

\begin{equation*}
 \begin{split}
   |V_F|  & \le 2(|E_S| - |F_{S,2}|) + \sum_{f \in F_{S,3+}} \left(\ell(d(f)-2)-(\ell-1)\right) \\
        & =   2(|E_S| - |F_{S,2}|) + \sum_{f \in F_S} \left(\ell(d(f)-2)-(\ell-1)\right) + (\ell-1)|F_{2,S}|\\
        & =   (2\ell+2)|E_S| - (3\ell-1)|F_S| + (\ell-3)|F_{2,S}|\\
        & =   (2\ell+2)|E_S| - (2\ell+2)|F_S| - (\ell-3)|F_S| + (\ell-3)|F_{2,S}| \\
        & \le (2\ell+2)(|E_S| - |F_S|).
 \end{split}
\end{equation*}
By Lemma~\ref{cl:solutionconnected} graph $H_S$ is connected, so we can apply Euler's formula $|S|-|E_S|+|F_S|=2$. 
Thus,
\begin{equation*}
 \begin{split}
   |V(G)| = |V_F|+|S|  & \le  (2\ell+2)(|S|-2) + |S|, \\
        & =    (2\ell+3)k - (4\ell+4).
 \end{split}
\end{equation*}

This concludes the proof of Theorem~\ref{thm:size-bound}.
By Lemma~\ref{lem:indpath5}, we can put $\ell=5$, which results in $|V(G)|\leq 13k-24$.
In Figure \ref{fig:tight} we show an example of a graph, where our reduction rules do not apply and our analysis is tight (up to a constant additive term).

\section{Reducing induced $5$-paths with at most two neighbors}
\label{sec:5path}

This section is devoted to a proof of Lemma~\ref{lem:indpath5}. Let us recall its statement here. 

\newcounter{savecnt}
\setcounter{savecnt}{\value{theorem}}
\setcounter{theorem}{\value{indpath5}}
\begin{lemma}[restated]
 Assume there is an induced path $u_0ux_1x_2x_3vv_0$ such that for some vertices $w_1$, $w_2$ outside the path $N(\{u,x_1,x_2,x_3,v\})\setminus\{u_0,v_0\} = \{w_1,w_2\}$.
 Then there is an instance $(G',k')$ with $|V(G')|<|V(G)|$ such that $(G,k)$ is a yes-instance iff $(G',k')$ is a yes-instance and $k'\le k$.
\end{lemma}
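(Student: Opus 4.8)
The plan is to treat the five internal vertices $Z=\{u,x_1,x_2,x_3,v\}$ together with $w_1,w_2$ as a constant-size gadget whose only attachments to the rest of $G$ run through the interface $I=N(Z)\subseteq\{u_0,v_0,w_1,w_2\}$, and to replace $Z$ by a strictly smaller vertex set using Lemma~\ref{lem:fvs-transfer}. The guiding observation is that the interface has at most four vertices, so the ``outside view'' of the gadget is captured by a bounded amount of information: for every partial solution on $I\cup Z$, what matters by Lemma~\ref{lem:fvs-transfer} is only the truncated reachability relation $R_{Q_A-S_A,I}$ together with the number of gadget vertices deleted. Since this data ranges over a finite set, it can be reproduced by a gadget on fewer than five internal vertices; the real work is to carry this out explicitly while respecting the multigraph constraints already enforced by the earlier rules.

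First I would record the local structure forced by exhaustiveness of Rules~\ref{r:deg1}--\ref{r:digon:3:1}. Every vertex has degree at least three, so each of $x_1,x_2,x_3$ --- whose only path-neighbours are two internal vertices --- must be adjacent to $w_1$ or to $w_2$, and the ``spare'' degree of $u$ and $v$ is likewise used only on $\{w_1,w_2\}$ or on a doubled endpoint edge. I would also note that $\{w_1,w_2\}$ is by itself a feedback vertex set of the interior $G[Z\cup\{w_1,w_2\}]$, because deleting $w_1,w_2$ leaves the induced path $u x_1 x_2 x_3 v$, a forest. Hence the only cycles through $Z$ that are \emph{not} already killed by $\{w_1,w_2\}$ are those that run the ``long way'', through an external $u_0$--$v_0$ connection together with the whole path $u x_1 x_2 x_3 v$.

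The core is an exchange argument followed by a case analysis. I would show that any minimum feedback vertex set can be modified, without increasing its size, into a canonical one whose restriction to the gadget lies in a short explicit list: whenever $S$ contains an internal vertex used only to break short internal cycles, it can be swapped for $w_1$ or $w_2$, so the only internal deletions that survive are those cutting the long cycle, and a single vertex (or $w_1$/$w_2$) always suffices for that. With the partial solutions reduced to finitely many canonical types, I would split into cases according to how many of $x_1,x_2,x_3$ (and of $u,v$) attach to $w_1$ versus $w_2$, using the symmetries $w_1\leftrightarrow w_2$ and $u_0\leftrightarrow v_0$ (path reversal) to cut the number of cases. In each case I would exhibit a replacement gadget $(Z,Y,E_I)$ with $|Y|<|Z|$ and verify in both directions, via Lemma~\ref{lem:fvs-transfer}, that $R_{Q_A-S_A,I}=R_{Q_B-S_B,I}$ for every canonical partial solution and that the deletion counts match. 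Several configurations should turn out to be precisely the two special cases of Rule~\ref{r:meta} established in Lemmas~\ref{lem:digon:1+2:2} and~\ref{lem:digon:3:2}, and the remaining ones would be handled by new small rules of the same flavour as Rule~\ref{r:gamma}, Rule~\ref{r:digon:1+2:1} and Rule~\ref{r:digon:3:1}.

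The main obstacle I anticipate is completeness and faithfulness of this case analysis in exactly the configurations where the interior interacts with the global graph simultaneously through the long $u_0$--$v_0$ cycle and through short $w_1$--$w_2$ cycles: there a two- or three-block reachability pattern on $I$ must be reproduced exactly, so a careless replacement that accidentally connects $u_0$ to $v_0$, or merges $w_1$ with $w_2$, in $Q_B-S_B$ would create or destroy a cycle in the host graph. Keeping edge multiplicities within the bound enforced by Rule~\ref{r:triple} while still realizing the required reachability, and checking that the vertex count strictly decreases even in the densest attachment pattern, is where the auxiliary rules and their somewhat tedious correctness proofs become necessary.
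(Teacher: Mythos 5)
Your overall strategy coincides with the paper's: treat the path together with $w_1,w_2$ as a bounded-size gadget, observe via Lemma~\ref{lem:fvs-transfer} that only the truncated reachability relation on a four-vertex interface matters, and replace the interior by a smaller gadget realizing the same finite set of behaviours. The problem is that what you have written is a plan rather than a proof: essentially all of the technical content lies in the steps you defer (``I would split into cases\ldots'', ``I would exhibit a replacement gadget\ldots''). The paper's proof of this lemma consists of five new rules (rules~\ref{r:del-w1}--\ref{r:gadget:4}), a completeness lemma (Lemma~\ref{lem:completeness}) showing by a fairly delicate case analysis that one of them always applies, and separate correctness arguments for each; none of that is supplied or made routine by your outline. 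In particular, Lemmas~\ref{lem:digon:1+2:2} and~\ref{lem:digon:3:2}, which you expect to cover several of your cases, concern different and shorter configurations and play no role in this proof, so that part of your plan would not materialize.

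The one concretely wrong step is your exchange argument. You claim that any internal deletion ``used only to break short internal cycles'' can be swapped for $w_1$ or $w_2$, so that a single surviving internal vertex suffices, namely one cutting the long $u_0$--$v_0$ cycle. This is too strong: swapping the internal deletions for $w_1$ or $w_2$ can restore the whole path $ux_1x_2x_3v$ and thereby close the long cycle, so the swap is not always available, and a single internal vertex generally cannot simultaneously hit all the short $w_1$-- and $w_2$--cycles. The surviving hard case is exactly $|S\cap V(Q)|=2$ with both deleted vertices on the path (e.g.\ $\{x_1,x_3\}$) and no $(u,v)$-path in $Q-S$; the resulting patterns of $R_{Q-S,\delta(Q)}$ on $\{u,v,w_1,w_2\}$ can be $\{v\},\{u,w_1,w_2\}$, or $\{u\},\{v,w_1,w_2\}$, or the crossing pattern $\{u,w_2\},\{v,w_1\}$, and determining which of these are realizable by \emph{some} optimal solution is what forces the paper to enumerate all feedback vertex sets of $Q$ and to design three distinct gadgets plus two rules that instead delete $w_1$ and decrement $k$. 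Your outline identifies the right framework but does not close this gap.
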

\setcounter{theorem}{\value{savecnt}}

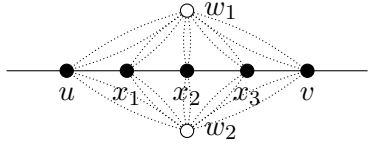
\begin{figure}[t]
\centering
\begin{tikzpicture}[scale=\scalefactor]
\tikzstyle{whitenode}=[draw,circle,fill=white,minimum size=5pt,inner sep=0pt]
\tikzstyle{blacknode}=[draw,circle,fill=black,minimum size=5pt,inner sep=0pt]
\tikzstyle{texte}=[circle,minimum size=5pt,inner sep=0pt]
\tikzstyle{innerWhite} = [semithick, white,line width=0.5pt]
\draw (-3,0) node (u0) {}
-- ++(1,0) node[blacknode] [label=-90:$u$] (u) {}
-- ++(1,0) node[blacknode] [label=-90:$x_1$] (x1) {}
-- ++(1,0) node[blacknode] [label=-90:$x_2$] (x2) {}
-- ++(1,0) node[blacknode] [label=-90:$x_3$] (x3) {}
-- ++(1,0) node[blacknode] [label=-90:$v$] (v) {}
-- ++(1,0) node (v0) {};

\draw (0,1) node[whitenode] [label=right:$w_1$] (w1) {}; 
\draw (0,-1) node[whitenode] [label=right:$w_2$] (w2) {}; 

\foreach \i in {u,x1,x2,x3,v}
{
\draw (w1) edge [bend left=10,densely dotted] node {} (\i);
\draw (w1) edge [bend right=10,densely dotted] node {} (\i);
\draw (w2) edge [bend left=10,densely dotted] node {} (\i);
\draw (w2) edge [bend right=10,densely dotted] node {} (\i);
}

%\draw (w1) edge [bend left=25,densely dotted] node {} (w2);
%\draw (w1) edge [bend right=25,densely dotted] node {} (w2);

\end{tikzpicture}
\caption{Configuration from Lemma~\ref{lem:indpath5}.}
\label{fig:5path}
\end{figure}

Denote the path $ux_1x_2x_3v$ by $P$.
Let $Q=G[V(P)\cup\{w_1,w_2\}]$.
By symmetry we assume $|N(w_1)\cap V(P)|\ge |N(w_2)\cap V(P)|$.
Then also $|N(w_1)\cap V(P)|\ge 3$, for otherwise Rule~\ref{r:deg2} applies.
In our proof of Lemma~\ref{lem:indpath5} we do not apply a single rule, but one of four rules.
The kernelization algorithm finds the family $\Ss$ of all feedback sets of $Q$. 
(Note that there is a bounded number of such sets.)
Based on the structure of $\Ss$, one of the four rules is chosen and applied.
Let us also define $\delta(Q) = N_G(X) = \{u,v,w_1,w_2\}$.
%For a set $S_Q\in\Ss$, let $R(S_Q)$ be the reachability relation in $Q-S_Q$ truncated to $\{u,v,w_1,w_2\}$, i.e., $(a,b)\in R(S_Q)$ iff $a,b\in\{u,v,w_1,w_2\}$ and there is an $(a,b)$-path in $Q-S_Q$.
Let us state the rules now.

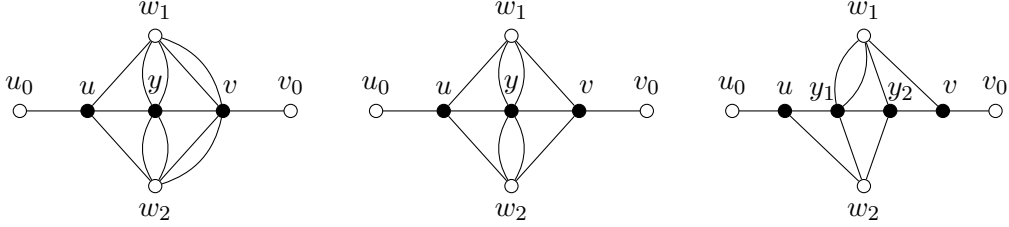
\begin{figure}[t]
\centering
\begin{tikzpicture}[scale=1]
\tikzstyle{whitenode}=[draw,circle,fill=white,minimum size=5pt,inner sep=0pt]
\tikzstyle{blacknode}=[draw,circle,fill=black,minimum size=5pt,inner sep=0pt]
\draw (0,0) node[whitenode] [label=90:$u_0$] (u) {}
-- ++(0:0.9cm) node[blacknode] [label=90:$u$] (y1) {}
-- ++(0:0.9cm) node[blacknode] (y2) {}
-- ++(0:0.9cm) node[blacknode] (y3) {}
-- ++(0:0.9cm) node[whitenode] [label=90:$v_0$] (v) {};

\draw (y2) node[above=1mm] {$y$};
\draw ($(y3)+(1mm,0)$) node[above=1mm] {$v$};

\draw (1.8,1) node[whitenode] [label=90:$w_1$] (w1) {};
\draw (1.8,-1) node[whitenode] [label=-90:$w_2$] (w2) {};

\draw (y1) edge node {} (w1); 
\draw (y1) edge node {} (w2); 

\draw (y2) edge [bend left] node {} (w1); 
\draw (y2) edge [bend right] node {} (w1);
\draw (y2) edge [bend left] node {} (w2); 
\draw (y2) edge [bend right] node {} (w2);

\draw (y3) edge node {} (w1); 
\draw (y3) edge [bend right] node {} (w1);
\draw (y3) edge [bend left] node {} (w2); 
\draw (y3) edge node {} (w2);

\end{tikzpicture}
\quad
\begin{tikzpicture}[scale=1]
\tikzstyle{whitenode}=[draw,circle,fill=white,minimum size=5pt,inner sep=0pt]
\tikzstyle{blacknode}=[draw,circle,fill=black,minimum size=5pt,inner sep=0pt]
\draw (0,0) node[whitenode] [label=90:$u_0$] (u) {}
-- ++(0:0.9cm) node[blacknode] [label=90:$u$] (y1) {}
-- ++(0:0.9cm) node[blacknode] (y2) {}
-- ++(0:0.9cm) node[blacknode] (y3) {}
-- ++(0:0.9cm) node[whitenode] [label=90:$v_0$] (v) {};

\draw (y2) node[above=1mm] {$y$};
\draw ($(y3)+(1mm,0)$) node[above=1mm] {$v$};

\draw (1.8,1) node[whitenode] [label=90:$w_1$] (w1) {};
\draw (1.8,-1) node[whitenode] [label=-90:$w_2$] (w2) {};

\draw (y1) edge node {} (w1); 
\draw (y1) edge node {} (w2); 
\draw (y2) edge [bend left] node {} (w1); 
\draw (y2) edge [bend right] node {} (w1);
\draw (y2) edge [bend left] node {} (w2); 
\draw (y2) edge [bend right] node {} (w2);
\draw (y3) edge node {} (w1);   
\draw (y3) edge node {} (w2);  
\end{tikzpicture}
\quad
\begin{tikzpicture}[scale=1]
\tikzstyle{whitenode}=[draw,circle,fill=white,minimum size=5pt,inner sep=0pt]
\tikzstyle{blacknode}=[draw,circle,fill=black,minimum size=5pt,inner sep=0pt]
\draw (0,0) node[whitenode] [label=90:$u_0$] (u) {}
-- ++(0:0.7cm) node[blacknode] [label=90:$u$] (y1) {}
-- ++(0:0.7cm) node[blacknode] (y11) {}
-- ++(0:0.7cm) node[blacknode] (y2) {}
-- ++(0:0.7cm) node[blacknode] (y3) {}
-- ++(0:0.7cm) node[whitenode] [label=90:$v_0$] (v) {};

\draw ($(y2)+(1.4mm,0)$) node[above] {$y_2$};
\draw ($(y11)+(-2mm,0)$) node[above] {$y_1$};
\draw ($(y3)+(1mm,0)$) node[above=1mm] {$v$};

%\coordinate (mid) at (-90 + \i * 360 / 5 :8mm);

\draw ($(y11)!0.5!(y2)+(0,1)$) node[whitenode] [label=90:$w_1$] (w1) {};
\draw ($(y11)!0.5!(y2)-(0,1)$) node[whitenode] [label=-90:$w_2$] (w2) {};

\draw (y1) edge node {} (w2); 
\draw (y11) edge [bend left] node {} (w1); 
\draw (y11) edge [bend right] node {} (w1);
\draw (y11) edge node {} (w2);
\draw (y3) edge node {} (w1);   
\draw (y2) edge node {} (w1);  
\draw (y2) edge node {} (w2);  
\end{tikzpicture}
\caption{Gadgets of rules~\ref{r:gadget:u},~\ref{r:gadget:uv} and \ref{r:gadget:4}.}
\label{fig:gadgets}
\end{figure}

\rrule{r:del-w1}{Assume every set $S_Q\in\mathcal{S}$ satisfies at least one of the conditions below:
\begin{enumerate}[$(1)$]
 \item $|S_Q|\ge 3$,
 \item $Q-S_Q$ contains a $(u,v)$-path,
 \item $w_1\in S_Q$, or
 \item $w_2\in S_Q$.
\end{enumerate}
Then remove vertex $w_1$ and decrease $k$ by one.}

\rrule{r:n(w2)=2}{If $|N(w_2)\cap V(P)|\le 2$ Then remove vertex $w_1$ and decrease $k$ by one.}

\rrule{r:gadget:u}{If every set $S_Q\in\mathcal{S}$ satisfies at least one of the conditions (1)-(4) or
\begin{enumerate}[$(1)$]
 \setcounter{enumi}{4}
 \item \label{cond:v-u,w1,w2}the sets $\{v\}$ and $\{u,w_1,w_2\}$ are the equivalence classes of $R_{Q-S_Q,\delta(Q)}$,
\end{enumerate}
then replace $Q$ by the gadget from Figure~\ref{fig:gadgets} (left), i.e., remove $x_1,x_2,x_3$ and edges in $G[Q]$, and add a vertex $y$ and edges $uy$, $yv$, $uw_1$, $uw_2$ and double edges $vw_1$, $vw_2$, $yw_1$, $yw_2$.}

\rrule{r:gadget:uv}{If every set $S_Q\in\mathcal{S}$ satisfies at least one of the conditions (1)-(5) or
\begin{enumerate}[$(1)$]
 \setcounter{enumi}{5}
 \item the sets $\{u\}$ and $\{v,w_1,w_2\}$ are the equivalence classes of $R_{Q-S_Q,\delta(Q)}$,
\end{enumerate}
then replace $Q$ by the gadget from Figure~\ref{fig:gadgets} (middle), i.e., remove $x_1,x_2,x_3$ and edges in $G[Q]$, and add a vertex $y$ and edges $uy$, $yv$, $uw_1$, $uw_2$, $vw_1$, $vw_2$ and double edges $yw_1$, $yw_2$.}

\rrule{r:gadget:4}{If every set $S_Q\in\mathcal{S}$ satisfies at least one of the conditions (1)-(5) or
\begin{enumerate}[$(1)$]
 \setcounter{enumi}{6}
 \item the sets $\{u,w_2\}$ and $\{v,w_1\}$ are the equivalence classes of $R_{Q-S_Q,\delta(Q)}$,
\end{enumerate}
then replace $Q$ by the gadget from Figure~\ref{fig:gadgets} (right), i.e., remove $x_1,x_2,x_3$ and edges in $G[Q]$, and add vertices $y_1,y_2$ and edges $uy_1$, $y_1y_2$, $y_2v$, $uw_2$, $y_1w_2$, $y_2w_2$, $y_2w_1$, $vw_1$ and a double edge $y_1w_1$.}

Note that the graph modifications in rules~\ref{r:gadget:u}--\ref{r:gadget:4} can be phrased as gadget replacements $(X,Y,E_I)$ in graph $G$, where $X=\{x_1,x_2,x_3\}$, $Y=\{y\}$ for rules~\ref{r:gadget:u} and \ref{r:gadget:uv}, while $Y=\{y_1,y_2\}$ for Rule~\ref{r:gadget:4}.

\begin{lemma}
\label{lem:completeness}
If there is an induced path described in Lemma~\ref{lem:indpath5}, then one of rules~\ref{r:del-w1}--\ref{r:gadget:4} applies.
Moreover, if Rule~\ref{r:gadget:4} is applied, then $\Ss$ contains both a set satisfying $(5)$ and a set satisfying $(7)$.
\end{lemma}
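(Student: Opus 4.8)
The plan is to assume that the two earliest rules, Rule~\ref{r:del-w1} and Rule~\ref{r:n(w2)=2}, do \emph{not} apply and to show that then one of the three gadget rules must. Failure of Rule~\ref{r:del-w1} gives a \emph{bad} set, i.e.\ some $S_Q\in\Ss$ with $|S_Q|\le 2$, $w_1,w_2\notin S_Q$, and no $(u,v)$-path in $Q-S_Q$; failure of Rule~\ref{r:n(w2)=2} gives $|N(w_2)\cap V(P)|\ge 3$, so together with the standing convention both $w_1,w_2$ have at least three neighbours on $P$. I would first record the two structural facts that drive everything. (a) Since Rule~\ref{r:deg1} and Rule~\ref{r:deg2} do not apply, each vertex of $P$ has degree at least three and its only possible neighbours off $P$ (other than $u_0,v_0$) are $w_1,w_2$; hence \emph{every} vertex of $P$, and in particular each of $u,v$, is adjacent to $w_1$ or $w_2$. (b) For any $S_Q$ with $w_1,w_2\notin S_Q$, the graph $Q-S_Q$ is a forest, so inside each maximal surviving subpath (an \emph{interval}) of $P$ each $w_i$ has at most one neighbour, and contracting every interval to a point leaves a forest.

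Combining (a) and (b) pins the bad sets down. A three-vertex interval would need three distinct edges to $\{w_1,w_2\}$ while only two are allowed without a cycle, so every interval has at most two vertices; this forces $|S_Q|=2$ (deleting at most one vertex of $P$ leaves a $\ge3$-interval, or, when only $x_2$ is deleted, a $4$-cycle through $w_1,w_2$ in the contracted graph) and excludes $S_Q=\{u,v\}$ (which leaves the $3$-interval $x_1x_2x_3$). Next, any surviving interval of size two joins $w_1$ to $w_2$, since its two vertices each need a $w$-neighbour but cannot share one without a triangle. When both $u,v$ survive such an interval would connect everything and create a $(u,v)$-path, so $\{x_1,x_2\}$ and $\{x_2,x_3\}$ are never bad. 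The only remaining candidates are $\{x_1,x_3\}$, $\{x_1,v\}$, $\{x_2,v\}$, $\{u,x_2\}$ and $\{u,x_3\}$.

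I would then classify $R_{Q-S_Q,\delta(Q)}$ for each surviving candidate. If $v\in S_Q$ the size-two interval forces $u,w_1,w_2$ into one tree and isolates $v$, giving condition $(5)$; symmetrically $u\in S_Q$ gives $(6)$. For $S_Q=\{x_1,x_3\}$ all three surviving vertices are singleton intervals, and fact (a) forces $u,v,x_2$ to attach to exactly one $w$ each with $w_1\not\sim w_2$; as $u,v$ lie in distinct classes they attach to different $w$'s, giving condition $(7)$ or its mirror $\{u,w_1\},\{v,w_2\}$. The mirror is excluded: here $x_2$ has exactly the three neighbours $x_1,x_3,w_c$, and to keep $|N(w_1)\cap V(P)|\ge|N(w_2)\cap V(P)|\ge3$ while avoiding Rule~\ref{r:gamma} at $x_2$ one is forced into $|N(w_1)\cap V(P)|=|N(w_2)\cap V(P)|$, after which the tie-broken symmetry convention relabels $w_1,w_2$ so that the split is exactly $(7)$. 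Hence every bad set satisfies $(5)$, $(6)$ or $(7)$.

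It remains to rule out the one obstruction to applying a single rule, a type-$(6)$ bad set coexisting with the type-$(7)$ set $\{x_1,x_3\}$, and to prove the ``moreover''. If $\{x_1,x_3\}$ realizes $(7)$ then $u\sim w_2$, $v\sim w_1$, $w_1\not\sim w_2$, and tracing the forced $w$-adjacencies shows that any $\{u,x_2\}$ or $\{u,x_3\}$ contains a long cycle through $w_1,v,w_2$ and $x_1$ (or $x_3$), so it is not a feedback vertex set; thus $(6)$ is unrealized, the pattern family lies in $\{(5),(7)\}$, and Rule~\ref{r:gadget:4} applies (and when no $(7)$-set exists the family lies in $\{(5),(6)\}$ or $\{(5)\}$, so Rule~\ref{r:gadget:uv} or Rule~\ref{r:gadget:u} applies). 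For the moreover, from the $(7)$-structure I would exhibit an explicit $(5)$-set by deleting $v$ together with a suitable $x_i$: each surviving path vertex drags in a $w$, merging $u,w_1,w_2$ into one tree while isolating $v$, and one checks this set is a feedback vertex set. I expect the main obstacle to be precisely this last paragraph: both the orientation of the two-component split to $(7)$ (resting on Rule~\ref{r:gamma}) and the exclusion of a coexisting $(6)$-set require careful, planarity-aware bookkeeping of which path vertices are joined to which $w_i$, and this is where essentially all the technical work lies.
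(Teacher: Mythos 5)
Your reduction of the problem to classifying the ``bad'' sets ($|S_Q|\le 2$, $S_Q\subseteq V(P)$, no $(u,v)$-path) and your identification of the five candidates $\{x_1,x_3\},\{x_1,v\},\{x_2,v\},\{u,x_2\},\{u,x_3\}$ with their reachability types $(7)/(5)/(5)/(6)/(6)$ all match the paper and are sound. The gap is in your final paragraph: the claim that when $\{x_1,x_3\}$ realizes $(7)$ ``any $\{u,x_2\}$ or $\{u,x_3\}$ contains a long cycle \ldots thus $(6)$ is unrealized'' is false. Take $N_Q(w_1)=\{x_1,x_3,v\}$ and $N_Q(w_2)=\{u,x_2,x_3\}$, all edges simple and $w_1\not\sim w_2$ (this is consistent with $(7)$ for $\{x_1,x_3\}$, with $|N(w_1)\cap V(P)|=|N(w_2)\cap V(P)|=3$, and with none of the earlier rules applying). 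Then $Q-\{u,x_3\}$ is the path $w_2x_2x_1w_1v$ --- a forest --- so $\{u,x_3\}\in\Ss$ and its classes are $\{u\}$ and $\{v,w_1,w_2\}$, i.e.\ exactly condition $(6)$ and none of $(1)$--$(5),(7)$. Since $\{x_1,x_3\}$ satisfies only $(7)$, in this labelling \emph{no} rule applies: Rule~\ref{r:gadget:uv} is blocked by $\{x_1,x_3\}$ and Rule~\ref{r:gadget:4} is blocked by $\{u,x_3\}$. Your proposed remedy (swapping $w_1\leftrightarrow w_2$ when the split is the mirror of $(7)$) does nothing here, because $\{x_1,x_3\}$ already realizes $(7)$ in this labelling.

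What is actually needed --- and what the paper does in its Cases~2.1.2 and~2.2 --- is the full reflection $u\leftrightarrow v$, $x_1\leftrightarrow x_3$, $w_1\leftrightarrow w_2$ of the configuration. Under this reflection condition $(7)$ is preserved while $(5)$ and $(6)$ are exchanged, so the offending $(6)$-set $\{u,x_3\}$ becomes the $(5)$-set $\{v,x_1\}$ and Rule~\ref{r:gadget:4} applies in the reflected labelling. Your argument must therefore distinguish, among the configurations where $\{x_1,x_3\}$ has type $(7)$, those in which a $u$-containing feedback vertex set of $Q$ survives (the two ``wrong-handed'' cases, e.g.\ $N_Q(w_2)=\{u,x_2,x_3\}$ above) from those in which it does not (e.g.\ $N_Q(w_2)=\{u,x_1,x_2\}$, where $Q-\{u,x_3\}$ contains the triangle $x_1w_2x_2$), and reflect in the former. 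A secondary, smaller issue: for the ``moreover'' clause you still need to exhibit a set of $\Ss$ satisfying $(5)$ explicitly (in the canonical labelling this is $\{x_1,v\}$), which you sketch but do not verify.
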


\begin{proof}
Assume rules~\ref{r:del-w1}--\ref{r:gadget:uv} do not apply.
We will show that Rule~\ref{r:gadget:4} applies.
Then there is a set $S_Q\in\Ss$ that satisfies none of (1)--(6).
Since (1), (3) and (4) do not hold for $S_Q$, we infer that $|S_Q|\le 2$ and $S_Q\subseteq\{u,x_1,x_2,x_3,v\}$.
Recall that $|N(w_1)\cap V(P)|\ge 3$.
Let $p,q,r$ be three arbitrary vertices of $N(w_1)\cap V(P)$, sorted by increasing distance from $u$ in the path $ux_1x_2x_3v$.
Assume $|S_Q|\le 1$. Then $q\in S_Q$, for otherwise $S_Q$ is not a feedback vertex set of $Q$.
But then there is a $(u,v)$-path in $Q-S_Q$: from $u$ follow the path $ux_1x_2x_3v$ to $p$, then via $w_1$ to $r$ and again follow the path $ux_1x_2x_3v$ to $v$. Hence $S_Q$ satisfies (2), a contradiction. In what follows we assume $|S_Q|=2$. 

\mycase{1:} In $Q-S_Q$ there is an edge $ab\in E(P)$.
Then in particular $a,b\not\in S_Q$.
Since Rule~\ref{r:deg2} does not apply, both $a$ and $b$ have a neighbor in $\{w_1,w_2\}$.
However, since $Q-S_Q$ does not contain cycles, the neighbor of $a$ is different than the neighbor of $b$.
Hence, $Q-S_Q$ contains path $w_1abw_2$ or path $w_1baw_2$.
If $u\in\{a,b\}$, $S_Q$ satisfies (5), a contradiction, and if $v\in\{a,b\}$, $S_Q$ satisfies (6), a contradiction.
Hence $\{u,v\}\cap\{a,b\}=\emptyset$.
It follows that $\{a,b\}=\{x_1,x_2\}$ or $\{a,b\}=\{x_2,x_3\}$.
In the former case $x_3 \in S_Q$ and in the latter case $x_1 \in S_Q$, for otherwise $Q-S_Q$ has a cycle because $x_3$ (resp. $x_1$) has a neighbor in $\{w_1,w_2\}$ by Rule~\ref{r:deg2}.
Since $|S_Q|= 2$, it follows that exactly one of $u$, $v$ is not in $S_Q$. 
But by Rule~\ref{r:deg2} both of them have a neighbor in $\{w_1,w_2\}$, so (5) or (6) is satisfied, a contradiction.

\mycase{2:} $Q-S_Q$ contains no edge of $E(P)$.
Then $S_Q=\{x_1,x_3\}$.

If $uw_2\in E$, then $vw_2\not\in E$, for otherwise (2) holds.
It follows that $vw_1\in E$, for otherwise Rule~\ref{r:deg2} applies.
Then $uw_1\not\in E$, for otherwise (2) holds.

By the same arguments, if $uw_2\not\in E$, then $uw_1\in E$, $vw_1\not\in E$ and $vw_2\in E$.
We see that the two cases above are symmetric, so let us consider only the former one, i.e., $uw_2,vw_1\in E$ and $uw_1,vw_2\not\in E$.
Then also $w_1w_2\not\in E$, for otherwise (2) holds.

\mycase{2.1:} $x_2w_1\not\in E$.
Since Rule~\ref{r:deg2} does not apply, $x_2w_2\in E$.
Since $|N(w_1)\cap V(P)|\ge 3$, $x_1,x_3\in N(w_1)$.
Hence $N_Q(w_1)=\{x_1,x_3,v\}$.
Since $|N(w_2)\cap V(P)|\le |N(w_1)\cap V(P)|$ and $|N(w_2)\cap V(P)|\ge 3$ as Rule~\ref{r:n(w2)=2} does not apply, we get $|N(w_2)\cap V(P)|=3$.
It follows that either $N_Q(w_2)=\{u,x_1,x_2\}$ or $N_Q(w_2)=\{u,x_2,x_3\}$.
In both cases $S_Q$ satisfies (7).

\mycase{2.1.1:} $N_Q(w_2)=\{u,x_1,x_2\}$. 
Consider an arbitrary $S'_Q\in \Ss$ that satisfies none of (1)--(4) or (7).
Similarly as we argued for $S_Q$, $S'_Q\subseteq\{u,x_1,x_2,x_3,v\}$ and $|S'_Q|\le 2$.
Then $x_1\in S'_Q$, for otherwise it is impossible to hit all three cycles $w_2ux_1$, $w_2x_1x_2$ and $w_1x_3v$. 
It follows that $S'_Q$ contains $x_3$ or $v$, but in the former case (7) holds. 
Hence $S'_Q=\{x_1,v\}$, and (5) holds for $S'_Q$.
This proves our claim.

\mycase{2.1.2:} $N_Q(w_2)=\{u,x_2,x_3\}$. Then we rename vertices of $Q$: swap names of $w_1$ and $w_2$, $u$ and $v$, $x_1$ and $x_3$, obtaining $N_Q(w_1)=\{x_1,x_2,v\}$ and $N_Q(w_2)=\{u,x_1,x_3\}$.
Notice that in this new setting $\{x_1,x_3\}$ is still a feedback vertex set of $Q$ and satisfies (7). 
In the new setting, consider an arbitrary $S'_Q\in \Ss$ that satisfies none of (1)--(4) or (7).
Similarly as we argued for $S_Q$, $S'_Q\subseteq\{u,x_1,x_2,x_3,v\}$ and $|S'_Q|\le 2$.
Then $x_1\in S'_Q$, for otherwise it is impossible to hit all three cycles $w_2ux_1$, $w_1x_1x_2$ and $w_1x_1w_1x_3v$. 
It follows that $S'_Q$ contains $x_2$, $x_3$ or $v$, but in the first case (2) holds and in the second case (7) holds. 
Hence $S'_Q=\{x_1,v\}$, and (5) holds for $S'_Q$.
This proves our claim.

\mycase{2.2:} $x_2w_1\in E$.
Then $x_2w_2\not\in E$, for otherwise (2) holds.
Since $|N(w_2)\cap V(P)|\ge 3$ by Rule~\ref{r:n(w2)=2}, $x_1,x_3\in N(w_2)$.
Hence $N_Q(w_2)=\{u,x_1,x_3\}$.
Since $|N(w_1)\cap V(P)|\ge 3$, we have $x_1w_1\in E$ or $x_3w_1\in E$.
If both edges $x_1w_1$ and $x_3w_1$ are present, then Rule~\ref{r:gamma} applies, a contradiction.
It follows that either $N_Q(w_2)=\{x_1,x_2,v\}$ or $N_Q(w_2)=\{x_2,x_3,v\}$.
The former case was already considered in Case 2.1.2 (after renaming vertices).
In the latter case, we rename vertices by swapping names of $w_1$ and $w_2$, $u$ and $v$, $x_1$ and $x_3$.
Thus we obtain the already considered Case 2.1.1.
\end{proof} 

The lemma below will be very useful in proving that the rules above are correct in particular settings of subgraph $Q$.

\begin{lemma}
\label{lem:common}
  In each of the situations below all rules~\ref{r:del-w1}--\ref{r:gadget:4} are correct.
  \begin{enumerate}[$(i)$]
   \item $(G,k)$ is a no-instance.
   \item there is a solution $S$ to the instance $(G,k)$ such that $|V(Q)\cap S| \ge 2$ and
   \begin{enumerate}[$({ii}.1)$]
    \item $|V(Q)\cap S| \ge 3$, or
    \item in $Q-S$ there is an $(x_1,x_5)$ path, or
    \item $w_1 \in S$, or
    \item $w_2 \in S$.
   \end{enumerate}
\end{enumerate}
\end{lemma}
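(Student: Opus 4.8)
The plan is to prove Lemma~\ref{lem:common} by showing that in each of the listed situations, for every rule among rules~\ref{r:del-w1}--\ref{r:gadget:4}, the transformation produces an equivalent instance. I would handle the two top-level cases separately. Case $(i)$, where $(G,k)$ is a no-instance, is the easy one: since each rule only decreases $k$ (by Lemma~\ref{lem:fvs-transfer} for the gadget rules, or directly for the deletion rules) and each transformation yields a graph that is a minor-like modification on a bounded piece, I would argue that the resulting instance $(G',k')$ remains a no-instance. Concretely, if $(G',k')$ were a yes-instance with solution $S'$, then pulling $S'$ back through the gadget replacement using Lemma~\ref{lem:fvs-transfer} (applied in the reverse direction, using the symmetry of gadget replacements noted after its definition) would yield a feedback vertex set of $G$ of size at most $k$, contradicting that $(G,k)$ is a no-instance. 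So the forward and backward equivalences needed for correctness hold vacuously or by contraposition.

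For case $(ii)$, the key observation is that the four subconditions $(ii.1)$--$(ii.4)$ are precisely conditions (1)--(4) appearing in the guards of all four rules. Given a solution $S$ with $|V(Q)\cap S|\ge 2$ satisfying one of these, I would construct a corresponding solution $S'$ for the transformed instance $(G',k')$ of size at most $k'$, and conversely. The natural move is to set $S' = (S\setminus V(Q))\cup\{w_1,w_2\}$ when $|V(Q)\cap S|\ge 2$: this is analogous to the strategy already used in the proofs of Lemma~\ref{lem:r:digon:1+2:1} and Lemma~\ref{lem:r:digon:3:1}, where placing both boundary vertices $w_1,w_2$ into the solution cuts the gadget off from the rest of the graph. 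I would then invoke Lemma~\ref{lem:fvs-transfer} with $Q_A=G[N_G[X]]$ and $Q_B=G'[Y\cup N_G(X)]$: once $w_1,w_2\in S'$, the relation $R_{Q_B-S',\delta(Q)}$ places $u$ and $v$ in singleton classes (since their only route between them passed through the removed gadget interior and the $w_i$'s), and one checks this matches or refines the reachability in $Q_A-S$, so the hypothesis of Lemma~\ref{lem:fvs-transfer} is met and $S'$ is a feedback vertex set of $G'$. For the reverse direction I would use the symmetric gadget replacement and the same argument.

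The step I expect to be the \emph{main obstacle} is verifying the reachability hypothesis of Lemma~\ref{lem:fvs-transfer} uniformly across all four gadgets at once. The four gadgets in Figure~\ref{fig:gadgets} differ in exactly which boundary edges $uw_1,uw_2,vw_1,vw_2$ are present and which are doubled, so the equivalence classes of $R_{Q_B-S',\delta(Q)}$ differ gadget by gadget. The unifying fact I would rely on is that when both $w_1,w_2$ are deleted (subconditions $(ii.3)$, $(ii.4)$) or when $|V(Q)\cap S|\ge 3$ so that at most one boundary vertex of $\{u,v\}$ survives, the surviving structure inside every gadget reduces to at most a forest with $u,v$ in separate components, matching $Q_A-S$. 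The subtlest subcase is $(ii.2)$, an $(x_1,x_5)$-path---in the notation of the restated lemma, a $(u,v)$-path---in $Q-S$: here I must confirm that each gadget $Q_B$ likewise admits a $(u,v)$-path after deleting the corresponding $S'$, which follows because every gadget retains a $u$--$y$(or $y_1y_2$)--$v$ spine. Handling these subcases is routine once set up, but I would present a short table or a few explicit sentences per gadget to make the reachability classes transparent; this bookkeeping, rather than any deep idea, is where the real work lies.
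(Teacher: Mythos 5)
Your high-level framing (use Lemma~\ref{lem:fvs-transfer} in both directions, case-split on the subconditions) is the right one, but there are two genuine gaps, one in each top-level case. First, case $(i)$ is not ``the easy one'' and does not hold vacuously or by simple contraposition for the gadget rules. To pull a solution $S'$ of $(G',k')$ back to $G$ you must exhibit a set $T\subseteq V(Q)$ with $|T|\le |V(Q')\cap S'|$ such that $Q-T$ is a forest whose reachability on $\delta(Q)$ refines that of $Q'-S'$; when $|V(Q')\cap S'|=2$ and $Q'-S'$ has no $(u,v)$-path, such a $T$ is exactly a member of $\Ss$ satisfying condition (5), (6) or (7), and its existence is \emph{not} automatic --- it follows only from the guards of the rules (e.g.\ ``Rule~\ref{r:del-w1} does not apply, hence some $S_Q\in\Ss$ of size at most two satisfies (5) and none of (1)--(4)'') and, for Rule~\ref{r:gadget:4}, from Lemma~\ref{lem:completeness}. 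Your proposal never invokes these guards, so the pullback is unsupported precisely where it matters.

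Second, in case $(ii)$ the uniform choice $S'=(S\setminus V(Q))\cup\{w_1,w_2\}$ does not work outside subcase $(ii.2)$. Your justification --- that deleting $w_1,w_2$ places $u$ and $v$ in singleton classes because ``their only route between them passed through the removed gadget interior and the $w_i$'s'' --- is false: every gadget in Figure~\ref{fig:gadgets} retains the spine $u$--$y$--$v$ (or $u$--$y_1$--$y_2$--$v$), so $Q'-S'$ \emph{always} contains a $(u,v)$-path, and Lemma~\ref{lem:fvs-transfer} then forces you to produce a $(u,v)$-path in $Q-S$, which only $(ii.2)$ guarantees. (You notice the spine yourself when discussing $(ii.2)$, which contradicts your singleton claim for the other subcases.) The paper's proof avoids this by taking $S'=(S\setminus V(Q))\cup\{u,w_1,w_2\}$ in $(ii.1)$ --- affordable only because $|V(Q)\cap S|\ge 3$ there --- and by a separate structural argument for $(ii.3)$ and $(ii.4)$: using the three neighbours of $w_1$ (resp.\ $w_2$) on $P$ it shows that either $|V(Q)\cap S|\ge 3$ or $S$ can be replaced by an equally small solution for which $Q-S$ contains a $(u,v)$-path, thereby reducing to $(ii.1)$ or $(ii.2)$. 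That reduction, together with the separate treatment of the deletion rules~\ref{r:del-w1} and~\ref{r:n(w2)=2} (where $w_1$ no longer exists in $G'$ and $k'=k-1$, so your $S'$ is not even a subset of $V(G')$), is the real content of the lemma and is missing from the proposal.
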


\begin{proof}
 We begin with $(i)$. We need to show that for each of the rules~\ref{r:del-w1}--\ref{r:gadget:4} the resulting instance $(G',k')$ is a no-instance. Assume the contrary, i.e., let $S'$ be a solution of $(G',k')$. 
 We will show that there is a solution $S$ of $(G,k)$, contradicting our assumption.
 For rules~\ref{r:del-w1} and~\ref{r:n(w2)=2} we see that $S=S'\cup\{w_1\}$ works.
 Now focus on the remaining rules.
 Let $Y=\{y\}$ for rules~\ref{r:gadget:u} and~\ref{r:gadget:uv}, and $Y=\{y_1,y_2\}$ for Rule~\ref{r:gadget:4}.
 Let $Q'=G'[N[Y]]$.
 It is easy to verify that for each of the three rules $|V(Q')\cap S| \ge 2$.
 If $|V(Q')\cap S| \ge 3$ then we see that $S = S'\setminus V(Q') \cup \{w_1,w_2,x_1\}$ works.
 Hence we are left with the case $|V(Q')\cap S| = 2$.
 
 Assume there is a $(u,v)$-path in $G'-S'$.
 Then we put $S = S'\setminus V(Q') \cup \{w_1,w_2\}$.
 Note that the equivalence classes of $R_{Q-S,\delta(Q)}$ are $\{u,v\}$, $\{w_1\}$ and $\{w_2\}$.
 Hence, by Lemma~\ref{lem:fvs-transfer} (for $A=G'$ and $B=G$) $S$ is a feedback vertex set of $G$, so $(G,k)$ is a yes-instance, a contradiction.
 
 Hence we can assume that there is no $(u,v)$ path in $G'-S'$.
 Note that this implies that $V(Q')\cap S'$ is equal to $\{u,y\}$ for Rule~\ref{r:gadget:u}, $\{u,y\}$ or $\{y,v\}$ for Rule~\ref{r:gadget:uv}, and $\{y_1,y_2\}$ or $\{y_1,v\}$ for Rule~\ref{r:gadget:4}.
 
 Consider Rule~\ref{r:gadget:u}. Since Rule~\ref{r:del-w1} does not apply, there is at least one feedback vertex set $S_Q\in\Ss$ of size at most two which satisfies~(\ref{cond:v-u,w1,w2}). Since $V(Q')\cap S'=\{u,y\}$, the equivalence classes of $R_{Q'-S',\delta(Q)}$ are $\{v\}$ and $\{u,w_1,w_2\}$, hence by Lemma~\ref{lem:fvs-transfer} (for $A=G'$ and $B=G$) $S=S'\setminus V(Q') \cup S_Q$ is a feedback vertex set $G$, so $(G,k)$ is a yes-instance.
 
 Now consider Rule~\ref{r:gadget:uv}. 
 Note that it cannot happen that $\Ss$ contains only sets that satisfy (1)-(4), or (6), because then Rule~\ref{r:gadget:u} applies to $Q$ with vertices renamed (swap the names of $u$ and $v$, $x_1$ and $x_3$).
 Since rules~\ref{r:del-w1} and~\ref{r:gadget:u} do not apply to $Q$, $\Ss$ contains both a feedback vertex set $S_Q^1\in\Ss$ which satisfies (5) and $S_Q^2\in\Ss$ which satisfies (6), and $|S_Q^{(5)}|,|S_Q^{(6)}|\le 2$.
 Since  $V(Q')\cap S'=\{u,y\}$ or $V(Q')\cap S'=\{y,v\}$ the equivalence classes of $R_{Q'-S',\delta(Q)}$ are $\{u\}$ and $\{v,w_1,w_2\}$, or $\{v\}$ and $\{u,w_1,w_2\}$, respectively. 
 In the former case we put $S=S'\setminus V(Q') \cup S_Q^{(6)}$ and in the latter one  $S=S'\setminus V(Q') \cup S_Q^{(5)}$.
 By Lemma~\ref{lem:fvs-transfer} (for $A=G'$ and $B=G$) $S$ is a feedback vertex set $G$, so $(G,k)$ is a yes-instance.
 
 Finally, consider Rule~\ref{r:gadget:4}.
 By Lemma~\ref{lem:completeness}, $\Ss$ contains both a set $|S_Q^{(5)}|$ satisfying (5) and a set $|S_Q^{(7)}|$ satisfying (7).
 Since  $V(Q')\cap S'=\{y_1,y_2\}$ or $V(Q')\cap S'=\{y_1,v\}$ the equivalence classes of $R_{Q'-S',\delta(Q)}$ are $\{u,w_2\}$ and $\{v,w_1\}$, or $\{v\}$ and $\{u,w_1,w_2\}$, respectively. 
 In the former case we put $S=S'\setminus V(Q') \cup S_Q^{(7)}$ and in the latter one  $S=S'\setminus V(Q') \cup S_Q^{(5)}$.
 By Lemma~\ref{lem:fvs-transfer} (for $A=G'$ and $B=G$) $S$ is a feedback vertex set $G$, so $(G,k)$ is a yes-instance.
 This ends the proof of $(i)$.
 
 We proceed to $(ii)$. We need to show that in each of the cases $(ii.1)$--$(ii.4)$, $(G',k')$ is a yes-instance.
 To this end we will show a feedback vertex set $S'$ of size at most $k'$ in $G'$.
 
 For $(ii.1)$ we pick $S'=(S\setminus V(Q)) \cup \{u,w_1,w_2\}$. 
 Then the equivalence classes of $R_{Q'-S',\delta(Q)}$ are all singletons, so by Lemma~\ref{lem:fvs-transfer} (for $A=G$, $B=G'$) $S'$ is a feedback vertex set of $G'$.
 
 For $(ii.2)$ we pick $S'=(S\setminus V(Q)) \cup \{w_1,w_2\}$. 
 Then the equivalence classes of $R_{Q'-S',\delta(Q)}$ are $\{u,v\}$, $\{w_1\}$ and $\{w_2\}$, so by Lemma~\ref{lem:fvs-transfer} (for $A=G$, $B=G'$) $S'$ is a feedback vertex set of size at most $k$ in $G'$.
 
 For $(ii.3)$ we consider two cases.
 In case of rules~\ref{r:del-w1} and~\ref{r:n(w2)=2} it is clear that $S'=S\setminus \{w_1\}$ works. 
 Hence we can assume that Rule~\ref{r:n(w2)=2} does not apply.
 Let $p,q,r$ be arbitrary three vertices of $N(w_2)\cap V(P)$, in the order of increasing distance from $u$ in the path $P$.
 We can assume that $|V(Q)\cap S|\le 2$, for otherwise we use $(ii.1)$.
 Then $r\in S$, for otherwise we need to include both $p$ and $q$ (at least) to hit all the cycles of $Q$.
 Hence $V(Q)\cap S = \{w_1,r\}$. 
 But then there is a path going from $u$ to $p$ along $P$, then via $w_2$ to $r$ and to $v$ along $P$. 
 Hence we apply $(ii.2)$.
 
 For $(ii.4)$, we can assume that $|V(Q)\cap S|\le 2$ and $w_1\not\in S$, for otherwise we apply $(ii.1)$ or $(ii.3)$.
 Since $|N(w_1)\cap V(P)|\ge 3$ we can pick three vertices $p,q,r$ of $N(w_1)\cap V(P)$, in the order of increasing distance from $u$ in the path $P$.
 Then $r\in S$, for otherwise we need to include both $p$ and $q$ (at least ) to hit all cycles of $Q$.
 Hence $V(Q)\cap S = \{w_2,r\}$. 
 But then there is a path going from $u$ to $p$ along $P$, then via $w_1$ to $r$ and to $v$ along $P$. 
 Hence we apply $(ii.2)$.
\end{proof}

\begin{lemma}
 Rules~\ref{r:del-w1}--\ref{r:gadget:4} are correct.
\end{lemma}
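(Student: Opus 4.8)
The plan is to prove correctness of each rule separately by establishing conditions (a)--(c) from the definition of a correct reduction rule. Condition (c), namely $k'\le k$, is immediate: rules~\ref{r:del-w1} and~\ref{r:n(w2)=2} decrease $k$ by one, while the gadget rules~\ref{r:gadget:u}--\ref{r:gadget:4} leave $k$ unchanged. Condition (a) is clear since each rule preserves planarity (the gadgets in Figure~\ref{fig:gadgets} are drawn planarly inside the region formerly occupied by $Q$). Thus the real content is condition (b), the equivalence of yes-instances, and this is where Lemma~\ref{lem:common} does the heavy lifting.

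\medskip

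The key observation is that Lemma~\ref{lem:common} already settles the equivalence \emph{whenever} either $(G,k)$ is a no-instance (case $(i)$) or there exists a solution $S$ with $|V(Q)\cap S|\ge 2$ satisfying one of the structural conditions $(ii.1)$--$(ii.4)$ (case $(ii)$). So I would organize the proof as follows. First, if $(G,k)$ is a no-instance, part $(i)$ of Lemma~\ref{lem:common} gives that $(G',k')$ is also a no-instance, and we are done. Hence we may assume $(G,k)$ is a yes-instance and fix a solution $S$ of size at most $k$. If $S$ can be chosen with $|V(Q)\cap S|\ge 2$ meeting one of $(ii.1)$--$(ii.4)$, then part $(ii)$ directly yields that $(G',k')$ is a yes-instance, completing that direction; combined with the reverse direction (which also follows from the no-instance argument in $(i)$ applied contrapositively, since a solution of $(G',k')$ yields one of $(G,k)$), equivalence holds.

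\medskip

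The remaining work -- and the main obstacle -- is the residual case: $(G,k)$ is a yes-instance but \emph{every} minimum solution $S$ has $|V(Q)\cap S|\le 1$, or has $|V(Q)\cap S|=2$ yet avoids all of $(ii.1)$--$(ii.4)$. Here I would invoke the completeness guarantee of Lemma~\ref{lem:completeness}: since one of the rules~\ref{r:del-w1}--\ref{r:gadget:4} is the one actually being applied, the hypothesis defining that particular rule tells us exactly which equivalence classes of $R_{Q-S_Q,\delta(Q)}$ can occur over $\Ss$. The strategy is to take the fixed solution $S$, restrict it to $Q$, and argue that $S\cap V(Q)$ (possibly after a local swap) must be one of the feedback vertex sets $S_Q\in\Ss$ falling under the defining condition of the rule; one then transfers $S$ to $G'$ by replacing $S\cap V(Q)$ with the corresponding two or three vertices of the gadget (e.g.\ $\{u,y\}$, $\{y,v\}$, $\{y_1,y_2\}$, or $\{y_1,v\}$) chosen so that the reachability relation $R_{Q'-S',\delta(Q)}$ matches $R_{Q-S_Q,\delta(Q)}$. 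With matching reachability on $\delta(Q)=\{u,v,w_1,w_2\}$, Lemma~\ref{lem:fvs-transfer} certifies that the transferred set is a feedback vertex set of $G'$ of the same size. The delicate point is verifying, rule by rule and case by case (mirroring the case analysis in the proof of Lemma~\ref{lem:completeness}), that the gadget indeed realizes each admissible equivalence-class pattern -- in particular for Rule~\ref{r:gadget:4}, where both a set satisfying $(5)$ and one satisfying $(7)$ are present, the two gadget placements $\{y_1,y_2\}$ and $\{y_1,v\}$ must reproduce the two patterns $\{u,w_2\},\{v,w_1\}$ and $\{u,w_1,w_2\},\{v\}$ respectively. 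Checking these reachability patterns against the explicit gadget edge sets is routine but must be done carefully for completeness in both directions.
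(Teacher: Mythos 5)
Your high-level skeleton --- dispose of the no-instance case and the backward implication via Lemma~\ref{lem:common}$(i)$, reduce the yes-instance case to Lemma~\ref{lem:common}$(ii)$ where possible, and in the residual case match $S\cap V(Q)$ against the rule's defining conditions and transfer to the gadget via Lemma~\ref{lem:fvs-transfer} --- is exactly how the paper handles rules~\ref{r:del-w1}, \ref{r:gadget:u}, \ref{r:gadget:uv} and~\ref{r:gadget:4}. But two substantive pieces are missing. First, Lemma~\ref{lem:common}$(ii)$ requires $|V(Q)\cap S|\ge 2$, so the case $|V(Q)\cap S|=1$ is genuinely outside its scope and cannot be absorbed into your pattern-matching step: a single vertex placed in a gadget never hits all of its digons. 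The paper deals with this case by an edge-counting argument on an explicit subgraph $R$ of $Q$ (the path $P$ plus three edges to each of $w_1,w_2$, giving $7$ vertices and $10$ edges of maximum degree $4$): when $|N(w_2)\cap V(P)|\ge 3$ --- which is forced whenever a gadget rule fires, because Rule~\ref{r:n(w2)=2} is tried first --- a one-vertex intersection leaves a ``forest'' with $6$ vertices and at least $6$ edges, a contradiction; and when $|N(w_2)\cap V(P)|\le 2$ a similar count ($9$ edges, every degree except possibly $\deg(w_1)$ at most three) forces $V(Q)\cap S=\{w_1\}$, which is exactly what licenses deleting $w_1$ and decrementing $k$.

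Second, your strategy fails outright for Rule~\ref{r:n(w2)=2}: its hypothesis is purely that $|N(w_2)\cap V(P)|\le 2$ and places no constraint whatsoever on $\Ss$, so the claim that ``the hypothesis defining that particular rule tells us exactly which equivalence classes can occur'' is simply not available there, and there is no gadget into which to transfer $S\cap V(Q)$. The paper needs a dedicated case analysis on the distance along $P$ between the two vertices of $N(w_2)\cap V(P)$, showing that for every size-two feedback vertex set of $Q$ avoiding conditions $(1)$--$(4)$ the vertex $w_2$ is reachable from $u$ or from $v$ in $Q-S$; this permits swapping $S\cap V(Q)$ for $\{w_1,z\}$ (with $z$ the neighbor of $w_2$ on $P$ closer to $v$) via Lemma~\ref{lem:fvs-transfer}, after which Lemma~\ref{lem:common}$(ii.3)$ applies. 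Without these two arguments the proof does not go through; with them in place, the remainder of your plan --- checking for each of $(5)$, $(6)$, $(7)$ that the designated gadget placement reproduces the same reachability classes on $\delta(Q)$ --- is indeed the routine part.
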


\begin{proof}
 By Lemma~\ref{lem:common}$(i)$ it suffices to prove the correctness when there is a feedback vertex set $S$ of size at most $k$.

 Since $|N(w_1)\cap V(P)|\ge 3$, we see that $Q$ contains cycles and hence $|V(Q)\cap S|> 0$, for otherwise $S$ is not a feedback vertex set.
 
 First we consider the case $|V(Q)\cap S|=1$.
 Assume $|N(w_2)\cap V(P)|\ge 3$.
 By Rule~\ref{r:deg2}, $Q$ has a subgraph $R$ consisting of path $P$, three edges, each joining a different vertex of $P$ with $w_1$, and another three edges, each joining a different vertex of $P$ with $w_2$.
 Note that $R$ has $7$ vertices, $10$ edges and maximum degree $4$.
 Then $R-S$ has $6$ vertices and at least $6$ edges, so $S$ is not a feedback vertex set, a contradiction.
 Assume $|N(w_2)\cap V(P)|\le 2$.
 Then Rule~\ref{r:del-w1} or Rule~\ref{r:n(w2)=2} applies.
 We will show that $V(Q)\cap S=\{w_1\}$; then clearly $S'=S\setminus\{w_1\}$ is a solution of $(G',k-1)$.
 Build a subgraph $R$ of $Q$ as follows.
 Start with $R=P$.
 For each $x\in N(w_2)\cap V(P)$, add a single edge $xw_2$ to $R$.
 Next, for each $x\in V(P)\setminus N(w_2)$, add a single edge $xw_1$ to $R$ (which exists since Rule~\ref{r:deg2} does not apply).
 Note that $R$ has $7$ vertices, $9$ edges and every vertex of $R$ apart from $w_1$ has degree at most three in $R$.
 It follows that $V(Q)\cap S=\{w_1\}$, for otherwise $R-S$ has $6$ vertices and at least $6$ edges.
 This finishes the proof of the $|V(Q)\cap S|=1$ case. 
 From now on we assume $|V(Q)\cap S|\ge 2$.
 
 By Lemma~\ref{lem:common} we can assume that
 $|V(Q)\cap S|= 2$, there is no $(u,v)$-path in $Q-S$ and $V(Q)\cap S \subseteq V(P)$.
 In particular $V(Q)\cap S$ satisfies none of (1)--(4).
 For Rule~\ref{r:del-w1} we are done.
 For the remaining rules we will show that $G'$ has a feedback vertex $S'$ set of size at most $k'$.

 Consider Rule~\ref{r:n(w2)=2}.
 If $|N(w_2)\cap V(P)|=0$, then by Rule~\ref{r:deg2}, $V(P)\subseteq N(w_1)$ and Rule~\ref{r:gamma} applies, a contradiction. 
 If $|N(w_2)\cap V(P)|=1$, then $S_1=(S\setminus V(Q))\cup\{w_1\}\cup(N(w_2)\cap V(P))$ is another feedback vertex set of size at most $k$ in $G$ and we can apply Lemma~\ref{lem:common}.
 Hence $|N(w_2)\cap V(P)| = 2$.
 We claim that in $Q-S$ vertex $w_2$ is reachable from $u$ or $v$.
 We consider cases depending on the distance $d$ in graph $P$ between the two vertices of $N(w_2)\cap V(P)$.
 
 If $d=1$, then $N(w_2)\cap V(P) = \{x_1,x_2\}$ or $N(w_2)\cap V(P) = \{x_2,x_3\}$, since in other cases Rule~\ref{r:deg2} or Rule~\ref{r:gamma} applies.
 By symmetry we can assume $N(w_2)\cap V(P) = \{x_1,x_2\}$ and $\{u,x_3,v\}\subseteq N(w_1)\cap V(P)$.
 It follows that there are at most four possible feedback vertex sets of $Q-S$, namely $\{x_1,x_3\}$, $\{x_2,x_3\}$, $\{x_1,v\}$, and $\{x_2,v\}$.
 The prior two cases are excluded because then $Q-S$ contains a $(u,v)$-path, and in the latter two cases $w_2$ is reachable from $u$ in $Q-S$ as required.
 
 If $d=2$, then $N(w_2)\cap V(P)$ equals $\{u,x_2\}$, $\{x_1,x_3\}$ or $\{x_2,v\}$, where the first and the last case are symmetric, so we skip the analysis of the first one.
 When $N(w_2)\cap V(P)=\{x_2,v\}$, then $\{u,x_1,x_3\}\subseteq N(w_1)\cap V(P)$ by Rule~\ref{r:deg2}.
 It follows that there are at most five possible feedback vertex sets of $Q-S$, namely $\{u,x_2\}$, $\{u,x_3\}$, $\{x_1,x_2\}$, $\{x_1,x_3\}$ and $\{x_1,v\}$. 
 In the first, second, third and fourth case $w_2$ is reachable from $v$ in $Q-S$.
 In the last case $w_2$ is reachable from $u$ in $Q-S$.
 When $N(w_2)\cap V(P)=\{x_1,x_3\}$, then $\{u,x_2,v\}\subseteq N(w_1)\cap V(P)$ by Rule~\ref{r:deg2}.
 It follows that there are at most six possible feedback vertex sets of $Q-S$, namely $\{u,x_2\}$, $\{u,x_3\}$, $\{x_1,x_2\}$, $\{x_1,x_3\}$, $\{x_1,v\}$ and $\{x_2,v\}$. 
 In the first and second case $w_2$ is reachable from $v$ in $Q-S$.
 In the third and fourth case there is a $(u,v)$-path.
 In the fifth and sixth case $w_2$ is reachable from $u$ in $Q-S$.
 
 If $d=3$, then $N(w_2)\cap V(P)$ equals $\{u,x_3\}$ or $\{x_1,v\}$. 
 By symmetry assume the former.
 Then $\{x_1,x_2,v\} \subseteq N(w_1)\cap V(P)$ by Rule~\ref{r:deg2}.
 It follows that there are at most six possible feedback vertex sets of $Q-S$, namely $\{u,x_2\}$, $\{x_1,x_2\}$, $\{x_1,x_3\}$, $\{x_1,v\}$, $\{x_2,x_3\}$ and $\{x_2,v\}$. 
 In the first case $w_2$ is reachable from $v$ in $Q-S$.
 In the second case there is a $(u,v)$-path.
 In the remaining cases $w_2$ is reachable from $u$ in $Q-S$.

 If $d=4$ then Rule~\ref{r:deg2} or Rule~\ref{r:gamma} applies, a contradiction. 
 
 We have thus shown that in $Q-S$, the vertex $w_2$ is reachable from $u$ or $v$. 
 By symmetry assume the former.
 Let $z$ be the vertex of $N(w_2)\cap V(P)$ which is closer to $v$ on $P$.
 Note that neither $w_2$ nor $u$ is reachable from $v$ in $Q-\{w_1,z\}$.
 Hence, the equivalence classes of $R_{Q-\{w_1,z\},\{w_1,w_2,u,v\}}$ are $\{v\}$, $\{w_1\}$ and a partition of $\{u,w_2\}$.
 Then, by Lemma~\ref{lem:fvs-transfer} (applied to $A=B=G$), $S_1=S\setminus V(Q)\cup\{w_1,z\}$ is another feedback vertex set  of size at most $k$ in $G$ and we can apply Lemma~\ref{lem:common}.
  This finishes the proof of correctness of Rule~\ref{r:n(w2)=2}.

 Now consider Rule~\ref{r:gadget:u}.
 Since $S\cap V(Q)$ satisfies none of (1)--(4), we infer that (5) applies to $S\cap V(Q)$.
 Then we pick $S'=S\setminus V(Q)\cup \{y,v\}$.
 Note that the equivalence classes of both $R_{Q'-S',\delta(Q)}$ and $R_{Q-S,\delta(Q)}$ are $\{v\}$, $\{u,w_1,w_2\}$.
 Then, by Lemma~\ref{lem:fvs-transfer} $S'$ is a feedback vertex set in $G'$.
 
 Now consider Rule~\ref{r:gadget:uv}.
 Since $S\cap V(Q)$ satisfies none of (1)--(4), we infer that (5) or (6) applies to $S\cap V(Q)$.
 Then we pick $S'=S\setminus V(Q)\cup \{y,v\}$ or $S'=S\setminus V(Q)\cup \{u,y\}$, respectively.
 Note that the equivalence classes of both $R_{Q'-S',\delta(Q)}$ and $R_{Q-S,\delta(Q)}$ are either $\{v\}$, $\{u,w_1,w_2\}$ or $\{u\}$, $\{v,w_1,w_2\}$.
 Then, by Lemma~\ref{lem:fvs-transfer} $S'$ is a feedback vertex set in $G'$.
 
 Finally consider Rule~\ref{r:gadget:4}.
 Since $S\cap V(Q)$ satisfies none of (1)--(4), we infer that (5) or (7) applies to $S\cap V(Q)$.
 Then we pick $S'=S\setminus V(Q)\cup \{y_1,v\}$ or $S'=S\setminus V(Q)\cup \{y_1,y_2\}$, respectively.
 Note that the equivalence classes of both $R_{Q'-S',\delta(Q)}$ and $R_{Q-S,\delta(Q)}$ are either $\{v\}$, $\{u,w_1,w_2\}$ or $\{u,w_2\}$, $\{v,w_1\}$.
 Then, by Lemma~\ref{lem:fvs-transfer}, $S'$ is a feedback vertex set in $G'$.
\end{proof}

This finishes the proof of Lemma~\ref{lem:indpath5}.

\section{Running time} 
\label{sec:time}

It is easy to see that each of our reduction rules can be detected and performed in $O(n)$ time in such a way that loops and triple edges are not introduced.
Since every rule except for Rule~\ref{r:loop} and Rule~\ref{r:triple} decreases the number of vertices, the total time needed for detecting and performing them is $O(n^2)$.  
In what follows we will show that it can be improved to $O(n)$ expected time. 

We assume that the graph is stored using adjacency lists. Additionally, we use four data structures:

\begin{itemize}
 \item A dictionary $D_1$ implemented as a hash table storing all pairs of adjacent vertices. 
 The hash table stores the corresponding two adjacency list elements for each such pair.
 \item A dictionary $D_2$ implemented as a hash table storing all pairs of vertices $(x,y)$ for which
 the set 
 \[S_{x,y} = \{z\in V\ :\ x,y\in N(z) \text{ and there is at most one edge $zu$ such that $u\not\in\{x,y\}$}\}\]
 is nonempty. The hash table stores the set $S_{x,y}$ for each such pair.
 \item A queue $Q_{3+}$ storing all pairs $(x,y)$ such that $|S_{x,y}|\ge 3$.
 \item A queue $Q_s$ storing vertices with at most four neighbors (not necessarily all). 
\end{itemize}

Once we have the dictionary $D_1$ answering adjacency queries in constant expected time, it is easy to detect and apply Rule~\ref{r:loop} and Rule~\ref{r:triple} immediately after a graph modification. The total expected time needed for that is bounded by total time of graph modifications.
Hence, in what follows we exclude from our considerations Rule~\ref{r:loop} and Rule~\ref{r:triple} and we can assume that application of every rule decreases the number of vertices.

The major challenge in implementing the kernelization algorithm efficiently is {\em detecting} that a rule applies.
In other words we have to find the particular subgraphs described in the rules, which we call {\em configurations}, efficiently.
If a configuration $C$ appears in $G$ then there is an injective homomorphism $h_C:V(C)\rightarrow V(G)$.
Vertices of every configuration are partitioned into two categories: black and white vertices, defined in the figures.
Let $B_C$ and $W_C$ denote the corresponding sets of vertices of $C$.

\begin{lemma}
\label{lemma:detect}
 Assume $Q_{3+}$ is empty.
 Then for every vertex $v$ of $G$ with at most four neighbors one can check in $O(1)$ time whether there is a configuration $C$ in $G$ such that $h_C^{-1}(v)$ is a black vertex.
\end{lemma}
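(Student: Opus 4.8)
The plan is to exploit the black/white dichotomy: in every configuration $C$ of our rules other than that of Rule~\ref{r:3deg3}, each black vertex has at most four neighbors and \emph{all} of its incident edges lie inside $C$, whereas white vertices may have arbitrarily large degree but each one that matters is adjacent to a black vertex of $C$. Since $Q_{3+}$ is empty there is no pair $(x,y)$ with $|S_{x,y}|\ge 3$, so the configuration of Rule~\ref{r:3deg3} (whose vertices $a,b,c$ would all lie in a common $S_{x,y}$) cannot occur and may be ignored. Hence it suffices to decide, for each of the constantly many configurations $C$ and each of the constantly many black roles $b\in B_C$, whether there is an injective homomorphism $h_C$ with $h_C(b)=v$, answering ``yes'' as soon as one is found.

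The governing principle is that I would never enumerate the adjacency list of a vertex matched to a white role, as it may be long; instead I rely on two $O(1)$-expected-time primitives, namely adjacency-and-multiplicity queries through $D_1$ and retrieval of the whole set $S_{x,y}$ (of size at most two) through $D_2$. Starting from $v$, which has at most four neighbors, I would reconstruct the connected component of $b$ in the black subgraph of $C$ by following black-black edges: since every black vertex has bounded degree, this is a bounded search in $G$ in which, at each discovered black vertex, I try the $O(1)$ ways of matching its (at most four) neighbors to the roles prescribed by $C$. The white neighbors of these black vertices—in particular the hub vertices $w_1,w_2$—are found as their neighbors, and all required edges, multiplicities, non-edges and degree constraints (including white--white edges such as $vw,vx$ in Rule~\ref{r:gamma}) are verified with $D_1$.

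The one point that does not reduce to local black-black exploration, and hence the main obstacle, is that in the configurations of Rule~\ref{r:digon:1+2:1} and of Lemma~\ref{lem:digon:1+2:2} the black vertices split into two parts joined only through the high-degree white pair $\{w_1,w_2\}$: for instance $u_1$ in Rule~\ref{r:digon:1+2:1} shares no black-black edge with $v_1,v_2$. Here I would cross the hub using $D_2$. Every black vertex adjacent to both $w_1$ and $w_2$ and having at most one further incident edge belongs to $S_{w_1,w_2}$ by definition, and both $u_1$ and $v_1$ have exactly this property. Having already identified a candidate hub $\{w_1,w_2\}$ as a pair of white neighbors of the component of $v$ (there are only $O(1)$ such pairs), I look up $S_{w_1,w_2}$ in $D_2$; since $|S_{w_1,w_2}|\le 2$, this recovers in $O(1)$ expected time all black vertices on the far side of the hub, and from each of them the remaining black-black exploration of its part is again local.

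Correctness follows because any configuration $C$ containing $v$ as a black vertex has all of its vertices recovered: every black vertex is reached either through black-black edges from $v$ or, across a hub shared with the black component of $v$, through an $S_{w_1,w_2}$ lookup, while every relevant white vertex is a neighbor of some reconstructed black vertex and its incidences are tested through $D_1$. Since there are $O(1)$ configurations, $O(1)$ black roles, and $O(1)$ neighbor-matchings and hub-pairs encountered along the way, and each $D_1$ or $D_2$ operation costs $O(1)$ expected time, the whole test runs in $O(1)$ expected time.
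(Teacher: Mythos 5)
Your proposal is correct and follows essentially the same route as the paper's proof: white vertices are recovered as neighbors of black ones, configurations with a connected black subgraph are found by bounded local search from $v$, Rule~\ref{r:3deg3} is excluded because $Q_{3+}$ is empty, and the two configurations whose black vertices split into two components (Rule~\ref{r:digon:1+2:1} and Lemma~\ref{lem:digon:1+2:2}) are completed by looking up the far-side black vertex in $S_{w_1,w_2}$ via $D_2$, which has $O(1)$ candidates since $Q_{3+}=\emptyset$. The only difference is presentational: the paper phrases this as a bound on the number of candidate homomorphisms, while you describe the search procedure explicitly.
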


\begin{proof}
 Since the number of configurations used in the algorithm is bounded, we can consider a fixed configuration $C$.
 We claim that there are only $O(1)$ candidate homomorphisms to check.
 Note that in each configuration, every white vertex has a black neighbor.
 It follows that it suffices to show that there are $O(1)$ mappings of all the black vertices of $C$ to $V(G)$ that can extend to a homomorphism $h_C$, since the configurations are bounded and black vertices have bounded degree.
 This claim is immediate for configurations where black vertices induce a connected subgraph.
 The remaining configurations are those from Rule~\ref{r:3deg3}, Rule~\ref{r:digon:1+2:1}, and Lemma~\ref{lem:digon:1+2:2}.
 Since $Q_{3+}$ is empty, Rule~\ref{r:3deg3} does not apply.
 In the configurations from the remaining two rules black vertices induce exactly two connected components.
 Moreover, for both these configurations, once we fix a mapping $h$ of black vertices in the connected component $Q$ of $C[B_C]$ such that $h^{-1}(v)\in V(Q)$, then there are vertices $a\in V(Q)$ and $b\in B_C\setminus V(Q)$ such that for some pair of white vertices $x,y$ we have $x,y\in N_C(a)$ and $x,y\in N_C(b)$.
 Note also that in both cases there is at most one edge $h_C(b)u$ such that $u\not\in\{h_C(x),h_C(y)\}$.
 But it means that the homomorphic image of $b$ must belong to $S_{h_C(x),h_C(y)}$ and since $Q_{3+}=\emptyset$, there are only $O(1)$ candidates for it.
 Then there are only $O(1)$ candidates for other black vertices in the connected component of $C[B_C]$ containing $b$.
 The claim follows.
\end{proof}

Now we can describe our algorithm.

\begin{algorithm}[H]
\label{alg:kernel}
\caption{\textsc{Kernelize}$(G,k)$}
  Initialize $D_1$, $D_2$, $Q_{3+}$ according to their definitions\;
  Initialize $Q_s$ with all vertices of $G$ with at most four neighbors\;
  \While{$Q_{3+}\cup Q_s\ne \emptyset$}{
    \eIf{$Q_{3+}\ne \emptyset$}{%
      Remove an element from $Q_{3+}$ and apply Rule~\ref{r:3deg3}\;%
    }{%
      Remove a vertex from $Q_{s}$ and apply Lemma~\ref{lemma:detect}\;%
      If a configuration is found, apply the corresponding rule\;
      }
    }
\end{algorithm}

The correctness of Algorithm~\ref{alg:kernel} follows from the following invariants.

\smallskip

\noindent
{\bf Invariant 1} The information stored in $D_1$, $D_2$, $Q_{3+}$ is up to date.
\smallskip

\noindent
{\bf Invariant 2} If a configuration $C$ appears in $G$, then there is a vertex $z\in Q_s$ such that for some black vertex $v\in V(C)$ we have $h_C(v)=z$.
\smallskip

Clearly, both invariants hold before before the while loop (Invariant 2 holds since all the black vertices have at most four neighbors).
Moreover, after every modification of $G$ resulting in an application of a rule we update the data structures so that both invariants hold, as follows.

\begin{itemize}
 \item Whenever an adjacency list changes, we update $D_1$.
 \item If the set of incident edges of a vertex $v$ with at most four neighbors changes, then we add $v$ to $Q_s$, and for every pair $(x,y)$ of its neighbors we add the pair to $D_2$ if needed; if $|S_{x,y}|$ grows to three, we add the pair to $Q_{3+}$.
 \item If an edge $xy$ is added to $G$, we add all elements of $S_{x,y}$ to $Q_s$. 
       Note that then $Q_{3+} = \emptyset$, since in Rule~\ref{r:3deg3} no edges are added.
       Hence $|S_{x,y}|\le 3$.
 \end{itemize}

It is easy to check that the updates described above guarantee that both invariants are satisfied (note that the last item above is needed only to guarantee Invariant 1 for configurations of Rule~\ref{r:gamma}, since this is the only configuration where two white vertices must be adjacent).
We are left with the time complexity analysis.
Observe that the expected time of the algorithm is bounded by a function which is linear in the total number of insertions to $Q_{3+}$ and $Q_s$.
(A lookup, insert or delete opertion in a hash table works in $O(1)$ expected time and this is the only source of randomness in the running time.)
The number of insertions to $Q_{3+}$ and $Q_s$ is linear in the size of the input graph added to the number of applications of the rules.
Since each application decreases the number of vertices, there are at most $n$ of them. 
Hence the total expected time is bounded by $O(n)$.
Note that we can turn it to $O(n\log n)$ deterministic time by replacing hash tables by balanced binary search trees.

\section{Concluding remarks and further research}
\label{sec:conclude}

We have shown a kernel of $13k$ vertices for \probPlFVS. 
Our main contribution was applying the region decomposition technique in a new way.
It would be interesting to see more applications of the region decomposition technique in problems in which it was not used before.

An obvious open problem is improving the kernel size even further. 
In particular, it would be nice to break the psychological barrier or {\em single digit kernel}, i.e., to get a $9k$-kernel.
We suppose that if this is possible, it would require finding a number of new, very specialized reduction rules.

\bibliographystyle{abbrv}
\bibliography{fvs}

\end{document}